\newcommand{\ket}[1]{|#1\rangle}
\newcommand{\bra}[1]{\langle#1|}
\newcommand{\ketbra}[2]{\rangle#1|#2\langle}
\let\oldnl\nl% Store \nl in \oldnl
\newcommand{\nonl}{\renewcommand{\nl}{\let\nl\oldnl}}% Remove line number for one line
\newtheoremstyle{mystyle}%
{3pt}% Space above
{3pt}% Space below 
{\upshape}% Body font
{}% Indent amount
{\bfseries}% Theorem head font
{.}% Punctuation after theorem head
{.5em}% Space after theorem head
{}% Theorem head spec (can be left empty, meaning â€˜normalâ€™)
\crefname{thm}{Theorem}{Theorems}
\Crefname{thm}{Theorem}{Theorems}
\newtheorem{thm}{Theorem}
\newtheorem{propo}{Proposition}
\newtheorem*{thm*}{Theorem}
\newtheorem{lemma}[thm]{Lemma}
\newtheorem{corll}[thm]{Corollary}
\theoremstyle{remark}
\theoremstyle{definition}
\newtheorem{remark}{Remark}
\newtheorem{defn}{Definition}
\theoremstyle{mystyle}
\renewcommand{\exp}{\mathsf{exp}}
\renewcommand{\log}{{\llog}}
\renewcommand{\Tr}{\mathsf{Tr}}
\renewcommand{\Pr}{\mathsf{Pr}}
\newclass{\PQP}{PQP}
\newclass{\UQMA}{UQMA}
\newclass{\UMA}{UMA}
\newclass{\UQCMA}{UQCMA}
\newclass{\UNP}{UNP}
\newclass{\GQMA}{GQMA}
\newclass{\GQCMA}{GQCMA}
\newclass{\PGQMA}{PGQMA}
\newclass{\EGQMA}{EGQMA}
\newclass{\PGQCMA}{PGQCMA}
\newclass{\EGQCMA}{EGQCMA}
\newclass{\pQMA}{PreciseQMA}
\newclass{\pBQP}{PreciseBQP}
\newclass{\pQCMA}{PreciseQCMA}
\newclass{\postBQP}{postBQP}
\newclass{\postQMA}{postQMA}
\newclass{\postQCMA}{postQCMA}
\newclass{\pUQMA}{PreciseUQMA}
\newclass{\pGQMA}{PreciseGQMA}
\newclass{\pPGQMA}{PrecisePGQMA}
\newclass{\pPGQCMA}{PrecisePGQCMA}
\newclass{\pEGQMA}{PreciseEGQMA}
\newclass{\pEGQCMA}{PreciseEGQCMA}
\newclass{\pAPGQMA}{PreciseAPGQMA}
\newclass{\pAEGQMA}{PreciseAEGQMA}
\newclass{\SampBPP}{SampBPP}
\newclass{\X}{X}
\newclass{\Y}{Y}
\newclass{\Z}{Z}
\newclass{\sP}{\#P}
\DeclareMathOperator*{\argmin}{arg\,min}
\begin{document}

\newgeometry{margin=0.5in}

\title{Effect of non--unital noise on random circuit sampling}

\author[1]{\normalsize{Bill~Fefferman}\thanks{wjf@uchicago.edu}}%\thanks{abhinavd@caltech.edu} } 
\author[1]{Soumik~Ghosh\thanks{soumikghosh@uchicago.edu}} 
\author[2]{Michael~Gullans\thanks{mgullans@umd.edu}}
\author[3,4]{Kohdai~Kuroiwa\thanks{kkuroiwa@uwaterloo.ca}}
\author[5]{\normalsize{Kunal~Sharma}\thanks{kunals@ibm.com}} 

\affil[1]{Department of Computer Science, University of Chicago, Chicago, Illinois 60637, USA}
\affil[2]{\normalsize{Joint Center for Quantum Information and Computer Science and Joint Quantum Institute, University of Maryland \& NIST, College Park, Maryland 20742, USA}}
\affil[3]{\normalsize{Institute for Quantum Computing, University of Waterloo, Ontario N2L 3G1, Canada}}
\affil[4]{\normalsize{Perimeter Institute for Theoretical Physics, Ontario, Canada, N2L 2Y5}}
\affil[5]{\normalsize{IBM Thomas J. Watson Research Center, Yorktown Heights, New York 10598, USA}}
\date{\vspace{-2em}}

\maketitle

\begin{abstract}
In this work, drawing inspiration from the type of noise present in real hardware, we study the output distribution of random quantum circuits under practical non--unital noise sources with constant noise rates. We show that even in the presence of unital sources like the depolarizing channel, the distribution, under the combined noise channel, never resembles a maximally entropic distribution at any depth. To show this, we prove that the output distribution of such circuits never anticoncentrates --- meaning it is never too "flat" --- regardless of the depth of the circuit. This is in stark contrast to the behavior of noiseless random quantum circuits or those with only unital noise, both of which anticoncentrate at sufficiently large depths. As consequences, our results have interesting algorithmic implications on both the hardness and easiness of noisy random circuit sampling, since anticoncentration is a critical property exploited by both state-of-the-art classical hardness and easiness results.
\end{abstract}

\restoregeometry

\newpage
\tableofcontents
\newpage
\section{Introduction}
The defining feature of quantum systems today is noise \cite{Preskill_2018}. A fundamental question in this era of noisy intermediate-scale quantum computers (NISQ) is whether noise renders any demonstration of quantum advantage with these systems useless, or whether some advantage is still salvageable for specific tasks \cite{franca, Bouland_2022}. To study questions related to quantum advantage, a popular paradigm is the random quantum circuit model; for instance, see the works of \cite{Bouland_2018, movassagh_efficient_2018, Bouland_2022, Deshpande_2022}. This is because for large system sizes, sampling from the output distribution of these circuits is a task that is easy for quantum computers but provably
hard for classical computers, under plausible complexity theoretic assumptions \cite{aaronsonarkhipov, Bouland_2018}. Amongst other phenomena, these circuits can model quantum chaos \cite{Nahum_2018}, quantum pseudorandomness \cite{pseudorandomness}, and the ansatz for certain types of variational quantum algorithms used in optimization tasks \cite{kandala2017hardware, Boixo2018}. Understanding the behavior of these circuits under physically motivated noise models and limited system sizes is crucial to our understanding of quantum advantage. 

One central feature of the output distribution, found in random quantum circuits of sufficiently high depth, is anticoncentration: it is a "flatness property" or, more formally, it means that the distribution is not concentrated on a sufficiently small number of outcomes. Anticoncentration is believed to be a key ingredient in both easiness and hardness proofs of random circuit sampling; see, for example, \cite{Boixo2018, Harrow_2023, Deshpande_2022}. Importantly, anticoncentration is necessary for the final state of the system to have an output distribution that mimics the uniform distribution. If the system were to have that property, then the system would be simulable, because sampling from the uniform distribution is classically easy. This naturally prompts the following question:

 \begin{center}
\emph{Do random quantum circuits, under the influence of physically motivated noise models, anticoncentrate?}
 \end{center}

\subsection{Our contributions}
\label{contributions}
In this work, we answer this question in the negative: we show how random quantum circuits, under noise models inspired by real hardware, which is a mixture of both unital and non-unital noise of certain types, exhibit a lack of anticoncentration, regardless of the depth of the circuit. This shows that the output distribution does not resemble the uniform distribution, or close variants of the same. Interestingly, we find that this behaviour is exhibited \emph{even when} there are also unital noise sources, like the depolarizing noise, present in the system, whose property is to push the system towards the maximally mixed state --- which, when measured in the standard basis, gives the uniform distribution. In this sense, these non-unital sources "dominate" the other sources. 

We leave open whether there is a quantum signal in these distributions and sampling from them is indeed classically hard, or whether the system tends towards a more sophisticated classically simulable final state. Much is unknown about random circuit sampling under such realistic noise models, and our work provides one of the first rigorous theoretical analyses of this regime. 

\subsubsection{Our conventions}
Our circuit model is that of geometrically local random quantum circuits, with the output measured in the standard basis, as illustrated in \cref{fig:setup2}. Each single-qubit noise channel acts independently after each gate. 

We prove lack of anticoncentration with respect to either of the two popular definitions: a strong definition of anticoncentration with respect to the convergence of scaled collision probabilities, and a weak definition of anticoncentration with respect to high probability mass of typical probabilities. The definitions and connections between them are made explicit in \cref{section: anticoncentration_def}.
\begin{remark}
Note that the terms "strong" and "weak" definition are relics from studying these definitions with respect to noiseless random circuits or random circuits with the depolarizing noise, where the strong definition implies the weak definition. However, this is not the case for the non--unital noise channels we consider, as we elaborate in \cref{connection}. Hence, our proofs of lack of anticoncentration, with respect to these two definitions, are independent. Nonetheless, we stick with the existing nomenclature to refer to these definitions succinctly in different parts of the paper. 
\end{remark}
\subsubsection{Our noise model}
We first show our results for the case when the noise source is modelled as having a mixture of depolarizing noise and amplitude damping noise: these two are emblematic of  unital and non--unital noise, respectively. The model is discussed in detail in \cref{section: noise_modeling}. 

After our initial results, we discuss how the same techniques can be used in extending our results to a generic noise channel, in \cref{section: general noise1}.

\subsubsection{Our results}
\label{our results}
Broadly, we prove three categories of results.

\begin{itemize}
\item First, in \cref{lack_anticoncentration}, we show how the scaled collision probabilities for our noisy ensembles, where noise is modelled as a mixture of amplitude damping and depolarizing noise, diverge: this means anticoncentration fails with respect to the strong definition. This has a clean proof, which involves "removing" the last layer of noise by using the adjoint of the noise channel, and then using properties of the local Haar measure, like translational invariance and explicit formulae for second moments \cite{Collins_2006}, to prove a lower bound. 

\item Then, we show how typical output probabilities for strings which have high Hamming weights are small. This is done in \cref{sec1_typical}, \cref{section: low depth}, and \cref{sec:high depth}. This shows anticoncentration fails with respect to the weak definition. This has a more complicated proof involving lightcone arguments and the stat--mech model. 

\item Finally, we discuss how extensions of our proofs hold for a wide variety of generic noise models: this is done in \cref{section: general noise1} and \cref{measurement}. In particular, we prove how lack of anticoncentration, with respect to the strong definition, is exhibited whenever the noise channel, acting on the identity operator, puts non--zero constant weight on the Pauli-$\mathsf{Z}$ operator.
\end{itemize}

\subsection{Our techniques}
There are three main classes of techniques that we utilize in our proofs. For many of our calculations, like those involving putting bounds on the expected collision probability or those involving computing the first moment of output probabilities, we first "remove" the last layer of noise, compute relevant quantities for the modified circuit, and then generalize our calculations to the actual circuit.

This is usually done by considering the adjoint map of the last noise layer. For our calculations about typical probabilities, we use lightcone arguments and the stat mech model.

\subsubsection{Removing the last layer of noise}

The technique of removing the last layer of noise by considering the adjoint of the noise channel makes calculations convenient because if our circuit terminates with a last layer of single qubit Haar random gates, instead of a last layer of noise, then we can then use many properties of the Haar measure directly, like translational invariance or explicit expressions of higher moments. Additionally, this makes many of our results extremely general, especially the ones involving lack of anticoncentration by proving divergence of scaled collision probabilities, because this technique has \emph{no} dependence on the underlying architecture, unlike similar divergence results, for noiseless and unital noise models, in \cite{dalzell}, which involve the stat-mech model, and hence, is only known to be applicable for certain specific architectures. Moreover, because of this proof technique, our proof of divergence holds for \emph{any} circuit depth, as long as there is a last layer of noise. 

To be more formal, suppose that we have an ensemble $\mathcal{B}$ of noisy random quantum circuits with noise channel $\mathcal{N}$, and pick a quantum circuit $\mathcal{C} \in \mathcal{B}$. In our analysis, we usually ``remove'' the last layer of noise, and deal with the adjoint of the noise. 
More specifically, let $\mathcal{C}'$ be the quantum circuit obtained by removing the last layer of noise; that is, 
\begin{equation}
    \mathcal{C} = \mathcal{N}^{\otimes n} \circ \mathcal{C}'. 
\end{equation}
Let $\mathcal{B}'$ be the set of quantum channels obtained by removing the last layer of noise from the circuits in $\mathcal{B}$.
The expected probability of getting the result $x \in \{0,1\}^n$ is 
\begin{equation}
    \underset{\mathcal{B}}{\mathbb{E}} \left[\Tr\left(\ketbra{x}{x}~ \mathcal{C}(\ketbra{0}{0})\right) \right]. 
\end{equation}
By the definition of the adjoint map, we have 
\begin{equation}
    \underset{\mathcal{B}}{\mathbb{E}} \left[\Tr\left(\ketbra{x}{x}~ \mathcal{C}(\ketbra{0}{0})\right) \right] 
    = 
    \underset{\mathcal{B}'}{\mathbb{E}} \left[\Tr\left((\mathcal{N}^{\dagger})^{\otimes n}(\ketbra{x}{x})~ \mathcal{C}'(\ketbra{0}{0})\right) \right]. 
\end{equation}
By the linearity of $\Tr$ and $\mathbb{E}$, 
\begin{equation}
    \underset{\mathcal{B}'}{\mathbb{E}} \left[\Tr\left((\mathcal{N}^{\dagger})^{\otimes n}(\ketbra{x}{x})~ \mathcal{C}'(\ketbra{0}{0})\right) \right]
    = \Tr\left((\mathcal{N}^{\dagger})^{\otimes n}(\ketbra{x}{x})~ \underset{\mathcal{B}'}{\mathbb{E}} \left[\mathcal{C}'(\ketbra{0}{0})\right]\right). 
\end{equation}
Hence, to analyze this expected probability, we may evaluate 
\begin{equation}
(\mathcal{N}^{\dagger})^{\otimes n}(\ketbra{x}{x})~~\text{and}~~\underset{\mathcal{B}'}{\mathbb{E}} \left[\mathcal{C}'(\ketbra{0}{0})\right]
\end{equation}
separately. 
$(\mathcal{N}^{\dagger})^{\otimes n}(\ketbra{x}{x})$ can usually be evaluated straightforwardly when the description of noise $\mathcal{N}$ is given. Note that $\mathcal{C}'$ terminates with a last layer of two--qubit Haar random gates.

\begin{remark}
Just like in the computation of first moment quantities, the trick of "removing" the last layer of noise by taking its adjoint is useful even in bounding certain second moment quantities, like the collision probability, as we detail in \cref{lack_anticoncentration}.
\end{remark}
\newpage
\subsubsection{Lightcone arguments}

The second class of techniques that we utilize to study low depth circuits are variants of lightcone type arguments. These techniques are popular in the study of low depth random circuits in different settings; see for example \cite{napp, Deshpande_2022}. 

The qualitative intuition behind lightcone arguments is that by looking at the size of the lightcone for each  qubit marginal at low depth, one could argue that the circuit doesn't "scramble" the distribution too well for sufficiently many strings to have high probability mass. Then, by studying particular noise channels, one could argue that specific noises do not assist in the "scrambling" either. Additionally, because of small lightcone sizes, instead of directly looking at the random variable $p_x$ --- the output probability of a string $x \in \{0, 1\}^n$ --- it suffices to look at the random variable
\begin{equation}
- \frac{1}{n} \log p_x,
\end{equation}
and prove its concentration around the mean by Markov's inequality; stronger second moment bounds are not needed. Let us emphasize that we succeed in developing techniques that are more general than variants that came before, which may be of independent interest. 
Previously, the analysis and application of these methods only extended to the noiseless or the unital case, as discussed in papers like \cite{dalzell, Deshpande_2022}. 

To use lightcone arguments to show how certain output strings, with high Hamming weight, have very low probability mass at low depth, the key ingredient is to show that the lower bound of the total variation distance between the noisy distribution and the noiseless distribution is exponentially decaying in the depth of the circuit. In \cref{lowdepth}, we prove how this holds true whenever the noise channel $\mathcal{N}$ satisfies 
\begin{equation}
\left\langle\ketbra{0}{0},\mathcal{N}^d(\ketbra{0}{0})\right\rangle = \kappa + \tau\lambda^d,
\end{equation}
where $d$ is the depth of the circuit, with some $\tfrac{1}{2} \leq \kappa \leq 1$, $0 \leq \tau,\lambda\leq 1$. More specifically, this statement is true when the noise channel under consideration is a mixture of amplitude damping and depolarizing noise. 
\begin{remark}
This analysis extends that in \cite{Deshpande_2022}, where a similar technique was used for the noiseless case and the case with only Pauli noise.
\end{remark}

\subsubsection{Mapping to classical partition functions}
The third type of techniques we utilize, to study sufficiently deep circuits, are mappings to classical partition functions. These techniques were developed to study the output distribution of random quantum circuits in different settings; see for example \cite{dalzell, dalzell2021random}. 

Since lightcone sizes blow up for superlogarithmic depths, lightcone arguments are no longer tight, and we need stronger second--moment inequalities to study the output distribution. To use these inequalities, we explicitly upper bound the second moment of the distribution. This is done using mappings to classical partition functions.

To show that the same strings have low probability mass at sufficiently high depths by applying second--moment inequalities, we need to upper bound the second moment of their output probabilities using the stat--mech model. 
\newpage
To do this, one standard way is to iteratively replace each two--qubit random gate in the circuit with two copies of a single qubit random gate, show, using the stat--mech model, that the collision probability of this modified circuit upper bounds the collision probability of the actual circuit, and then directly upper bound the collision probability of the modified circuit. The steps are explained in detail in \cref{sec:high depth}, where we also show how all the steps go through whenever the noise channel satisfies 
\begin{align}
    \tilde{M}_{U_1, \mathsf{N}}(I_4) &= (1 - a)I_4 + 2a S, \\ 
    \tilde{M}_{U_1, \mathsf{N}}(S) &= bI_4 + (1 - 2b) S,
\end{align}
with $a > 0$, and $b > 0$, where $I_4$ is the 2-qubit identity operator and $S$ is the $2$-qubit SWAP gate,
\begin{equation}
M_{U_1}[\rho] = \underset{U_1 \sim \mathcal{U}_{\textup{Haar}}}{\mathbb{E}}\bigg[U_1^{\otimes 2} \rho U_1^{\dagger \otimes 2} \bigg],
\end{equation}
$\mathsf{N} = \mathcal{N} \otimes \mathcal{N}$, and the operator 
\begin{equation}
    \tilde{M}_{U_1, \mathsf{N}} = M_{U_1} \circ \mathsf{N} \circ M_{U_1}.
\end{equation}
\begin{remark}
Our analysis is inspired by the techniques in \cite{dalzell, Deshpande_2022}, whose analysis only covered the noiseless and the unital cases. 
Their analysis does not work for general noise models, as it is non--trivial to prove that when we iteratively replace each two-qubit gate with two copies of a single qubit gate in a noisy circuit, the stat mech representation is still a valid one --- that is, that there are no negative path weights for $I$ and $S$ --- and the original collision probability is upper bounded by that of the modified circuits. We prove that this is indeed the case in \cref{lem:modified-circuit}.
\end{remark}

\subsection{Putting our results in context}
Depending on how strong the noise is, we can divide random circuit sampling ($\mathsf{RCS}$) into two different complexity theoretic regimes. 

\subsubsection{High noise regime}
The first regime is that of \emph{high noise}, when the noise rate is a constant that is independent of the system size, even when the number of qubits grows asymptotically. Near term devices are susceptible to constant noise rates \cite{bravyi2022future}. It is an equally reasonable model for scaled--up fault--tolerant systems, because to achieve fault--tolerance, suppressing the noise below a certain constant threshold suffices --- one does not need noise to go down with system size \cite{aharonov2, Knill1998, Kitaev2003}. 

\subsubsection{Modeling the high noise regime}
If we model the noise as only depolarizing noise in the high noise regime, then after sufficient depth, random circuit sampling becomes an easy task classically. It was known that a trivial classical algorithm which just samples from the uniform distribution achieves a total variation distance error of $2^{-\Theta(d)}$ from the target noisy $\mathsf{RCS}$ distribution \cite{aharonov1996polynomial}: the upper bound is due to \cite{aharonov1996polynomial} and, more recently, the lower bound is due to \cite{Deshpande_2022}. As is evident, for depth strictly greater than logarithmic, this trivial sampler achieves a total variation distance error that is smaller than any inverse polynomial. 
\newpage
At logarithmic depth, in a recent work, \cite{aharonov2022polynomialtime} proposed another classical sampler from the output distribution of random circuits with depolarizing noise, which instead of achieving a total variation distance error that is inversely proportional to a fixed polynomial, provides a way to fine tune the total variation distance error to \emph{any} polynomial function of our choice. In \cref{Easiness_sampling}, it is elaborated how, because of the special property of the depolarizing noise, the guarantee that the noiseless ensemble of \cite{aharonov2022polynomialtime} satisfies anticoncentration is necessary for their current analysis to work. This is why their sampler works well only for logarithmic depth and beyond because anticoncentration needs at least logarithmic depth to kick in, and before that, anticoncentration fails \cite{dalzell, Deshpande_2022}. 

Although depolarizing noise, along with anticoncentration, implies a classical sampler from random circuits, such noise is not the only type of noise source present in real hardware. There are non--unital effects in all known experimental hardware, for example those in \cite{Debnath2016, Arute2019, Pino2021, Zhu2022, Carroll2022}. These sources are fundamentally different from unital sources, like the depolarizing channel, in the following sense: the depolarizing channel --- \emph{increases} the entropy of the system by pushing it towards the maximally mixed state; however, non--unital noise channels can \emph{decrease} the entropy of the system and actually push it towards a pure state. For the low noise regime, there is some evidence that depolarizing noise remains a good approximation to all the noise sources present in the system \cite{dalzell}, if the system only has unital noise. However, apart from some very special cases which we discuss in \cref{sec: twirling}, this approximation is not valid in the high noise regime, especially when the system also has non--unital noise sources.

\subsubsection{Low noise regime}
The second regime is that of \emph{low noise}. Here, the strength of the noise is inversely proportional to the number of qubits. This can be thought to be a good approximation of the noise present in relatively small, fixed--sized systems, like those used in sampling hardness demonstrations, for example \cite{Arute2019, morvan2023phase}, where the number of qubits is fixed, and the noise is a fixed constant that is inversely related to the number of qubits. However, without further investigation, it is unclear if asymptotic analysis of the low noise regime is relevant to studying the properties of finite system sizes: we cannot ensure by current technology that the noise rate continues to go down with the number of qubits, when the latter is increased. 

The low noise regime provides advantages to tasks like benchmarking, where fidelity of the output state is a figure of merit: below a certain noise threshold, the linear cross--entropy test ($\mathsf{XEB}$) \cite{aaronson2016complexitytheoretic, Arute2019, aaronson2020classical} corresponds to the fidelity of the output state of the noisy circuit and gives us a sample efficient way of estimating the fidelity of that state using only standard basis measurement. Originally, linear cross--entropy was proposed as a heuristic proxy for fidelity by \cite{Boixo2018, Bouland_2018, liu2022benchmarking}, which was recently confirmed by \cite{morvan2023phase, ware2023sharp}. There is also some evidence from complexity theory that classically sampling from these circuits is hard when the depth is sufficiently large~\cite{dalzell}. 

\subsubsection{Implications of our results}

Our results show that the phenomenon of anticoncentration is a function of the noise present in the circuit: it \emph{does not} hold for reasonable, physically--motivated, non--unital noise models in the high noise regime. Many of our techniques can be generalized to work for a wide variety of noise models and setups, including when we do not have the last layer of noise and only have noise in the middle layers, which we discuss in Section~\ref{measurement}. 
\clearpage
The failure to anticoncentrate implies that the final state of the system does not resemble a maximally mixed state and the uniform distribution is not a good proxy for the output distribution of the system, as elaborated on in \cref{closeness to uniform}. So, sampling from the uniform distribution, or close variants of the same, no longer works as an effective strategy to classically spoof the output distribution. Additionally, the failure to anticoncentrate also implies that no known techniques can be harnessed to show that more sophisticated samplers, like the one in \cite{aharonov2022polynomialtime}, succeed in spoofing the output distribution. 

As mentioned in \cref{contributions}, our work leaves open whether sampling from this distribution, in the asymptotic limit, is classically hard under plausible complexity theoretic hardness conjectures, or whether the final state converges to a classically simulable fixed point that is much more sophisticated than just a maximally mixed state. It could also be interesting to investigate whether the lack of anticoncentration implies any advantage in the computation of the expectation value of certain cost functions in variational setups. We discuss many more open problems in \cref{open problems}.

\section{Notation and setting}
In this section, we set up our problem, introduce the notation used in this paper, and define our architecture. Throughout this paper, we use $I_{N}$ to denote the $N\times N$ identity matrix. For example, $I_2$ represents the single-qubit identity operator. 
The single-qubit Pauli operators are defined as follows: 
\begin{equation}
    \sigma_x = \left(
    \begin{array}{cc}
        0 & 1 \\
        1 & 0
    \end{array}
    \right)\,\,\,\, 
     \sigma_y = \left(
    \begin{array}{cc}
        0 & -i \\
        i & 0
    \end{array}
    \right)\,\,\,\, 
     \sigma_z = \left(
    \begin{array}{cc}
        1 & 0 \\
        0 & -1
    \end{array}
    \right). 
\end{equation}
We use $\mathcal{I}$ to denote the single-qubit identity channel. 
For a string $p \in \{0,1,2,3\}^n$, $\omega_p$ is defined as the Hamming weight of $p$, that is, the number of nonzero symbols in $p$.  

We define the circuit architecture we use in the rest of the paper. 
We assume familiarity with basic terminologies in quantum computing, such as, qubits, quantum circuits, and circuit depth. 
Unless otherwise stated, a quantum circuit, usually denoted by $\mathcal{C}$, is taken to be a CPTP map, and the final measurements, after applying a quantum circuit, are always performed in the standard basis.

\begin{defn}[Parallel quantum circuit]
An $n$-qubit \emph{parallel} quantum circuit, where $n$ is an even number, 
is a quantum circuit where every qubit is involved in a one or two qubit gate, at every depth instance.
\end{defn}

\begin{defn}[Geometrically local quantum circuit]
An $n$-qubit \emph{geometrically local} quantum circuit is a quantum circuit where every quantum gate acts on nearest-neighbour qubits.
\end{defn}

\begin{defn}[Noisy quantum circuit]
An $n$-qubit \emph{noisy} quantum circuit is one where every quantum gate is followed by a single qubit noise channel $\mathcal{N}$ on every qubit involved in the gate.
\end{defn}

\begin{defn}[Random quantum circuit]
An $n$-qubit \emph{random} quantum circuit is one where every quantum gate, acting on $k$ qubits, for a constant $k$, is drawn from the Haar measure on $\mathsf{U}(2^k)$: this is the set of all unitary matrices, of dimension $2^k \times 2^k$. 
\end{defn}
Alternately, a random quantum circuit can be interpreted as a quantum circuit picked uniformly at random from an \emph{ensemble} of quantum circuits. In this paper, unless otherwise stated, we consider parallel, geometrically local, noisy, and random circuits, as depicted in \cref{fig:setup2}.
We usually use $\mathcal{B}$ to denote an ensemble of noisy random circuits. 

\subsection{Modeling the noise}
\label{section: noise_modeling}

Quantum devices are affected by two sources of noise: unital and non-unital quantum noise channels. A quantum channel $\mathcal{N}$ is a unital channel if $\mathcal{N}(I) = I$, where $I$ denotes the identity operator. Dephasing, bit-flip, and depolarizing noise channels are examples of unital quantum channels. On the other hand, an amplitude damping channel is an example of a non-unital quantum channel, which models the $T_1$ noise in superconducting quantum devices.  Unital and non-unital noise sources have the opposite behavior. While amplitude damping noise "bias" the system towards a particular fixed state ($\ket{0}$ state) and  unital sources push the system towards the maximally mixed state.

In the next few sections, we will use a combination of the depolarizing channel and the amplitude damping channel to model the noise after each single qubit gate. Amplitude damping noise is emblematic of the $\mathsf{T_1}$ noise \cite{Carroll2022,kubica2022erasure}, and the depolarizing channel is emblematic of the type of unital noise just described \cite{Arute2019, aharonov2022polynomialtime}. In later sections, we discuss how our analysis and techniques are general enough to apply to a wide variety of noise channels.

\subsubsection{Amplitude damping noise}
This type of noise pushes a qubit towards the state $|0\rangle \langle 0|$. It is represented by two Kraus operators, as given by \cref{defn: amp-damping}. The first Kraus operator "dampens" the $|1\rangle \langle 1|$ term, and the second Kraus operator takes the state $\ket{1}\bra{1}$ to $\ket{0}\bra{0}$ state, with a prefactor. Both the operators serve to make the contribution of $\ket{0}\bra{0}$ dominate in the final state that we get after the channel is applied.

\begin{defn}[Amplitude Damping Noise Channel]
\label{defn: amp-damping}

    Let $0 \leq q \leq 1$ be a real parameter. 
    A single-qubit amplitude damping noise $\mathcal{N}^{(\textup{amp})}_q$ with noise strength $q$ is a quantum channel with the following Kraus operators: 
    \begin{equation}
    K_0 = \left(
    \begin{array}{cc}
        1 & 0 \\
        0 & \sqrt{1-q}
    \end{array}
    \right),\,\,
        K_1 = \left(
    \begin{array}{cc}
        0 & \sqrt{q} \\
        0 & 0
    \end{array}
    \right).
\end{equation}
\end{defn}
\noindent Therefore, for a single-qubit linear operator 
\begin{equation}
    X = \left(\begin{array}{cc}
        x_{00} & x_{01} \\
        x_{10} & x_{11}
    \end{array}\right), 
\end{equation}
the action of the amplitude damping channel $\mathcal{N}^{(\textup{amp})}_q(X)$ is given by 
\begin{equation}
\mathcal{N}^{(\textup{amp})}_q(X) = \left(\begin{array}{cc}
    x_{00} + qx_{11} & \sqrt{1-q}x_{01} \\
    \sqrt{1-q}x_{10} & (1-q)x_{11}
\end{array}\right). 
\end{equation}

\begin{figure}
 \begin{centering}
    \includegraphics{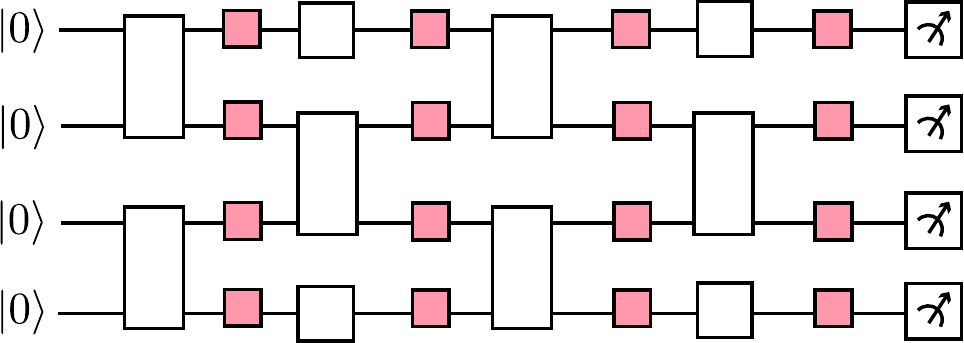}
    \caption{The circuit model that we use for our analysis. Every pink rectangle is a single qubit noise channel. Every white rectangle is either a single qubit or a two--qubit Haar random gate. In the end, the final state is measured in the standard basis.
    }
    \label{fig:setup2}
     \end{centering}
\end{figure} 

\subsubsection{Depolarizing noise}

\begin{defn}[Depolarizing Noise Channel]
    Let $0 \leq p \leq 1$ be a real parameter. 
    A single-qubit depolarizing noise $\mathcal{N}^{(\textup{dep})}_p$ with noise strength $p$ is a quantum channel with the following Kraus operators:
    \begin{equation}
        \begin{aligned}
    &K_0 = \sqrt{1 - \frac{3p}{4}}\left(
    \begin{array}{cc}
        1 & 0 \\
        0 & 1
    \end{array}
    \right),\,\,
        K_1 = \sqrt{\frac{p}{4}}\left(
    \begin{array}{cc}
        0 & 1 \\
        1 & 0
    \end{array}
    \right),\,\, \\
    &K_2 = \sqrt{\frac{p}{4}}\left(
    \begin{array}{cc}
        0 & -i \\
        i & 0
    \end{array}
    \right), \,\,
        K_3 = \sqrt{\frac{p}{4}}\left(
    \begin{array}{cc}
        1 & 0 \\
        0 & -1
    \end{array}
    \right).
        \end{aligned}
\end{equation}
\end{defn} 
\noindent Therefore, for any single-qubit linear operator $X$
\begin{equation}
    X = \left(\begin{array}{cc}
        x_{00} & x_{01} \\
        x_{10} & x_{11}
    \end{array}\right), 
\end{equation}
the action of the depolarizing channel $\mathcal{N}^{(\textup{dep})}_p(X)$ is given by 
\begin{align}
\mathcal{N}^{(\textup{dep})}_p(X) &= \left(\begin{array}{cc}
    \left(1 - p\right)x_{00} + \frac{p}{2} x_{11} & (1-p)x_{01} \\
    (1-p)x_{10} &  \left(1 - p\right)x_{11} + \frac{p}{2} x_{00}
\end{array}\right) \\ 
&= (1-p)X + \frac{p}{2}\Tr(X)I_2,  
\end{align}
with the single-qubit identity operator $I_2$.

\begin{remark}
Usually, when the type of noise is clear from the definition, we drop the superscript and refer to the noise channel as $\mathcal{N}_c$, for some choice of the strength $c$.
Furthermore, we will often consider properties of noisy random quantum circuits \emph{in expectation}: unless otherwise stated, the expectation will always be taken only over the choice of random gates, and \emph{not} over the noise channels.
\end{remark}

\begin{remark}
The notion that the amplitude damping channel is fundamentally different from the depolarizing channel is elaborated on in works like \cite{benor2013quantum}, where it is shown how quantum circuits with uncorrected amplitude damping channel can be used to do exponential time quantum computation, in the worst case, by using the noise as a resource to generate fresh ancilla qubits. However, this cannot be done for quantum circuits with the depolarizing noise channel.
\end{remark}

\section{Anticoncentration and the lack thereof}
\label{section: anticoncentration_def}
In this section, we introduce the notion of anticoncentration and discuss what it means to \emph{not} anticoncentrate. First, we will use the definition of anticoncentration in \cite{dalzell}, where it is defined with respect to the collision probability. Then, we will talk about another definition of anticoncentration, which, for example, is found in references like \cite{aaronsonarkhipov, Bremner_2016, Deshpande_2022}, and discuss connections between the two definitions.

\subsection{Strong definition: with respect to the scaled collision probability}

\begin{defn}
The \emph{output probability} $p_x$, of a string $x \in \{0, 1\}^n$, for a quantum circuit $\mathcal{C}$ is given by
\begin{equation}
p_x = \Tr\big(|x\rangle \langle x|~ \mathcal{C}(|0^n\rangle \langle 0^n |)\big).
\end{equation}
\end{defn}

\begin{defn}
For an ensemble $\mathcal{B}$, the scaled collision probability is defined as
\begin{equation}
\label{eq_collision}
\mathcal{Z} = 2^n \cdot \underset{\mathcal{B}}{\mathbb{E}} \left[ \sum_{x \in \{0, 1\}^n} p_x^2 \right]-1.
\end{equation}
\end{defn}
\noindent In other words,
\begin{equation}
\mathcal{Z} = 2^n \cdot \underset{\mathcal{B}}{\mathbb{E}} \left[ \sum_{x \in \{0, 1\}^n}  \Tr(|x\rangle \langle x| ~\mathcal{C}(\ketbra{0}{0}))^2 \right]-1. 
\end{equation}
\begin{defn}
\label{Definiton: anticoncentration}
An ensemble $\mathcal{B}$ of $n$-qubit  quantum circuits is defined to be \emph{anticoncentrated} if 
\begin{equation}
\label{equation: anticoncentration}
\mathcal{Z} = \mathcal{O}(1).
\end{equation}
\end{defn}
Intuitively, \cref{Definiton: anticoncentration} says that the probability of seeing a collision, after sampling twice from the output distribution of a circuit in $\mathcal{B}$, is extremely small in expectation. In other words, "most" $n$--bit strings have sufficiently high probability weight in the output distribution. \cref{Definiton: anticoncentration} is a "stronger" definition of anticoncentration because, for certain types of circuits, it implies another well--studied "weaker" definition of anticoncentration, as we will see in \cref{secondsec}.

\begin{defn}
\label{Definiton: concentration}
An ensemble $\mathcal{B}$ of $n$-qubit  quantum circuits is said to exhibit \emph{lack of anticoncentration} if 
\begin{equation}
\label{equation: concentration}
\mathcal{Z} = \omega(1).
\end{equation}
\end{defn}

The $\mathsf{RHS}$ means that the quantity is strictly more than a constant: it grows as some function of $n$. Intuitively, a distribution which satisfies \cref{Definiton: anticoncentration} is not "evenly" supported on all $n$--bit strings: some strings have much more probability weight than others. So, there is a much larger probability of seeing a collision when sampling multiple times from this distribution.
\begin{remark}
The connection of \cref{Definiton: anticoncentration} to easiness proofs is elaborated on in \cref{Easiness_sampling}.
In short, the guarantee that ensembles satisfy \cref{Definiton: anticoncentration} is necessary for the current techniques to analyze the accuracy of the classical sampler, proposed by \cite{aharonov2022polynomialtime}, to work. This is why the samplers only works after logarithmic depth; logarithmic depth is the threshold at which these ensembles are known to start anticoncentrating \cite{dalzell}. 
\end{remark}
\subsection{Weak definition: with respect to individual probabilities}
\label{secondsec}
There is another definition of anticoncentration, which defines it in terms of how large individual probabilities are. This can be found in works like \cite{fefferman, Bouland_2018, aaronsonarkhipov, Deshpande_2022}.

\begin{defn}
\label{newdef}
An ensemble $\mathcal{B}$ of $n$-qubit  quantum circuits is defined to be \emph{anticoncentrated} if for every $x \in \{0, 1\}^n$, there exists a choice of $\alpha, \beta \in (0, 1]$ such that
\begin{equation}
\label{equation: anticoncentration2}
\underset{\mathcal{B}}{\Pr}\bigg[p_x \geq \frac{\alpha}{2^n}\bigg] \geq \beta. 
\end{equation}
\end{defn}

\noindent We define the lack of anticoncentration in a similar way.

\begin{defn}
\label{newdef: concentration}
An ensemble $\mathcal{B}$ of $n$-qubit  quantum circuits is defined to \emph{lack anticoncentration} if there exists an $x \in \{0, 1\}^n$, such that for any $\alpha \in (0, 1]$,
\begin{equation}
\label{equation: lack anticoncentration2}
\underset{n \rightarrow \infty}{\lim} \underset{\mathcal{B}}{\Pr}\bigg[p_x < \frac{\alpha}{2^n}\bigg] = 1. 
\end{equation}
\end{defn}

The form of anticoncentration in \cref{newdef} is believed to be useful in proving that the output distribution of some random circuit ensembles are hard to classically sample from \cite{aaronsonarkhipov, Bouland_2018}.  \cref{newdef} is called the weaker definition of anticoncentration because \cref{Definiton: anticoncentration} implies \cref{newdef}, as we will now establish.

\subsection{Connection between two definitions}
\label{connection}
Before establishing the connection between \cref{Definiton: anticoncentration} and \cref{newdef}, let us state a version of the "hiding" property.

\begin{defn}
Let $\mathcal{B}$ be an ensemble of random quantum circuits. Then, $\mathcal{B}$ is said to satisfy hiding if
\begin{equation}
\underset{\mathcal{B}}{\mathbb{E}}\left[p_x^{k}\right] = \underset{\mathcal{B}}{\mathbb{E}}\left[p_y^{k}\right],
\end{equation}
for $x, y \in \{0, 1\}^n$, $x \neq y$, and for any $k$.
\end{defn}

\noindent Note that for ensembles that satisfy hiding,
\begin{equation}
\label{col_prob}
\underset{\mathcal{B}}{\mathbb{E}} \left[ \sum_{x \in \{0, 1\}^n} p_x^2 \right] = 2^n \cdot \underset{\mathcal{B}}{\mathbb{E}} \left[ p_y^2 \right],
\end{equation}
for any $y \in \{0, 1\}^n.$ Note that the hiding property follows form the left and right invariance of the Haar measure under unitary transformations. 
Furthermore, from \cref{col_prob}, for any ensemble $\mathcal{B}$ satisfying both hiding and anticoncentration with respect to \cref{Definiton: anticoncentration},
\begin{equation}
\label{cp2}
\underset{\mathcal{B}}{\mathbb{E}} \left[ p_x \right] = \frac{1}{2^n};~~~~~~~~\underset{\mathcal{B}}{\mathbb{E}} \left[ p_x^2 \right] = \frac{\mathcal{O}(1)}{4^n},
\end{equation}
for any $x \in \{0, 1\}^n$. A number of ensembles satisfy hiding, including noiseless random quantum circuits and circuits with Pauli noise \cite{Deshpande_2022}. Let us state a proposition that follows from our discussion so far.

\begin{propo}
    Let $\mathcal{B}$ be an ensemble of random quantum circuits which satisfy hiding and anticoncentration with respect to \cref{Definiton: anticoncentration}. Then, it also satisfies anticoncentration with respect to \cref{newdef}.
\end{propo}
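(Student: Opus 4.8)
The plan is to apply the Paley–Zygmund (second moment) inequality to the nonnegative random variable $p_x$, using the first and second moment estimates that \cref{cp2} already hands us. First I would record the two moments explicitly. Since $\mathcal{B}$ satisfies hiding, taking the expectation of the identity $\sum_{x} p_x = 1$ and using that $\mathbb{E}_{\mathcal{B}}[p_x]$ is independent of $x$ gives $\mathbb{E}_{\mathcal{B}}[p_x] = 2^{-n}$ for every $x$. Since $\mathcal{B}$ also anticoncentrates in the sense of \cref{Definiton: anticoncentration}, we have $\mathcal{Z} = \mathcal{O}(1)$, so combining \cref{col_prob} with the definition of $\mathcal{Z}$ yields $\mathbb{E}_{\mathcal{B}}[p_x^2] = (\mathcal{Z}+1)/4^n = \mathcal{O}(1)/4^n$. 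These are exactly the bounds summarized in \cref{cp2}.

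Next I would invoke the Paley–Zygmund inequality: for any nonnegative random variable $Z$ with $0 < \mathbb{E}[Z^2] < \infty$ and any $\theta \in [0,1]$,
\begin{equation}
\Pr\big[Z \geq \theta\, \mathbb{E}[Z]\big] \geq (1-\theta)^2 \frac{\mathbb{E}[Z]^2}{\mathbb{E}[Z^2]}.
\end{equation}
Applying this with $Z = p_x$, which is nonnegative since it is an output probability, and the choice $\theta = 1/2$ gives
\begin{equation}
\Pr_{\mathcal{B}}\Big[p_x \geq \frac{1}{2}\cdot\frac{1}{2^n}\Big] \geq \frac{1}{4}\cdot \frac{\mathbb{E}_{\mathcal{B}}[p_x]^2}{\mathbb{E}_{\mathcal{B}}[p_x^2]} = \frac{1}{4(\mathcal{Z}+1)}.
\end{equation}

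From here I would simply read off the constants promised by \cref{newdef}: take $\alpha = 1/2$ and $\beta = 1/(4(\mathcal{Z}+1))$. Because $\mathcal{Z} = \mathcal{O}(1)$ is a constant independent of $n$, the value $\beta$ is a fixed positive constant; and since Cauchy–Schwarz gives $\mathbb{E}_{\mathcal{B}}[p_x]^2 \leq \mathbb{E}_{\mathcal{B}}[p_x^2]$, equivalently $\mathcal{Z}+1 \geq 1$, we get $\beta \leq 1/4 \leq 1$, so indeed $\alpha, \beta \in (0,1]$. As the computation holds uniformly for every $x \in \{0,1\}^n$, this establishes \cref{newdef}.

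The argument is essentially a one-line application of the second moment method, so I do not expect a serious obstacle. The only point requiring genuine care is confirming that the ratio $\mathbb{E}_{\mathcal{B}}[p_x]^2 / \mathbb{E}_{\mathcal{B}}[p_x^2]$ is bounded below by a constant \emph{independent of $n$}: this is precisely what the strong hypothesis $\mathcal{Z} = \mathcal{O}(1)$ buys us. Without it, the same calculation would only deliver $\beta = \Theta(1/\mathcal{Z})$, which would decay with $n$ and fail to certify \cref{newdef}, so this is exactly the place where strong anticoncentration is used.
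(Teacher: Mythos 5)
Your proof is correct and follows essentially the same route as the paper: both derive $\mathbb{E}_{\mathcal{B}}[p_x] = 2^{-n}$ and $\mathbb{E}_{\mathcal{B}}[p_x^2] = \mathcal{O}(1)/4^n$ from hiding and the strong anticoncentration hypothesis (\cref{cp2}), then apply the Paley--Zygmund inequality to $p_x$. Your version is merely more explicit --- fixing $\theta = 1/2$ and verifying $\beta \in (0,1]$ via Cauchy--Schwarz --- where the paper leaves the threshold parameter general and cites \cref{cp2} without spelling out the constants.
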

\begin{proof}
For any $x \in \{0, 1\}^n$, by the Paley-Zygmund inequality,
\begin{equation}
\Pr[p_x \geq \alpha \cdot \underset{\mathcal{B}}{\mathbb{E}} \left[ p_x \right]] \geq \frac{(1-\alpha)^2}{4^n \cdot \underset{\mathcal{B}}{\mathbb{E}} \left[ p_x^2 \right]}.
\end{equation}
Then, the proof follows from \cref{cp2}.
\end{proof}
\noindent In this sense, \cref{Definiton: anticoncentration} 
is stronger than \cref{newdef}. 

\begin{remark}
\label{first remark}
When an ensemble $\mathcal{B}$ does not satisfy hiding, the relation between \cref{Definiton: anticoncentration} and \cref{newdef} is not clear. Note that hiding is not satisfied in most of the setups that we study in this paper, because of how we model our noise. The details of our noise model are given in \cref{section: noise_modeling}.
\end{remark}

\subsection{Fine graining the weak definition}
\noindent One can consider a fine-grained version of \cref{newdef}. The fine-graining is important in setups where hiding is not satisfied and analyzing the typical probability weight for one particular bitstring does not tell us about the typical behavior of other bitstrings..

\begin{defn}
\label{newdef2}
An ensemble $\mathcal{B}$ of $n$-qubit  quantum circuits is defined to be $k$-\emph{anticoncentrated} if there exists a set $S = \{x : x \in \{0, 1\}^n \}$, with $|S| = k$, such that for every $x \in S$, there exists a choice of $\alpha, \beta \in (0, 1]$ satisfying
\begin{equation}
\label{equation: anticoncentration3}
\underset{\mathcal{B}}{\Pr}\bigg[p_x \geq \frac{\alpha}{2^n}\bigg] \geq \beta. 
\end{equation}
\end{defn}

\begin{remark}
Note that after a sufficiently large depth, noiseless random circuits, or random circuits with Pauli noise, are $2^n$-anticoncentrated \cite{dalzell}.
\end{remark}

\subsection{Our work}
Our work shows that, for the setups we discuss in the paper, anticoncentration fails both with respect to \cref{Definiton: anticoncentration} and \cref{newdef}. Our analysis is fine grained, in the spirit of \cref{newdef2}.

\section{Overview of proofs}
In this section, we summarize our proofs. The proofs are detailed in the subsequent sections. Just like \cref{our results}, this section gives a bird's eye view of the rest of the paper; but it is more formal than \cref{our results}.

\begin{itemize}
\item First, we will prove that the distribution exhibits a lack of anticoncentration, according to \cref{Definiton: concentration}, at any depth. Particularly, we will show that for a noisy ensemble $\mathcal{B}$, the scaled collision probability
\begin{equation}
\label{anticoncentration}
\mathcal{Z} \geq (1+s)^n - 1,
\end{equation}
for a non-negative constant $s$ depends on the strength of the noises present.

\item Then, we show how the distribution lacks anticoncentration according to \cref{newdef: concentration}, at any depth.

Moreover, our results show that the distribution is $\emph{never}$ $2^{n-1}$--anticoncentrated, according to \cref{newdef2}.

\begin{itemize}

\item For this, first we calculate the first moment of output probabilities to show that, in expectation, the probability weight on a string is exponentially suppressed with respect to the Hamming weight of the string: strings with lower Hamming weight have exponentially more weight than strings with higher Hamming weight.

Intuitively, this behavior comes from the fact that the fixed point of an $n$--fold tensor product of a single qubit amplitude damping channel is $|0^n\rangle \langle 0^n |$ and the fact that $0^n$ is a string with a Hamming weight of $0$. So, the distribution is biased towards strings that are closer in Hamming distance to $0^n$.

\item We then show that for any string with Hamming weight at least $\frac{n}{2}$, the probability weight on that string is negligible, for most circuits in the ensemble $\mathcal{B}$. 

This calculation is divided into two parts: the low depth and the high depth regime, with different techniques for each regime. We use a variant of a lightcone argument in the low depth regime, whereas the high depth regime is analyzed using mappings to a stat--mech model. 
Note that
mapping random circuits to stat--mech models, to study various quantities of interest, is a useful analysis tool and have been studied in \cite{hunterjones2019unitary, dalzell, Deshpande_2022}.
\end{itemize}

\item Then, we generalize our techniques to an arbitrary noise channel. First, we show that for an arbitrary noise channel with a non--unital component, the distribution exhibits a lack of anticoncentration according to \cref{Definiton: concentration}, for a wide range of parameters.

\item Thereafter, we derive a condition for which the noisy distribution shows a lack of anticoncentration according to \cref{newdef: concentration} and is never $2^{n-1}$--anticoncentrated. This results holds for a general noise model, when the noise is modelled as any CPTP map. 

\item Note that a layer of random gates can, intuitively, be thought to "scramble" the output distribution. On the other hand, a layer of amplitude damping noise can be thought to "unscramble" the distribution and push it back to a pure state. So, one might suspect that there is a "see--saw" effect and whether the distribution exhibits a lack of anticoncentration is dependent on whether we terminate our circuit with a layer of noise or a layer of noiseless gates. However, we do not think this is the case: we argue that amplitude damping noise in the middle layers is sufficient to cause lack of anticoncentration. 

To elaborate on this conceptual point that we want to make, at the end of our paper, we discuss a setup where we do not have the last layer of noise, and instead terminate with a last layer of noiseless gates. 

\noindent We argue how such setups also appear to lack anticoncentration, according to \cref{Definiton: concentration} and \cref{newdef: concentration}. Our results in this regime are not as general as they were before, but, nonetheless, they strongly suggest that the phenomenon of lack of anticoncentration holds true even when we have non--unital noise in only the middle layers.

\end{itemize}

\section{Lack of anticoncentration using scaled collision probability}
\label{lack_anticoncentration}
In this section, we will show how our random circuit ensemble exhibits a lack of anticoncentration according to \cref{Definiton: concentration}. The noise is modeled by a mixture of amplitude damping and depolarizing noise. That is, the combined noise channel could either be $\mathcal{N}^{(\textup{amp})}_q\circ \mathcal{N}^{(\textup{dep})}_p$ or be $\mathcal{N}^{(\textup{dep})}_p\circ \mathcal{N}^{(\textup{amp})}_q$. 
 \clearpage
\noindent Let us observe the following identities:
\begin{equation}
|0\rangle \langle 0| = \frac{I_2 + \sigma_z}{2},
\end{equation}
\begin{equation}
|1\rangle \langle 1| = \frac{I_2 - \sigma_z}{2},
\end{equation}
where $I_2$ is the single-qubit identity operator and $\sigma_z$ is the single-qubit Pauli-$\mathsf{Z}$ operator. 
Using these identities, for an ensemble $\mathcal{B}$, one could rewrite \cref{eq_collision} as:
\begin{equation}
\mathcal{Z} = \underset{\mathcal{B}}{\mathbb{E}} \left[ \sum_{p \in \{0, 3\}^n, p \neq 0^n}  \Tr(\sigma_p ~\mathcal{C}(\ketbra{0}{0}))^2 \right],
\end{equation}
where $\sigma_p$ is an $n$--qubit Pauli operator. Note that $\sigma_0 = I_2$ and $\sigma_3 = \sigma_z$. Additionally, note that
\begin{equation}
\underset{\mathcal{B}}{\mathbb{E}} \left[ \Tr(\sigma_p ~\mathcal{C}(\ketbra{0}{0}))^2 \right] = \underset{\mathcal{B}}{\mathbb{E}} \left[ \Tr(\sigma_p \otimes \sigma_p ~\mathcal{C}(\ketbra{0}{0}) \otimes \mathcal{C}(\ketbra{0}{0})) \right]. 
\end{equation}

\noindent We prove the following theorem, which shows a lack of anticoncentration for our noise model. 
\begin{thm}
\label{first_theorem}
Let $\mathcal{B}$ be an ensemble of noisy random quantum circuits with noise channel $\mathcal{N}$, where $\mathcal{N}$ is either $\mathcal{N}^{(\textup{amp})}_q\circ \mathcal{N}^{(\textup{dep})}_p$ or be $ \mathcal{N}^{(\textup{dep})}_p\circ \mathcal{N}^{(\textup{amp})}_q$. Then,
\begin{equation}~\label{eq:collision_lower_bound_statement}
\mathcal{Z} \geq \left(1+r^2\right)^n - 1, 
\end{equation}
where 
\begin{align}\label{eq:rvalue}
    r &\coloneqq \Bigg\{ 
    \begin{array}{ll}
        q, & \mathcal{N} = 
        \mathcal{N}^{(\textup{amp})}_q \circ \mathcal{N}^{(\textup{dep})}_p, \\
        q(1-p), & \mathcal{N} = \mathcal{N}^{(\textup{dep})}_p \circ \mathcal{N}^{(\textup{amp})}_q. 
    \end{array} 
\end{align}
\end{thm}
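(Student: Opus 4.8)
The plan is to begin from the Pauli-basis identity for $\mathcal{Z}$ already established in the excerpt (writing $\rho = \ket{0^n}\bra{0^n}$),
\begin{equation}
\mathcal{Z} = \underset{\mathcal{B}}{\mathbb{E}}\left[\sum_{p \in \{0,3\}^n,\, p \neq 0^n}\Tr(\sigma_p\,\mathcal{C}(\rho))^2\right],
\end{equation}
and to notice that the target bound factorizes as $(1+r^2)^n - 1 = \sum_{p \in \{0,3\}^n,\, p \neq 0^n} r^{2\omega_p}$, since each $p \in \{0,3\}^n$ corresponds to the subset $\{i : p_i = 3\}$ of size $\omega_p$, and the empty subset ($p = 0^n$) accounts for the $-1$. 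It therefore suffices to prove the per-string estimate $\underset{\mathcal{B}}{\mathbb{E}}[\Tr(\sigma_p\,\mathcal{C}(\rho))^2] \geq r^{2\omega_p}$ for every $p \neq 0^n$. The key simplification is to apply Jensen's inequality, $\underset{\mathcal{B}}{\mathbb{E}}[X^2] \geq (\underset{\mathcal{B}}{\mathbb{E}}[X])^2$, which replaces this second-moment quantity by the square of a first moment; this is exactly what makes the argument architecture- and depth-independent, as advertised in the overview, since no variance or stat-mech computation is then required and the bound need not be tight to reach $(1+r^2)^n - 1$.

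Next I would compute the first moment using the ``remove the last layer of noise'' trick: writing $\mathcal{C} = \mathcal{N}^{\otimes n}\circ\mathcal{C}'$ and passing to the adjoint gives $\Tr(\sigma_p\,\mathcal{C}(\rho)) = \Tr((\mathcal{N}^\dagger)^{\otimes n}(\sigma_p)\,\mathcal{C}'(\rho))$. The crucial single-qubit calculation is that $\mathcal{N}^\dagger(\sigma_z) = r\,I_2 + \beta\,\sigma_z$ for suitable scalars, together with $\mathcal{N}^\dagger(I_2) = I_2$ (the adjoint of any trace-preserving map is unital). Starting from $(\mathcal{N}^{(\textup{amp})}_q)^{\dagger}(\sigma_z) = q\,I_2 + (1-q)\sigma_z$ and $(\mathcal{N}^{(\textup{dep})}_p)^{\dagger}(\sigma_z) = (1-p)\sigma_z$, composing in the appropriate order yields coefficient of $I_2$ equal to $q$ when $\mathcal{N} = \mathcal{N}^{(\textup{amp})}_q\circ\mathcal{N}^{(\textup{dep})}_p$ and equal to $q(1-p)$ when $\mathcal{N} = \mathcal{N}^{(\textup{dep})}_p\circ\mathcal{N}^{(\textup{amp})}_q$, matching \cref{eq:rvalue}; in both cases $r$ is precisely this identity coefficient.

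Expanding $(\mathcal{N}^\dagger)^{\otimes n}(\sigma_p) = \bigotimes_{i:\,p_i = 3}(r\,I_2 + \beta\,\sigma_z)$ over the support $S = \{i : p_i = 3\}$ produces a sum over $T \subseteq S$ of terms $r^{\omega_p - |T|}\beta^{|T|}\sigma_{p_T}$, where $\sigma_{p_T}$ carries $\sigma_z$ exactly on $T$. Taking the expectation, I use that $\mathcal{C}'$ terminates in a layer of two-qubit Haar-random gates with $\mathbb{E}_U[U^\dagger M U] = \tfrac{\Tr(M)}{4}I_4$: every term with $T \neq \emptyset$ restricts to a traceless Pauli on at least one final-layer pair and hence averages to zero, so only the $T = \emptyset$ term survives, contributing $r^{\omega_p}\Tr(\mathcal{C}'(\rho)) = r^{\omega_p}$ by trace preservation. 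Thus $\underset{\mathcal{B}}{\mathbb{E}}[\Tr(\sigma_p\,\mathcal{C}(\rho))] = r^{\omega_p}$; Jensen gives $\underset{\mathcal{B}}{\mathbb{E}}[\Tr(\sigma_p\,\mathcal{C}(\rho))^2] \geq r^{2\omega_p}$, and summing over $p \neq 0^n$ yields \cref{eq:collision_lower_bound_statement}. The only genuinely delicate point is the vanishing of the $T \neq \emptyset$ contributions under the last Haar layer: one must verify that every nontrivial $\sigma_{p_T}$ restricts to a traceless operator on some gate of the final layer, which holds because the circuit is parallel, so the last layer covers every qubit, and $\sigma_z$ tensored with anything nonzero on the partner qubit is traceless.
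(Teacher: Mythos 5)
Your proof is correct, and it reaches the paper's bound by a genuinely different and somewhat more elementary route. The paper works directly with the second moment: it writes $\Tr(\sigma_p\,\mathcal{C}(\rho))^2 = \Tr\big(\sigma_p\otimes\sigma_p\;\mathcal{C}(\rho)\otimes\mathcal{C}(\rho)\big)$, applies the adjoint trick to both copies, expands $(\mathcal{N}^\dagger)^{\otimes n}(\sigma_p) = r^{\omega_p}I_{2^n} + \sum_{q\neq 0^n} c_q\sigma_q$, kills the cross terms $\sigma_q\otimes\sigma_{q'}$ with $q\neq q'$ via the two-copy Haar twirl identity $M_{U_1}[\sigma_i\otimes\sigma_j]=0$ for $i\neq j$, and lower-bounds the diagonal terms by $c_q^2\,\mathbb{E}\big[\Tr(\sigma_q\,\mathcal{C}'(\rho))^2\big]\geq 0$, leaving exactly the contribution $r^{2\omega_p}$ of the identity component. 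You instead apply Jensen's inequality per string, $\mathbb{E}[X^2]\geq(\mathbb{E}[X])^2$ with $X=\Tr(\sigma_p\,\mathcal{C}(\rho))$, and compute the first moment exactly, $\mathbb{E}[X]=r^{\omega_p}$, using only the one-copy Haar fact that the twirl of a traceless operator vanishes; your adjoint computations (the $I_2$-coefficient of $\mathcal{N}^\dagger(\sigma_z)$ equals $q$ or $q(1-p)$ in the two orderings, matching \cref{eq:rvalue}) are all correct. The two routes discard exactly the same quantity: since $\mathbb{E}[X]=r^{\omega_p}$, the variance of $X$ coincides with the paper's discarded sum $\sum_q c_q^2\,\mathbb{E}\big[\Tr(\sigma_q\,\mathcal{C}'(\rho))^2\big]$, so neither bound is tighter. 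What your version buys is economy of machinery — only first-moment Haar integration is needed, no two-copy twirl or symmetric-subspace facts — which also makes the architecture- and depth-independence of the result especially transparent; what the paper's organization buys is an explicit accounting of which nonnegative second-moment terms are dropped, in a form that dovetails with its later two-copy/stat-mech analyses. Your handling of the one delicate step (vanishing of the $T\neq\emptyset$ terms) is also sound: because the circuit is parallel, every qubit in $T$ is covered by a one- or two-qubit Haar gate in the final layer, the restriction of $\sigma_{p_T}$ to that gate is traceless ($\sigma_z$, $\sigma_z\otimes I_2$, $I_2\otimes\sigma_z$, or $\sigma_z\otimes\sigma_z$), and hence its first-moment twirl is zero.
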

\begin{proof}

\noindent For pedagogical reasons, before we generalize to $n$ qubits, let us first consider just the single qubit case. In the single-qubit case, we have 
\begin{align}
    \mathcal{Z} 
    &= \underset{\mathcal{B}}{\mathbb{E}} \left[\Tr(\sigma_z ~\mathcal{C}(\ketbra{0}{0}))^2 \right] \\
    &= \underset{\mathcal{B}}{\mathbb{E}} \left[ \Tr(\sigma_z \otimes \sigma_z ~\mathcal{C}(\ketbra{0}{0}) \otimes \mathcal{C}(\ketbra{0}{0})) \right]
\end{align}
by definition. 
Let $\mathcal{N}$ be either $\mathcal{N}^{(\textup{amp})}_q\circ \mathcal{N}^{(\textup{dep})}_p$ or $ \mathcal{N}^{(\textup{dep})}_p\circ \mathcal{N}^{(\textup{amp})}_q$. 
Let 
\begin{equation}
\rho = \mathcal{N} \left(U_1 (\tilde{\rho}) U_1^\dagger \right), 
\end{equation}
where $\tilde{\rho}$ is the state just before the last block. 
Additionally, let
\begin{equation}
\rho' = U_1 (\tilde{\rho}) U_1^{\dagger}.
\end{equation}
\noindent By definition of the adjoint map, we have  
    \begin{align}
        \mathcal{Z} &= \underset{U_1}{\mathbb{E}} \left[ \Tr(\sigma_z \otimes \sigma_z ~\rho \otimes \rho) \right] \\
        & = \underset{U_1}{\mathbb{E}} \left[ \Tr(\sigma_z \otimes \sigma_z ~\mathcal{N}(\rho') \otimes \mathcal{N}(\rho') \right] \\
        \label{eqlast}
        &= \underset{U_1}{\mathbb{E}} \left[ \Tr(\mathcal{N}^{\dagger}(\sigma_z) \otimes \mathcal{N}^{\dagger}(\sigma_z) ~ \rho' \otimes \rho') \right]. 
    \end{align}

\noindent In \eqref{eqlast}, we have used the fact that for a CPTP map, $\mathcal{N}^{\dagger} \circ \mathcal{N}$ is equal to the identity map. We can explicitly compute $\mathcal{N}^{\dagger}(\sigma_z)$ as 
\begin{equation}~\label{eq:expand_N_dagger}
    \mathcal{N}^{\dagger}(\sigma_z) = r I_2 + (1-q)(1-p)\sigma_z, 
\end{equation}
where $r$ is defined in Eq.\eqref{eq:rvalue}.
Using this expansion, we have 
\begin{equation}~\label{eq:collision_lower_bound}
\begin{aligned}
\underset{U_1}{\mathbb{E}} \left[ \Tr(\mathcal{N}^{\dagger}(\sigma_z) \otimes \mathcal{N}^{\dagger}(\sigma_z) ~ \rho' \otimes \rho') \right]
&= r^2 \underset{U_1}{\mathbb{E}} \left[ \Tr(I_2 \otimes I_2 ~\rho' \otimes \rho') \right] + c_1 \underset{U_1}{\mathbb{E}} \left[ \Tr(I_2 \otimes \sigma_z ~\rho' \otimes \rho') \right]  \\ &+ c_1 \underset{U_1}{\mathbb{E}} \left[ \Tr(\sigma_z \otimes I_2 ~\rho' \otimes \rho') \right] + c_2^2 \underset{U_1}{\mathbb{E}} \left[ \Tr(\sigma_z \otimes \sigma_z ~\rho' \otimes \rho') \right], 
\end{aligned}
\end{equation}
where 
\begin{align}
    c_1  &\coloneqq \Bigg\{ 
    \begin{array}{ll}
        q(1-q)(1-p), & \mathcal{N} = 
        \mathcal{N}^{(\textup{amp})}_q \circ \mathcal{N}^{(\textup{dep})}_p, \\
        q(1-p)(1-p)^2, & \mathcal{N} = \mathcal{N}^{(\textup{dep})}_p \circ \mathcal{N}^{(\textup{amp})}_q,
    \end{array} \\ 
    c_2  &\coloneqq \Bigg\{ 
    \begin{array}{ll}
        (1-q), & \mathcal{N} = 
        \mathcal{N}^{(\textup{amp})}_q \circ \mathcal{N}^{(\textup{dep})}_p, \\
        (1-q)(1-p), & \mathcal{N} = \mathcal{N}^{(\textup{dep})}_p \circ \mathcal{N}^{(\textup{amp})}_q~.
    \end{array}
\end{align}

\noindent Now, we evaluate the terms one by one.
\begin{itemize}
\item The first term: 
\begin{equation}
\underset{U_1}{\mathbb{E}} \left[ \Tr(I_2 \otimes I_2 ~\rho' \otimes \rho') \right] = 1
\end{equation}
by definition. 
\item The second term:
\begin{align}
\underset{U_1}{\mathbb{E}} \left[ \Tr(I_2 \otimes \sigma_z ~\rho' \otimes \rho') \right] &= \underset{U_1}{\mathbb{E}} \Tr\bigg( \big[ U_1^{\dagger \otimes 2} (I_2 \otimes \sigma_z) U_1^{\otimes 2}\big] ~\tilde{\rho} \otimes \tilde{\rho}\bigg) \\
&=  \Tr\bigg(\underset{U_1}{\mathbb{E}}\big[ U_1^{\dagger \otimes 2} (I_2 \otimes \sigma_z) U_1^{\otimes 2}\big] ~ \tilde{\rho} \otimes \tilde{\rho} \bigg)\\
\label{eq:cross_term_1}
&= 0.
\end{align}
The second line follows because the linearity of trace and expectation to interchange the two operations suitably.
In the third line, we have used the fact that 
\begin{align}
\label{eq:vanish_cross_term}
    M_{U_{1}}[\sigma_i \otimes \sigma_j] = \underset{U_1}{\mathbb{E}}\left[U_1^{\dagger \otimes 2} \sigma_i \otimes \sigma_j U_1^{\otimes 2}\right] = 
    \underset{U_1}{\mathbb{E}}\left[U_1^{\otimes 2} \sigma_i \otimes \sigma_j U_1^{\dagger \otimes 2}\right] = 0
\end{align}
for $i \neq j$.
This equation follows because the Hermitian conjugate of a unitary is also a unitary and because the expectation is only over the unitary $U_1$.

\item By a similar reasoning as the second term, the third term:
\begin{equation}
\label{eq:cross_term_2}
\underset{U_1}{\mathbb{E}} \left[ \Tr(\sigma_z \otimes I_2 ~\rho' \otimes \rho') \right] = 0.
\end{equation}

\item The fourth term:
\begin{equation}
\label{eq:positive_redidual}
\underset{U_1}{\mathbb{E}} \left[ \Tr(\sigma_z \otimes \sigma_z ~\rho' \otimes \rho') \right] = \underset{U_1}{\mathbb{E}} \left[ \Tr(\sigma_z \rho')^2 \right] \geq 0.
\end{equation}
This is because $\Tr(\sigma_z \rho')^2$ is always a positive--valued random variable. 
\end{itemize}
Putting these analyses together, Eq.~\eqref{eq:collision_lower_bound} can be lower-bounded by $r^2$; that is, 
\begin{equation}
    \mathcal{Z} \geq r^2. 
\end{equation}

\noindent With this observation for the single-qubit case, we consider the general $n$-qubit case: 
\begin{equation}
    \mathcal{Z} 
    = \underset{\mathcal{B}}{\mathbb{E}} \left[ \sum_{p \in \{0, 3\}^n, p \neq 0^n} \Tr(\sigma_p \otimes \sigma_p ~\mathcal{C}(\ketbra{0}{0}) \otimes \mathcal{C}(\ketbra{0}{0})) \right]. 
\end{equation}
Let us fix ${p \in \{0, 3\}^n\backslash\{0^n\}}$ and consider 
\begin{equation}
    \underset{\mathcal{B}}{\mathbb{E}} \left[ \Tr(\sigma_p \otimes \sigma_p ~\mathcal{C}(\ketbra{0}{0}) \otimes \mathcal{C}(\ketbra{0}{0})) \right]. 
\end{equation}
Letting $\mathcal{C}$' denote the circuit obtained by removing the last layer of noise, we have 
\begin{equation}
\begin{aligned}
    &\underset{\mathcal{B}}{\mathbb{E}} \left[ \Tr(\sigma_p \otimes \sigma_p ~\mathcal{C}(\ketbra{0}{0}) \otimes \mathcal{C}(\ketbra{0}{0})) \right]
    \\&= \underset{\mathcal{B}}{\mathbb{E}} \left[ \Tr(\sigma_p \otimes \sigma_p ~ \mathcal{N}^{\otimes n} \circ \mathcal{C'}(\ketbra{0}{0}) \otimes \mathcal{N}^{\otimes n} \circ \mathcal{C'}(\ketbra{0}{0})) \right]
    \\&= 
    \underset{\mathcal{B}'}{\mathbb{E}} \left[ \Tr((\mathcal{N}^{\dagger})^{\otimes n}(\sigma_p) \otimes(\mathcal{N}^{\dagger})^{\otimes n}(\sigma_p) ~\mathcal{C}'(\ketbra{0}{0}) \otimes \mathcal{C}'(\ketbra{0}{0})) \right], 
    \end{aligned}
\end{equation}
where $\mathcal{B}'$ is the set of quantum channels obtained by removing the last layer of noise from the circuits in $\mathcal{B}$.
Now, we consider the expansion of $(\mathcal{N}^{\dagger})^{\otimes n}(\sigma_p)$. 
Let us write $p = p_1p_2\cdots p_n$, where $p_i \in \{0,3\}$ for $1\leq i \leq n$. 
Then, 
\begin{equation}~\label{eq:expand_N_dagger_n}
    (\mathcal{N}^{\dagger})^{\otimes n}(\sigma_p) = \bigotimes_{i = 1}^n \mathcal{N}^\dagger (\sigma_{p_i}). 
\end{equation}
Recalling \cref{eq:expand_N_dagger}, 
\begin{align}
    \mathcal{N}^\dagger (\sigma_{p_i}) = 
    \begin{cases}
        I_2 & p_i = 0 \\ 
        rI_2 + (1-q)(1-p)\sigma_z & p_i = 3~.
    \end{cases}
\end{align}
Note that $\mathcal{N}^\dagger(I_2) = I_2$ since the adjoint map of a quantum channel is unital. 
Substituting this relation to \cref{eq:expand_N_dagger_n}, we have 
\begin{equation}
    (\mathcal{N}^{\dagger})^{\otimes n}(\sigma_p) = r^{w_p}I_{2^n} + \sum_{\substack{q \in \{0,3\}^n\backslash\{0^n\} \\ w_q\leq w_p}} c_{q} \sigma_q, 
\end{equation}
where $c_q$ are nonnegative coefficients and $w_i$ is the number of non-identity Pauli operators in $\sigma_i$. 
Using this expression, 
\begin{align}
    &\underset{\mathcal{B}}{\mathbb{E}} \left[ \Tr(\sigma_p \otimes \sigma_p ~\mathcal{C}(\ketbra{0}{0}) \otimes \mathcal{C}(\ketbra{0}{0})) \right] 
    \\&= 
    r^{2w_p}\underset{\mathcal{B}'}{\mathbb{E}} \left[ \Tr(I_{2^n}\otimes I_{2^n} ~\mathcal{C}'(\ketbra{0}{0}) \otimes \mathcal{C}'(\ketbra{0}{0})) \right]\\
    &+ \sum_{\substack{q,q' \in \{0,3\}^n\backslash\{0^n\} \\ w_q\leq w_p}}c_qc_{q'} \underset{\mathcal{B}'}{\mathbb{E}} \left[ \Tr(\sigma_{q}\otimes \sigma_{q'} ~\mathcal{C}'(\ketbra{0}{0}) \otimes \mathcal{C}'(\ketbra{0}{0})) \right] \notag. 
\end{align}
The first term is 
\begin{equation}
    r^{2w_p}\underset{\mathcal{B}'}{\mathbb{E}} \left[ \Tr(I_{2^n}\otimes I_{2^n} ~\mathcal{C}'(\ketbra{0}{0}) \otimes \mathcal{C}'(\ketbra{0}{0})) \right] = r^{2w_p}.
\end{equation}
If $q \neq q'$, then with a similar argument as in \cref{eq:cross_term_1} and \cref{eq:cross_term_2}, we have 
\begin{equation}
    \underset{\mathcal{B}'}{\mathbb{E}} \left[ \Tr(\sigma_{q}\otimes \sigma_{q'} ~\mathcal{C}'(\ketbra{0}{0}) \otimes \mathcal{C}'(\ketbra{0}{0})) \right] = 0. 
\end{equation}
Here, we used the fact that the average statistics do not change even if one appends single-qubit Haar random unitary gates after a two-qubit Haar random unitary gate.
In addition, 
\begin{equation}
    \underset{\mathcal{B}'}{\mathbb{E}} \left[ \Tr(\sigma_{q}\otimes \sigma_{q} ~\mathcal{C}'(\ketbra{0}{0}) \otimes \mathcal{C}'(\ketbra{0}{0})) \right] = \underset{\mathcal{B}'}{\mathbb{E}} \left[ \Tr(\sigma_{q}\mathcal{C}'(\ketbra{0}{0}))^2 \right] \geq 0
\end{equation}
by a similar discussion in \cref{eq:positive_redidual}. 
By combining these observations, 
\begin{equation}
\label{cp}
    \mathcal{Z}\geq \sum_{p \in \{0, 3\}^n, p \neq 0^n} r^{2w_p}. 
\end{equation} 
Hence, by computing RHS with the binomial theorem, 
\begin{align}
        \mathcal{Z} 
        &\geq (1 + r^2)^{n} - 1, 
\end{align}
which completes the proof. 
\end{proof}

The lower bound~\eqref{eq:collision_lower_bound_statement} established in \cref{first_theorem} indicates that the scaled collision probability diverges in the limit of $n\to\infty$, \textit{i.e.}, the given circuit is not anticoncentrated, if the noise parameter $r \neq 0$. 
For example, recall that $r = q$ when $\mathcal{N} = \mathcal{N}^{(\textup{amp})}_q \circ \mathcal{N}^{(\textup{dep})}_p$. 
Hence, the circuit is not anticoncentrated as long as the circuit is affected by the amplitude damping noise, no matter how strong the depolarizing noise is. 
On the other hand, when $\mathcal{N} = \mathcal{N}^{(\textup{dep})}_p \circ \mathcal{N}^{(\textup{amp})}_q$, the noise parameter is $r = q(1-p)$.  
In this case, the circuit is not anticoncentrated for any positive $q$ unless $p = 1$. 
This slight difference originates from the order of the two kinds of noise. 
When the completely depolarizing noise comes after the amplitude damping noise, the completely depolarizing noise nullifies the effects of the amplitude damping noise. 

\begin{remark}
If we only have the depolarizing channel, then $q = 0$. Then, from \eqref{eq:rvalue}, $r = 0$. Hence, from \eqref{cp},
\begin{equation}
\mathcal{Z} \geq 0.
\end{equation}
This gives us a vacuous bound and our techniques fail to prove the lack of anticoncentration. This is consistent with the observation that random quantum circuits with the depolarizing noise indeed anticoncentrate \cite{Deshpande_2022}.
\end{remark}
\clearpage
\begin{remark}
While \cref{eq:vanish_cross_term} is obtained by a straightforward calculation with \cref{eq:werner_twirl_calc} in \cref{sec:Werner_twirl}, let us give qualitative reasoning for this relation. 
Observe that $M_{U_{1}}$ projects an input onto the symmetric subspace and the anti-symmetric subspace (See \cref{sec:Werner_twirl}). 
The projector $\Pi_{\mathrm{sym}}$ onto the symmetric subspace and the projector $\Pi_{\mathrm{antisym}}$ onto the anti-symmetric subspace can be expressed as 
\begin{align}
    \Pi_{\mathrm{sym}} 
    &= \ketbra{\Phi^{+}}{\Phi^{+}}+ \ketbra{\Phi^{-}}{\Phi^{-}} + \ketbra{\Psi^{+}}{\Psi^{+}} \\ 
    \Pi_{\mathrm{antisym}} 
    &= \ketbra{\Phi^{-}}{\Phi^{-}}, 
\end{align}
where 
\begin{equation}
    \ket{\Phi^{+}} \coloneqq \dfrac{\ket{00} + \ket{11}}{\sqrt{2}}, 
    \ket{\Phi^{-}} \coloneqq \dfrac{\ket{00} - \ket{11}}{\sqrt{2}},
    \ket{\Psi^{+}} \coloneqq \dfrac{\ket{01} + \ket{10}}{\sqrt{2}},
    \ket{\Psi^{-}} \coloneqq \dfrac{\ket{01} - \ket{10}}{\sqrt{2}}
\end{equation}
are the Bell states. 
In fact, for any Bell state $\ket{\Xi}$, we have $\bra{\Xi}|\sigma_i \otimes \sigma_j|\ket{\Xi} = 0$ when $i\neq j$. 
An intuitive justification can be given with the following simultaneous measurement scenario. 
Suppose that two parties---Alice and Bob--- share a Bell state $\ket{\Xi}$. 
Alice has an observable $\sigma_i$, and Bob has an observable $\sigma_j$. 
They measure their observables with the state $\ket{\Xi}$. 
In this case, the expectation value of this measurement is zero. 
Indeed, Bob obtains $1$ with probability $\tfrac{1}{2}$ and obtains $-1$ with probability $\tfrac{1}{2}$ regardless of Alice's measurement result because they share a maximally entangled state and they locally have different Pauli operators. 
Hence, from this observation, we have 
\begin{equation}
    \langle\sigma_i\otimes \sigma_j,\Pi_{\mathrm{sym}}\rangle = \langle\sigma_i\otimes \sigma_j,\Pi_{\mathrm{antisym}}\rangle = 0.  
\end{equation}
Hence, $\sigma_i\otimes \sigma_j$ has no component on the symmetric and anti-symmetric subspaces. 
Thus,  
\begin{equation}
    M_{U_{1}}[\sigma_i \otimes \sigma_j] = 0. 
\end{equation}
Note that this does not contradict that $M_{U_{1}}$ is a CPTP map because $\Tr[\sigma_i \otimes \sigma_j] = 0$ when $i\neq j$. 
\end{remark}

\section{Lack of anticoncentration using typical probabilities: preliminaries}
\label{sec1_typical}

In the next sections, we will prove that our setup exhibits a lack of anticoncentration according to \cref{newdef: concentration}. Before doing that, it will be helpful to prove some useful results.

To that effect, in this section, we calculate the first moment of two different noise models. We divide our proof into many subparts. First, let us calculate the exact expression for the first moment of the output probabilities of our distribution. We will then argue about typical probabilities, first in the low depth regime using a lightcone argument, and then in the high depth regime, using a second--moment inequality.

\subsection{First moment of output probabilities}

\begin{thm}
\label{first moment}
Let $\mathcal{B}$ be an ensemble of noisy random quantum circuits with noise channel $\mathcal{N}$. Furthermore, for a particular $x \in \{0, 1\}^n$, let $p_x$ be the corresponding outcome probability.
Then, the following hold. 
\begin{enumerate}
    \item If $\mathcal{N} = \mathcal{N}^{(\textup{amp})}_q\circ \mathcal{N}^{(\textup{dep})}_p$, 
    \begin{equation}
        \underset{\mathcal{B}}{{\mathbb{E}}}[p_{x}] = \frac{(1-q)^{w_x} (1+q)^{n-w_x}}{2^{n}}. 
    \end{equation}
    \item If $\mathcal{N} = \mathcal{N}^{(\textup{dep})}_p\circ \mathcal{N}^{(\textup{amp})}_q$, 
    \begin{equation}
        \underset{\mathcal{B}}{{\mathbb{E}}}[p_{x}] = \frac{(1-(1-p)q)^{w_x} (1+(1-p)q)^{n-w_x}}{2^{n}}. 
    \end{equation}
\end{enumerate}
\end{thm}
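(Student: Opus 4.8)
The plan is to apply the ``remove the last layer of noise'' reduction from the introduction and thereby express the first moment as a single trace against the maximally mixed state. Writing $\mathcal{C} = \mathcal{N}^{\otimes n}\circ\mathcal{C}'$ and moving the final noise layer onto the observable via its adjoint, I would start from
\begin{equation}
\underset{\mathcal{B}}{\mathbb{E}}[p_x] = \Tr\!\left((\mathcal{N}^{\dagger})^{\otimes n}(|x\rangle\langle x|)~\underset{\mathcal{B}'}{\mathbb{E}}\!\left[\mathcal{C}'(|0^n\rangle\langle 0^n|)\right]\right),
\end{equation}
where $\mathcal{B}'$ is the ensemble with the last noise layer stripped, so that $\mathcal{C}'$ terminates in a layer of independent Haar-random gates covering every qubit.

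The first real step is to show that $\underset{\mathcal{B}'}{\mathbb{E}}[\mathcal{C}'(|0^n\rangle\langle 0^n|)] = I_{2^n}/2^n$. Because the last-layer gates are independent of all earlier gates and, by the parallel property, every qubit is acted on by a Haar-random (one- or two-qubit) gate, I would average these gates one block at a time using $\underset{U}{\mathbb{E}}[U\sigma U^{\dagger}] = \Tr(\sigma)\,I/d$ for a Haar-random gate on a $d$-dimensional block. Averaging each block replaces it by the maximally mixed state, and iterating over all blocks together with $\Tr(\tilde{\rho})=1$ for the (random) pre-final state $\tilde{\rho}$ yields the claim. This is the only place the architecture enters, and it is the step I would be most careful to state cleanly.

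Having reduced to $\underset{\mathcal{B}}{\mathbb{E}}[p_x] = 2^{-n}\,\Tr\big((\mathcal{N}^{\dagger})^{\otimes n}(|x\rangle\langle x|)\big)$, I would factorize using $(\mathcal{N}^{\dagger})^{\otimes n}(|x\rangle\langle x|) = \bigotimes_{i=1}^n \mathcal{N}^{\dagger}(|x_i\rangle\langle x_i|)$ and the adjoint identity $\Tr(\mathcal{N}^{\dagger}(A)) = \Tr(A\,\mathcal{N}(I_2))$, obtaining
\begin{equation}
\underset{\mathcal{B}}{\mathbb{E}}[p_x] = \frac{1}{2^n}\prod_{i=1}^n \langle x_i|\,\mathcal{N}(I_2)\,|x_i\rangle.
\end{equation}
Everything then reduces to evaluating the single diagonal operator $\mathcal{N}(I_2)$.

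Finally I would compute $\mathcal{N}(I_2)$ in each ordering. Depolarizing noise is unital, so $\mathcal{N}^{(\textup{dep})}_p(I_2)=I_2$, while the amplitude-damping channel sends $I_2$ to $\mathrm{diag}(1+q,\,1-q)$. For $\mathcal{N}=\mathcal{N}^{(\textup{amp})}_q\circ\mathcal{N}^{(\textup{dep})}_p$ this gives $\mathcal{N}(I_2)=\mathrm{diag}(1+q,\,1-q)$, so each $x_i=0$ contributes $1+q$ and each $x_i=1$ contributes $1-q$; collecting the $w_x$ ones and $n-w_x$ zeros gives the first formula. For $\mathcal{N}=\mathcal{N}^{(\textup{dep})}_p\circ\mathcal{N}^{(\textup{amp})}_q$, applying $\mathcal{N}^{(\textup{dep})}_p(Y)=(1-p)Y+\tfrac{p}{2}\Tr(Y)I_2$ to $Y=\mathrm{diag}(1+q,\,1-q)$ gives $\mathcal{N}(I_2)=\mathrm{diag}(1+(1-p)q,\,1-(1-p)q)$, which yields the second formula. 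I expect no conceptual obstacle here; the only thing requiring care is the bookkeeping of the composition order and signs in the diagonal entries, together with the clean justification of the maximally-mixed-state step above.
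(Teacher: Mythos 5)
Your proposal is correct, and its skeleton is the same as the paper's: strip the last layer of noise, move it onto the measurement operator via the adjoint map, and use that the Haar average of the stripped circuit is maximally mixed, $\underset{\mathcal{B}'}{\mathbb{E}}[\mathcal{C}'(\ket{0^n}\bra{0^n})] = I_{2^n}/2^n$. Where you genuinely differ is the endgame. The paper expands $(\mathcal{N}^{\dagger})^{\otimes n}(\ket{x}\bra{x})$ in the computational basis, producing a sum over all $y \in \{0,1\}^n$ with per-qubit coefficients $c^{(x,y)}_k$, and then collapses that sum with the binomial theorem; the case split between the two noise orderings is carried through this entire computation. You instead apply trace duality a second time, $\Tr(\mathcal{N}^{\dagger}(A)) = \Tr(A\,\mathcal{N}(I_2))$, which eliminates the sum over $y$ altogether and reduces everything to the diagonal entries of the single operator $\mathcal{N}(I_2)$. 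This buys two things: the algebra is shorter, with the ordering of the two channels entering only at the final evaluation of $\mathcal{N}(I_2)$; and your intermediate identity
\begin{equation}
\underset{\mathcal{B}}{\mathbb{E}}[p_x] = \frac{1}{2^n}\prod_{i=1}^n \bra{x_i}\mathcal{N}(I_2)\ket{x_i}
\end{equation}
holds verbatim for an \emph{arbitrary} single-qubit channel $\mathcal{N}$, and is precisely the quantity $\bra{x_i}\mathcal{N}(I_2)\ket{x_i}$ that the paper needs later (in \cref{concentration} and \cref{conditional}) and in its generalization to arbitrary noise. The one step you flag as delicate, that the final parallel layer of independent Haar gates twirls any input state to $I_{2^n}/2^n$, is treated tersely in the paper as well ("by considering the expectation over the last layer of random unitaries"), and your block-by-block averaging argument is a valid way to make it precise.
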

\begin{proof}
We will prove the two cases separately. 
\begin{enumerate} \item Observe that 
\begin{align}
\label{relation_1}
\left(\mathcal{N}^{(\textup{dep})}_p\right)^{\dagger}\circ \left(\mathcal{N}^{(\textup{amp})}_q\right)^{\dagger}\left(\ketbra{0}{0}\right) &= \left(1 - \frac{p}{2} + \frac{pq}{2} \right)\ketbra{0}{0} + \left(q + \frac{p}{2} - \frac{pq}{2} \right)\ketbra{1}{1}, \\
\label{relation_2}
\left(\mathcal{N}^{(\textup{dep})}_p\right)^{\dagger}\circ \left(\mathcal{N}^{(\textup{amp})}_q\right)^{\dagger}\left(\ketbra{1}{1}\right) &= \left(\frac{p}{2} - \frac{pq}{2} \right)\ketbra{0}{0} + \left(1 - q - \frac{p}{2} + \frac{pq}{2} \right)\ketbra{1}{1},
\end{align}
where for a quantum channel $\mathcal{N}$, $\mathcal{N}^{\dagger}$ represents its adjoint map. Now, construct a new ensemble $\mathcal{B}'$ by removing the last layer of the noise channel from the circuits in $\mathcal{B}$. 
Let $\mathcal{C}$ be a quantum circuit in $\mathcal{B}$, and let $\mathcal{C}'$ be the circuit obtained by removing the last layer of noise, namely, 
\begin{equation}
    \mathcal{C} = \mathcal{N}^{\otimes n}\circ \mathcal{C}'. 
\end{equation}
By the definition of the adjoint map, 
\begin{equation}
    \underset{\mathcal{B}}{\mathbb{E}}\left[p_{x} \right] 
    = 
    \underset{\mathcal{B}}{\mathbb{E}}\Tr\left[
    \ketbra{x}{x} \mathcal{C}\left(\ketbra{0^n}{0^n}\right)
    \right] 
    = \underset{\mathcal{B}'}{\mathbb{E}}\Tr\left[\left(\mathcal{N}^\dagger\right)^{\otimes n} \left(\ketbra{x}{x}\right) \mathcal{C}'\left(\ketbra{0^n}{0^n}\right)\right].
\end{equation}
By computing $\left(\mathcal{N}^\dagger\right)^{\otimes n} \left(\ketbra{x}{x}\right)$ using \cref{relation_1} and \cref{relation_2}, we have 
\begin{equation}
    \left(\mathcal{N}^\dagger\right)^{\otimes n} \left(\ketbra{x}{x}\right) =  \sum_{y \in \{0,1\}^n} \left(\prod_{k=1}^n c^{(x,y)}_{k}\right)\ketbra{y}{y},  
\end{equation}
where $c^{(x,y)}_{k}$ is defined by 
\begin{equation}
    c^{(x,y)}_{k} = \left\{ 
    \begin{array}{cc}
         \left(1 - \dfrac{p}{2} + \dfrac{pq}{2} \right) & \left(x_k,y_k\right) = (0,0) \\ \\
         \left(q + \dfrac{p}{2} - \dfrac{pq}{2} \right)& \left(x_k,y_k\right) = (0,1)\\ \\
        \left(\dfrac{p}{2} - \dfrac{pq}{2} \right) 
        & \left(x_k,y_k\right) = (1,0)\\ \\
        \left(1 - q - \dfrac{p}{2} + \dfrac{pq}{2} \right) & \left(x_k,y_k\right) = (1,1).
    \end{array}
    \right.
\end{equation}
On the other hand, we have 
\begin{equation}
    \underset{\mathcal{B}'}{\mathbb{E}}\left[\mathcal{C}'\left(\ketbra{0^n}{0^n}\right)\right] = \frac{I_{2^n}}{2^n}
\end{equation}
by considering the expectation over the last layer of random unitaries. 
Therefore, by combining these equations, 
\begin{align}
\label{hiding}
    \underset{\mathcal{B}}{\mathbb{E}}\left[p_{x} \right] 
    = \sum_{y \in \{0,1\}^n} \left(\prod_{k=1}^n c^{(x,y)}_{k}\right)\Tr\left[
    \ketbra{y}{y} \frac{I_{2^n}}{2^n}
    \right]
    =  \frac{1}{2^n} \sum_{y \in \{0,1\}^n} \left(\prod_{k=1}^n c^{(x,y)}_{k}\right) 
\end{align}
We can compute \cref{hiding} as 
\begin{align}
     \underset{\mathcal{B}}{\mathbb{E}}[p_{x}] 
     &= \frac{1}{2^n} \sum_{y \in \{0,1\}^n}
     \left(\prod_{k=1}^n c^{(x,y)}_{k}\right) \\ 
     \label{eq:px_intermediate}
     &= \frac{1}{2^n} \sum_{l=0}^{n-w_x} \binom{n-w_x}{l} \left(1 - \dfrac{p}{2} + \dfrac{pq}{2} \right)^l \left(q + \dfrac{p}{2} - \dfrac{pq}{2} \right)^{(n-w_x) - l} \\ &\quad \times \sum_{l'=0}^{w_x} \binom{w_x}{l'}  \left(\dfrac{p}{2} - \dfrac{pq}{2} \right)^{l'} 
        \left(1 - q - \dfrac{p}{2} + \dfrac{pq}{2} \right)^{w_x-l'} \\ 
    \label{eq:px_intermediate_2}
    &= \frac{1}{2^n}\left(1 - \dfrac{p}{2} + \dfrac{pq}{2} + q + \dfrac{p}{2} - \dfrac{pq}{2}\right)^{n-w_x}\left(\dfrac{p}{2} - \dfrac{pq}{2} + 1 - q - \dfrac{p}{2} + \dfrac{pq}{2}\right)^{w_x} \\ 
    &= \frac{(1+q)^{n-w_x}(1-q)^{w_x}}{2^n}, 
\end{align}
where to obtain \cref{eq:px_intermediate}, we divided the cases based on the hamming weight of string $x$. 
Then, \cref{eq:px_intermediate_2} follows from the binomial theorem. 

\item In this case, we have 
\begin{align}  \left(\mathcal{N}^{(\textup{amp})}_q\right)^{\dagger}\circ \left(\mathcal{N}^{(\textup{dep})}_p\right)^{\dagger}\left(\ketbra{0}{0}\right) &= \left(1 - \frac{p}{2} \right)\ketbra{0}{0} + \left(q(1-p) + \frac{p}{2} \right)\ketbra{1}{1}, \\
\left(\mathcal{N}^{(\textup{amp})}_q\right)^{\dagger}\circ \left(\mathcal{N}^{(\textup{dep})}_p\right)^{\dagger}\left(\ketbra{1}{1}\right) &= \left(\frac{p}{2} \right)\ketbra{0}{0} + \left(1 - q(1-p) - \frac{p}{2}  \right)\ketbra{1}{1}.      
\end{align}
With the same argument from case 1, we have 
\begin{equation}
    \underset{\mathcal{B}}{\mathbb{E}}[p_{x}] 
     = \frac{(1+q(1-p))^{n-w_x}(1-q(1-p))^{w_x}}{2^n}. 
\end{equation}
\end{enumerate}
\end{proof}

From \cref{first moment}, it is evident that the expected output probabilities of strings with higher Hamming weights are exponentially suppressed with respect to the Hamming weight.

\subsection{A discussion on marginal probabilities}
\label{marginal}
Our calculations to prove lack of anticoncentration, with respect to \cref{newdef: concentration}, in the low depth regime requires an argument based on lightcones. As we will soon see, this necessitates that we compute marginal probabilities, where the order of the marginal is determined by the size of the lightcone. The following corollary is immediate from \cref{first moment}.
\begin{corll}
\label{marginals}
Let $\mathcal{B}$ be an ensemble of noisy random quantum circuits with noise channel $\mathcal{N}$. 
For a binary string $x\in\{0,1\}^n$, consider a substring $y$ of $x$ with length $|y|$ and Hamming weight $w_y$.
Let $p_y$ be the corresponding marginal probability. 
Then, the following hold. 
\begin{enumerate}
    \item If $\mathcal{N} = \mathcal{N}^{(\textup{amp})}_q\circ \mathcal{N}^{(\textup{dep})}_p$, 
    \begin{equation}
        \underset{\mathcal{B}}{{\mathbb{E}}}[p_{y}] = \frac{(1-q)^{w_y} (1+q)^{|y|-w_y}}{2^{|y|}}. 
    \end{equation}
    \item If $\mathcal{N} = \mathcal{N}^{(\textup{dep})}_p\circ \mathcal{N}^{(\textup{amp})}_q$, 
    \begin{equation}
        \underset{\mathcal{B}}{{\mathbb{E}}}[p_{y}] = \frac{(1-(1-p)q)^{w_y} (1+(1-p)q)^{|y|-w_y}}{2^{|y|}}. 
    \end{equation}
\end{enumerate}
\end{corll}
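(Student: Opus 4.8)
The plan is to derive \cref{marginals} as a direct specialization of \cref{first moment}, exploiting the fact that the marginal probability of a substring is itself an output probability of a smaller effective circuit. The key observation is that computing a marginal probability on a substring $y$ of length $|y|$ amounts to tracing out the complementary qubits, which is equivalent to projecting those qubits onto the identity operator rather than onto a specific $\ketbra{0}{0}$ or $\ketbra{1}{1}$.

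\medskip

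\noindent The main steps would proceed as follows. First, I would reuse the ``removing the last layer of noise'' technique from \cref{first moment}: writing $\mathcal{C} = \mathcal{N}^{\otimes n}\circ\mathcal{C}'$ and passing to the adjoint, so that $\underset{\mathcal{B}}{\mathbb{E}}[p_y] = \underset{\mathcal{B}'}{\mathbb{E}}\Tr\left[(\mathcal{N}^\dagger)^{\otimes n}\!\left(\ketbra{y}{y}\otimes I_{\mathrm{rest}}\right)\mathcal{C}'(\ketbra{0^n}{0^n})\right]$, where $I_{\mathrm{rest}}$ acts on the $n-|y|$ qubits outside the substring. Second, I would use unitality of the adjoint channel, $\mathcal{N}^\dagger(I_2) = I_2$, which is exactly the fact already invoked in the proof of \cref{first_theorem}, to conclude that $(\mathcal{N}^\dagger)^{\otimes n}$ acts as the identity on every traced-out qubit. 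This collapses the expression to precisely the $|y|$-qubit version of \cref{hiding}, now with $\underset{\mathcal{B}'}{\mathbb{E}}[\mathcal{C}'(\ketbra{0^n}{0^n})] = I_{2^n}/2^n$ supplying the normalization $1/2^{|y|}$ after tracing against the identity on the remaining qubits.

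\medskip

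\noindent Third, I would run the same binomial computation as in \cref{eq:px_intermediate}--\cref{eq:px_intermediate_2}, but over the $|y|$ relevant qubits, splitting the product over the coefficients $c^{(y,\cdot)}_k$ according to whether each bit of $y$ is $0$ or $1$. The two factors sum to $(1+q)$ over the zero-bits and $(1-q)$ over the one-bits (respectively $(1\pm(1-p)q)$ in the second noise ordering), yielding the claimed formulas with $n$ and $w_x$ replaced by $|y|$ and $w_y$. Because both noise orderings were already worked out in \cref{first moment}, both cases follow by the identical argument.

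\medskip

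\noindent The only genuine subtlety — not really an obstacle — is justifying that the identity insertions on the traced-out qubits survive the adjoint and decouple cleanly, i.e.\ that the marginal reduces to an independent product over the substring qubits. This is where unitality of $\mathcal{N}^\dagger$ does all the work, and it is the same property used throughout the preceding proofs, so I expect the corollary to follow almost immediately once the marginal is rephrased as a trace against $\ketbra{y}{y}\otimes I_{\mathrm{rest}}$. I would likely present it as a one-paragraph remark: ``the same computation as in \cref{first moment}, with the identity on the complementary qubits contributing only through unitality, gives the result.''
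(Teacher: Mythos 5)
Your proposal is correct, but it takes a genuinely different route from the paper. The paper proves \cref{marginals} by treating \cref{first moment} as a black box: it writes the marginal as a sum over completions of the substring to full $n$-bit strings (explicitly, $p_{x_1\cdots x_{n-1}} = p_{x_1\cdots x_{n-1}0} + p_{x_1\cdots x_{n-1}1}$), applies linearity of expectation and the first-moment formula to each completed string, and factors the resulting sum; this is carried out only for $|y| = n-1$, with the remaining cases asserted to ``follow similarly'' (effectively an induction on discarded bits). You instead reopen the proof machinery of \cref{first moment}: you rewrite the marginal as a trace against $\ketbra{y}{y}\otimes I_{\mathrm{rest}}$, pass to the adjoint channel, and use unitality of $\mathcal{N}^{\dagger}$ so that the traced-out qubits contribute only a factor of $\Tr[I_2]^{\,n-|y|}$ against $\underset{\mathcal{B}'}{\mathbb{E}}[\mathcal{C}'(\ketbra{0^n}{0^n})] = I_{2^n}/2^n$, after which the computation over the $|y|$ substring qubits is identical to the one in \cref{first moment}. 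Both arguments are sound. What your approach buys is uniformity and completeness: it handles every marginal order $|y|$ in a single stroke, with no ``other cases follow similarly,'' and it makes transparent exactly why the complementary qubits decouple (unitality of the adjoint). What the paper's approach buys is brevity and modularity: it needs nothing beyond the statement of \cref{first moment} and linearity of expectation, so it reads as a genuine corollary rather than a second run of the proof.
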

\begin{proof}
    We only show the proof for $|y| = n-1$. 
    The other cases also follow similarly.
    Without loss of generality, we may assume $y = x_1x_2\cdots x_{n-1}$, where $x_i$ denotes the $i^{\text{th}}$ bit of $x$; that is, $y$ is the substring obtained by discarding the last bit of $x$. 
    Consider the case where $\mathcal{N} = \mathcal{N}^{(\textup{amp})}_q\circ \mathcal{N}^{(\textup{dep})}_p$.  
    The other case similarly holds. 

    Since the marginal probability $p_{x_1x_2\cdots x_{n-1}}$ is the sum of $p_{x_1x_2\cdots x_{n-1}0}$ and $p_{x_1x_2\cdots x_{n-1}1}$, 
    we have 
    \begin{align}
           \underset{\mathcal{B}}{{\mathbb{E}}}\left[p_{y}\right] 
           &= \underset{\mathcal{B}}{{\mathbb{E}}}\left[p_{x_1x_2\cdots x_{n-1}}\right] \\ 
           &= \underset{\mathcal{B}}{{\mathbb{E}}}\left[p_{x_1x_2\cdots x_{n-1}0} + p_{x_1x_2\cdots x_{n-1}1}\right] \\ 
           & =\underset{\mathcal{B}}{{\mathbb{E}}}\left[p_{x_1x_2\cdots x_{n-1}0} \right] + \underset{\mathcal{B}}{{\mathbb{E}}}\left[ p_{x_1x_2\cdots x_{n-1}1}\right]
    \end{align}
    Now, from \cref{first moment}, 
    \begin{align}
        \underset{\mathcal{B}}{{\mathbb{E}}}\left[p_{x_1x_2\cdots x_{n-1}0} \right] &= \dfrac{(1+q)^{n - w_y}(1-q)^{w_y}}{2^n}, \\ 
        \underset{\mathcal{B}}{{\mathbb{E}}}\left[ p_{x_1x_2\cdots x_{n-1}1}\right] &= \dfrac{(1+q)^{n - (w_y+1)}(1-q)^{w_y+1}}{2^n}. 
    \end{align}
    Therefore, 
    \begin{align}
        \underset{\mathcal{B}}{{\mathbb{E}}}\left[p_{y}\right] 
        & = \dfrac{(1+q)^{n - w_y}(1-q)^{w_y}}{2^n} + \dfrac{(1+q)^{n - (w_y+1)}(1-q)^{w_y+1}}{2^n} \\ 
        &= \dfrac{(1+q)^{(n-1) - w_y}(1-q)^{w_y}}{2^n}\left((1 + q) + (1 -q )\right)\\ 
        &= \dfrac{(1+q)^{(n-1) - w_y}(1-q)^{w_y}}{2^{n-1}} \\ 
        &= \dfrac{(1+q)^{|y| - w_y}(1-q)^{w_y}}{2^{|y|}}, 
    \end{align}
    where the last equality follows because $|y| = |x_1x_2\cdots x_{n-1}| = n-1$. 
\end{proof}

\section{Lack of anticoncentration at low depth}
\label{section: low depth}
In this section, we prove that for a sublogarithmic depth noisy random circuit ensemble, the probability weight on strings with Hamming weight at least $\frac{n}{2}$ is negligible in most circuits of the ensemble. This means that these circuits exhibit a lack of anticoncentration, as defined in \cref{newdef: concentration}. Our analysis is in spirit of the fine graining in the spirit of \cref{newdef2}.

\subsection{Notations and useful facts}
Note that introducing random variables $X_i$, corresponding to the $i^{\text{th}}$ bit of $n$-bit string $x$, we may write 
\begin{equation}
p_x = \Pr[X_1 = x_1] \Pr[X_2 = x_2 | X_1 = x_1] \cdots \Pr[X_n = x_n | X_1 = x_1, X_2 = x_2, \ldots, X_{n-1} = x_{n-1}].  
\end{equation}
Observe that $p_x$ is also a random variable, by definition.
\clearpage
\noindent Hence,
\begin{equation}
-\log p_x = -\frac{1}{n!} \underset{\sigma \in S_n}{\sum} ~\underset{i = 1}{\overset{n}{\sum}} \log  \Pr\bigg(X_{\sigma(i)} = x_{\sigma(i)} | X_{\sigma(1)} = x_{\sigma(1)}, \ldots, X_{\sigma(i-1)} = x_{\sigma(i-1)}\bigg),
\end{equation}
where $S_n$ is the permutation group on $n$ qubits. 
For a set of non-negative integers $J_i$, such that $i \notin J_i$, define
\begin{equation}
\langle Z_i \rangle_{J_i} = 2 \mathsf{
Pr}\big(X_i = x_i | \{X_j = x_j\}_{j \in J_i}\big) - 1.
\end{equation}
If there is no conditional dependence, then we drop the subscript $J_i$. Finally, let
\begin{equation}
A_{\sigma} = -\sum_{i=1}^{n}\langle Z_{\sigma(i)} \rangle_{\sigma(1, 2, \ldots, i-1)}.
\end{equation}
Using a lightcone type argument, the stated lemma follows. 
\begin{lemma}[\cite{Deshpande_2022}, Eq.~$58$]
\label{gullans paper}
For a quantum circuit $\mathcal{C}$, and for any $x \in \{0, 1\}^n$, let $p_x$ be the probability of getting $x$ in the output. Then,
\begin{equation}
-\log p_x \geq n \log 2 + \frac{1}{n!} \underset{\sigma \in S_n}{\sum} A_{\sigma} + \frac{1}{4 \cdot 4^d} \sum_{i=1}^n \langle Z_i \rangle^2.
\end{equation}
\end{lemma}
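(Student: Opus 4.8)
The plan is to begin from the symmetrized chain-rule expression for $-\log p_x$ recorded just above the statement and rewrite each single-bit conditional probability in terms of the variables $\langle Z_i\rangle_{J_i}$. By definition $\Pr(X_{\sigma(i)} = x_{\sigma(i)} \mid \cdots) = \tfrac{1}{2}\bigl(1 + \langle Z_{\sigma(i)}\rangle_{\sigma(1,\ldots,i-1)}\bigr)$, so each summand $-\log\Pr(\cdots)$ becomes $\log 2 - \log\bigl(1 + \langle Z_{\sigma(i)}\rangle_{\sigma(1,\ldots,i-1)}\bigr)$. Summing over $i$ and averaging over $\sigma$ isolates the constant $n\log 2$, while adding and subtracting the matching linear terms $\langle Z_{\sigma(i)}\rangle_{\sigma(1,\ldots,i-1)}$ reproduces exactly $\tfrac{1}{n!}\sum_\sigma A_\sigma$. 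This turns the identity into $-\log p_x = n\log 2 + \tfrac{1}{n!}\sum_\sigma A_\sigma + \tfrac{1}{n!}\sum_\sigma\sum_i f\bigl(\langle Z_{\sigma(i)}\rangle_{\sigma(1,\ldots,i-1)}\bigr)$, where $f(z) := z - \log(1+z)$, so the entire task reduces to lower bounding this nonnegative remainder by $\tfrac{1}{4\cdot 4^d}\sum_i \langle Z_i\rangle^2$.

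Next I would control $f$ pointwise. Since each $\langle Z\rangle$ lies in $[-1,1]$, I would use the elementary inequality $\log(1+z) \le z - \tfrac{z^2}{4}$ valid on $[-1,1]$, i.e.\ $f(z)\ge \tfrac{z^2}{4}$; this follows from $f(0)=0$, $f'(z)=z/(1+z)$ and $f''(z)=(1+z)^{-2}>0$, the worst constant being attained at $z=1$ where $1-\log 2 > \tfrac14$. Applying it termwise bounds the remainder from below by $\tfrac{1}{4}\cdot\tfrac{1}{n!}\sum_\sigma\sum_i \langle Z_{\sigma(i)}\rangle_{\sigma(1,\ldots,i-1)}^2$. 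The catch is that these are \emph{conditioned} single-qubit expectations, whereas the target sum involves the \emph{unconditioned} $\langle Z_i\rangle$.

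The heart of the argument is the lightcone estimate converting the averaged conditioned squares into unconditioned ones. Fix a qubit $j$ and let $N_j$ denote its interaction neighborhood, the set of qubits whose backward lightcone overlaps that of $j$. If a permutation $\sigma$ places $j$ before every other element of $N_j$, then at the step $\sigma(i)=j$ the conditioning set $\{\sigma(1),\ldots,\sigma(i-1)\}$ is disjoint from $N_j$, the already-measured qubits lie in causally disconnected regions, and hence $\langle Z_j\rangle_{\sigma(1,\ldots,i-1)} = \langle Z_j\rangle$. Because every squared term is nonnegative, I may discard all permutations except these, and the probability that $j$ precedes the rest of $N_j$ in a uniformly random order is $1/|N_j|$. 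For a depth-$d$ circuit of two-qubit gates each backward lightcone holds at most $2^d$ qubits, so $|N_j|\le 4^d$, yielding $\tfrac{1}{n!}\sum_\sigma \langle Z_j\rangle_{\sigma(1,\ldots,i-1)}^2 \ge 4^{-d}\langle Z_j\rangle^2$. Summing over $j$ and combining with the factor $\tfrac14$ produces the claimed $\tfrac{1}{4\cdot 4^d}\sum_i\langle Z_i\rangle^2$.

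The step I expect to be the main obstacle is rigorously justifying the conditional-independence claim: that conditioning on outcomes of qubits outside $N_j$ leaves the marginal on $j$ genuinely unchanged. This rests on the fact that the output state restricted to two regions with disjoint backward lightcones is a product state, each half being computed from the fixed input $\ket{0}$ on disjoint sets of input qubits, so the corresponding measurement outcomes are independent. Once that is in place, the permutation bookkeeping and the crude bound $|N_j|\le 4^d$ are routine.
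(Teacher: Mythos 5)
Your proposal is correct, and it reconstructs essentially the same argument as the source: the paper itself does not prove this lemma but imports it from \cite{Deshpande_2022} (Eq.~58), whose derivation is exactly your route --- the chain rule averaged over random measurement orderings, the pointwise bound $\log(1+z) \le z - \tfrac{z^2}{4}$ on $[-1,1]$ applied to each conditional term, and the permutation/lightcone counting (a qubit precedes its neighborhood $N_j$ with probability $1/|N_j| \ge 4^{-d}$, where conditioning on qubits with disjoint backward lightcones leaves $\langle Z_j\rangle$ unchanged). The only cosmetic gap is that your justification of $f(z) = z - \log(1+z) \ge \tfrac{z^2}{4}$ via convexity plus an endpoint check is informal; a one-line fix is $f(z) = \int_0^z \tfrac{t}{1+t}\,dt \ge \int_0^z \tfrac{t}{2}\,dt$ for $z \in [0,1]$ and the analogous estimate (with $1+t \le 1$) for $z \in [-1,0]$.
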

\subsection{Our results}
\noindent Now, we are ready to state our main theorem.

\begin{thm}
\label{concentration}
Let $\mathcal{B}$ be an ensemble of noisy random quantum circuits of depth $d$, with noise channel $\mathcal{N} = \mathcal{N}^{(\textup{amp})}_q\circ \mathcal{N}^{(\textup{dep})}_p$ or $\mathcal{N} = \mathcal{N}^{(\textup{dep})}_p\circ \mathcal{N}^{(\textup{amp})}_q$. 
Let $x \in \{0, 1\}^n$ and $w_x$ be the Hamming weight of $x$. Then, there exists a constant $t$ for every $x$ with $w_x \geq \frac{n}{2}$, and $\alpha \in (0, 1]$, such that,
\begin{equation}~\label{eq:low_depth_concentration}
\underset{n \rightarrow \infty}{\lim} \underset{\mathcal{B}}{\Pr}\left[p_x < \frac{\alpha}{2^n} \right] = 1,
\end{equation}
when $d < t \log (n)$.
\end{thm}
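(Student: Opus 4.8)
The plan is to bypass the full lightcone machinery and read the conclusion directly off the exact first--moment formula of \cref{first moment}, which is available at \emph{every} depth because the terminating layer of two--qubit Haar gates already averages the pre--measurement state to $I_{2^n}/2^n$. Write $\tilde q \coloneqq q$ when $\mathcal{N} = \mathcal{N}^{(\textup{amp})}_q\circ\mathcal{N}^{(\textup{dep})}_p$ and $\tilde q \coloneqq q(1-p)$ when $\mathcal{N} = \mathcal{N}^{(\textup{dep})}_p\circ\mathcal{N}^{(\textup{amp})}_q$, so that in both cases \cref{first moment} reads $2^n\,\mathbb{E}_{\mathcal{B}}[p_x] = (1-\tilde q)^{w_x}(1+\tilde q)^{n-w_x}$. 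The map $w\mapsto (1-\tilde q)^{w}(1+\tilde q)^{n-w}$ is strictly decreasing in $w$ whenever $\tilde q>0$, so for every $x$ with $w_x\geq n/2$ its value is at most $\big((1-\tilde q)(1+\tilde q)\big)^{n/2} = (1-\tilde q^2)^{n/2}$, which decays exponentially in $n$.

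With this estimate the theorem follows from one application of Markov's inequality. For any $\alpha\in(0,1]$,
\[
\mathsf{Pr}_{\mathcal{B}}\Big[p_x \geq \tfrac{\alpha}{2^n}\Big] \;\leq\; \frac{2^n\,\mathbb{E}_{\mathcal{B}}[p_x]}{\alpha} \;\leq\; \frac{(1-\tilde q^2)^{n/2}}{\alpha}\;\xrightarrow[n\to\infty]{}\;0,
\]
so $\mathsf{Pr}_{\mathcal{B}}[p_x<\alpha/2^n]\to 1$, which is exactly \eqref{eq:low_depth_concentration}. I would emphasise that this argument in fact proves the stronger statement that the limit is $1$ for \emph{every} $\alpha\in(0,1]$ simultaneously, and that it imposes no constraint on the depth; the hypothesis $d<t\log(n)$ is therefore met vacuously by any choice of $t$, and the low--depth/high--depth split is needed only for the technique, not for this conclusion.

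Since the surrounding section is built around the lightcone lemma \cref{gullans paper}, I would also sketch the route through that framework, because it is the one that survives to noise models admitting no closed form like \cref{first moment}. There one lower bounds $-\log p_x$ pointwise by $n\log 2 + \tfrac{1}{n!}\sum_\sigma A_\sigma + \tfrac{1}{4\cdot 4^d}\sum_i\langle Z_i\rangle^2$ and averages over $\mathcal{B}$. The single--qubit marginals of \cref{marginals} give $\mathbb{E}_{\mathcal{B}}[\langle Z_i\rangle] = \tilde q\,(-1)^{x_i}$, so that the linear correction contributes $\tilde q\,(2w_x-n)\geq 0$ for $w_x\geq n/2$, while the nonnegative quadratic term furnishes the strictly positive surplus one needs when $w_x$ only barely exceeds $n/2$. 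The depth bound $d<t\log(n)$ enters in two ways: it keeps each backward lightcone of size $O(\log n)$, so that for all but a vanishing fraction of orderings $\sigma$ the conditioning set $\sigma(1,\dots,i-1)$ misses the lightcone of $\sigma(i)$ and $\langle Z_{\sigma(i)}\rangle_{J}$ collapses to the unconditioned $\langle Z_{\sigma(i)}\rangle$ of \cref{marginals}; and it keeps $4^{-d}$ only polynomially small, so the quadratic term remains $\omega(1)$.

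The main obstacle in this second route, and the one entirely absent from the first--moment argument, is that the conditional quantities $\langle Z_i\rangle_{J_i}$ are ratios of marginal probabilities and so have no directly tractable ensemble expectation; the small--lightcone decoupling above is precisely the device that removes the conditioning, and bounding the contribution of the ``bad'' orderings whose conditioning does meet the lightcone is the single place where the depth restriction is genuinely used. For \cref{concentration} as stated, however, the first--moment estimate is both shorter and stronger, and that is the proof I would write down.
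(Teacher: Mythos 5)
Your first--moment argument is correct, and it is a genuinely different route from the paper's. The paper proves \cref{concentration} by combining the lightcone lower bound of \cref{gullans paper} with the conditional--expectation computation of \cref{conditional} (which gives $\mathbb{E}_{\mathcal{B}}[A_\sigma] = r(2w_x-n)$, with $r$ equal to your $\tilde q$), the lower bound $\mathbb{E}_{\mathcal{B}}[\langle Z_i\rangle^2]\geq b\,e^{-ad}$ of \cref{appendixA}, and Chebyshev's inequality with the variance bound $\mathsf{Var}(-\log p_x)\leq 2n$; the depth restriction $d<t\log(n)$ is consumed in keeping the term $\tfrac{b}{4}n e^{-cd}$ of order $\omega(n^{0.51})$. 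You replace all of this with Markov's inequality applied to the exact first moment of \cref{first moment}, and since that formula is depth--independent (it needs only the terminating layer of noise preceded by a parallel layer of Haar gates), you indeed get the conclusion at every depth and for every $\alpha\in(0,1]$ simultaneously, which is strictly stronger than the stated theorem in the regime where your argument applies. What the paper's heavier machinery buys is precisely the degenerate case your argument cannot touch: when $r=0$ (i.e.\ $q=0$, or $p=1$ in the ordering $\mathcal{N}^{(\textup{dep})}_p\circ\mathcal{N}^{(\textup{amp})}_q$), your bound $2^n\,\mathbb{E}_{\mathcal{B}}[p_x]\leq(1-r^2)^{n/2}$ degenerates to $1$ and Markov yields nothing, whereas the paper's term $\tfrac{b}{4}n e^{-cd}$ still forces the conclusion at sublogarithmic depth; the theorem as literally stated does not exclude $q=0$, so you should state the hypothesis $r>0$ explicitly, and with it your proof is complete and preferable. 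One correction to your sketch of the lightcone route: the conditioning in $\langle Z_{\sigma(i)}\rangle_{\sigma(1,\ldots,i-1)}$ is not removed by arguing that most orderings miss the lightcone --- \cref{conditional} shows that the ensemble expectation of the conditional probability equals $\tfrac12\bra{x_i}\mathcal{N}(I_2)\ket{x_i}$ \emph{exactly}, for any conditioning set, because the final single--qubit Haar gate plus noise on qubit $i$ decouples it in expectation; the depth enters only through the $4^{-d}$ prefactor in \cref{gullans paper} and the $e^{-ad}$ decay in \cref{appendixA}, together with the Chebyshev step.
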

\begin{proof}
From \cref{conditional}, we have 
\begin{equation}
\underset{\mathcal{B}}{\mathbb{E}}\left[\Pr\big(X_i = x_i | \{X_j = x_j\}_{j \in J_i}\big)\right] = \dfrac{1}{2}\left[\bra{x_i}\mathcal{N}\left(I_2\right)\ket{x_i}\right]. 
\end{equation} 
Here, $\bra{x_i}\mathcal{N}\left(I_2\right)\ket{x_i}$ can be evaluated as follows. 
\begin{enumerate}
    \item If $\mathcal{N} = \mathcal{N}^{(\textup{amp})}_q\circ \mathcal{N}^{(\textup{dep})}_p$, 
    \begin{equation}
    \bra{x_i}\mathcal{N}\left(I_2\right)\ket{x_i}= \left\{
    \begin{array}{cc}
       1 + q  & (x_i = 0) \\
       1 - q  & (x_i = 1)
    \end{array}. 
    \right. 
    \end{equation}
    \item If $\mathcal{N} = \mathcal{N}^{(\textup{dep})}_p\circ \mathcal{N}^{(\textup{amp})}_q$, 
    \begin{equation}
    \bra{x_i}\mathcal{N}\left(I_2\right)\ket{x_i}= \left\{
    \begin{array}{cc}
       1 + (1-p)q  & (x_i = 0) \\
       1 - (1-p)q  & (x_i = 1)
    \end{array}. 
    \right. 
    \end{equation}
\end{enumerate}
Let us write 
\begin{equation}
    \bra{x_i}\mathcal{N}\left(I_2\right)\ket{x_i}= \left\{
    \begin{array}{cc}
       1 + r  & (x_i = 0) \\
       1 - r  & (x_i = 1)
    \end{array}
    \right. 
    \end{equation}
with $0\leq r \leq 1$, so that we can cover both cases. 
The expectation value of $A_{\sigma}$ over $\mathcal{B}$ is computed as 
\begin{equation}
\label{eq:A_sigma}
\begin{aligned}
\underset{\mathcal{B}}{\mathbb{E}}[A_\sigma] 
&= \sum_{i=1}^n \left(1 - \left[\bra{x_i}\mathcal{N}\left(I_2\right)\ket{x_i}\right]\right) \\ 
&= 
(n-w_x)\left(1 - (1 + r)\right) + w_x \left(1 - (1 - r)\right)\\
&= 2 w_x r - nr.
\end{aligned}
\end{equation}
Additionally, from \cref{appendixA}, 
\begin{equation}
\label{eq2}
\underset{\mathcal{B}_d}{\mathbb{E}}[\langle Z_i \rangle^2] \geq b e^{-a d}, \\
\end{equation}
for some positive constant $a,b$, for any $i \in [n]$. 
Moreover, letting
\begin{equation}
\label{random variable}
X = - \log p_x, 
\end{equation}
from \cite{Deshpande_2022}, it follows that 
\begin{equation}
\underset{\mathcal{B}}{\mathsf{Var}}(X) \leq 2n.
\end{equation}
From \cref{gullans paper} and \cref{eq:A_sigma}, 
\begin{equation}
\label{expectation}
\underset{\mathcal{B}}{\mathbb{E}}[X] \geq n \log 2 + (2 w_x r - n r) +  \frac{b}{4}ne^{-c d},
\end{equation}
for $c = \log 4 + a$. 
By Chebyshev's inequality,
\begin{equation}
\Pr[|X - \mathbb{E}(X)| \geq k] \leq \frac{\underset{\mathcal{B}}{\mathsf{Var}}(X)}{k^2}.
\end{equation}
Taking, say, $k = n^{0.01} \sqrt{\underset{\mathcal{B}}{\mathsf{Var}}(X)}$, we have
\begin{equation}
\Pr[|X - \mathbb{E}(X)| \leq \mathcal{O}(n^{0.51})]  \geq 1 - \frac{1}{n^{0.0001}}.
\end{equation}
Hence,
\begin{equation}
\lim_{n \rightarrow \infty} \Pr[-X \leq -\mathbb{E}(X) + \mathcal{O}(n^{0.51})] = 1.
\end{equation}
Putting back the values, from \cref{random variable} and \cref{expectation},
\begin{equation}
\lim_{n \rightarrow \infty} \underset{\mathcal{B}}{\Pr}\left[p_x \leq 2^{-n} \mathsf{exp}\left( -2w_x r + n r -  \frac{b}{4}ne^{-c d} + \mathcal{O}(n^{0.51})\right)\right] = 1.
\end{equation}
Now, if  
\begin{equation}~\label{eq:low_depth_condition}
2r\left(w_x  - \frac{n}{2}\right) +  \frac{b}{4}ne^{-c d} = \omega(n^{0.51}), 
\end{equation}
then 
\begin{equation}
    \mathsf{exp}\left( -2w_x r + n r -  \frac{b}{4}ne^{-c d} + \mathcal{O}(n^{0.51})\right) = O(1), 
\end{equation}
and we have Eq.~\eqref{eq:low_depth_concentration}.  
For $w_x \geq \frac{n}{2}$, when we pick $d < \tfrac{0.49}{c}\log(n)$, Eq.~\eqref{eq:low_depth_condition} is satisfied, and thus the theorem follows.
\end{proof}

\section{Lack of anticoncentration at high depth}
\label{sec:high depth}
We need a different technique to analyze sufficiently deep circuits and show that they satisfy \cref{Definiton: concentration}. This is because the lightcone type arguments, in \cref{gullans paper}, break down when the lightcone sizes become too large for sufficiently deep circuits. 
Because of this, the technique we use is bounding the second moment of output probabilities and then using Chebyshev's inequality to show concentration around the mean for our desired probabilities. Just as in \cref{section: low depth}, our results are fine grained, in the spirit of \cref{newdef2}. 
\subsection{Useful properties of sufficiently deep noisy random quantum circuits}
Before stating the main results, we prove the following intermediate proposition.

\begin{propo}
\label{first thm}
Let $\mathcal{N}$ be a single qubit noise channel, and let $\mathsf{N} = \mathcal{N} \otimes \mathcal{N}$ be two copies of $\mathcal{N}$ acting on two qubits. 
Define a two-qubit operator
\begin{equation}
    \tilde{M}_{U_1, \mathsf{N}} = M_{U_1} \circ \mathsf{N} \circ M_{U_1}
\end{equation}
with 
\begin{equation}
M_{U_1}[\rho] = \underset{U_1 \sim \mathcal{U}_{\textup{Haar}}}{\mathbb{E}}\bigg[U_1^{\otimes 2} \rho U_1^{\dagger \otimes 2} \bigg]. 
\end{equation}
Suppose that 
\begin{align}
    \label{eq:noise_1}
    \tilde{M}_{U_1, \mathsf{N}}(I_4) &= (1 - a)I_4 + 2a S, \\ 
    \label{eq:noise_2}
    \tilde{M}_{U_1, \mathsf{N}}(S) &= bI_4 + (1 - 2b) S
\end{align}
with $a > 0$ and $b > 0$, 
where $I_4$ is the 2-qubit identity operator and $S$ is the 2-qubit SWAP gate. 
Then, 
\begin{equation}
\label{eq:noisy_2_random}
\begin{aligned}
   &M_{U_1} \circ \underbrace{(\mathsf{N} \circ M_{U_1}) \circ (\mathsf{N} \circ M_{U_1}) \circ \cdots \circ (\mathsf{N} \circ M_{U_1})}_{m~\text{times}} \left[ \ketbra{0}{0}^{\otimes 2} \right] \\ &= \frac{1}{10}(2I + S) + \left[\dfrac{-2a+b}{10(a+2b)} + (1-a-2b)^m \left(-\dfrac{1}{30} - \dfrac{-2a+b}{10(a+2b)}\right)\right] (I_4-2S).
  \end{aligned}
\end{equation}
\end{propo}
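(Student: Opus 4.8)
The plan is to treat $\tilde{M}_{U_1,\mathsf{N}}$ as a linear operator on the two-dimensional subspace spanned by $\{I_4, S\}$, diagonalize it, and iterate. The key observation is that the iterate in \eqref{eq:noisy_2_random} is exactly $m$ applications of $\mathsf{N}\circ M_{U_1}$, followed by one outer $M_{U_1}$, applied to $\ketbra{0}{0}^{\otimes 2}$. Since $M_{U_1}$ projects onto the Werner subspace $\mathrm{span}\{I_4, S\}$ (this is precisely the symmetric/antisymmetric projection discussed in the remark after \cref{first_theorem}), after the first application everything stays in this two-dimensional invariant subspace, and the hypotheses \eqref{eq:noise_1}--\eqref{eq:noise_2} tell me exactly how $\tilde{M}_{U_1,\mathsf{N}}$ acts there.

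First I would compute the initial vector: $M_{U_1}\left[\ketbra{0}{0}^{\otimes 2}\right]$ expressed in the $\{I_4, S\}$ basis. Since $\ketbra{0}{0}^{\otimes 2}$ is a pure product state lying entirely in the symmetric subspace, the standard second-moment formula (the same formulae from \cite{Collins_2006} used elsewhere in the paper) gives $M_{U_1}\left[\ketbra{0}{0}^{\otimes 2}\right] = \tfrac{1}{6}(I_4 + S)$. Then the quantity to evaluate is $M_{U_1}\circ \tilde{M}_{U_1,\mathsf{N}}^{\,m-1}\circ(\mathsf{N}\circ M_{U_1})$ applied to this, but more cleanly I would regroup the composition as $M_{U_1}$ acting on $\tilde{M}_{U_1,\mathsf{N}}^{\,m}$ applied to the initial seed, being careful about how the $m$ copies of $(\mathsf{N}\circ M_{U_1})$ bracket with the outer and inner $M_{U_1}$; the cleanest bookkeeping is to note $M_{U_1}$ is idempotent on this subspace so it can be freely inserted, letting me write the whole thing as $\tilde{M}_{U_1,\mathsf{N}}^{\,m}$ acting on $\tfrac16(I_4+S)$.

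Next I would represent $\tilde{M}_{U_1,\mathsf{N}}$ as a $2\times 2$ matrix in the (non-orthonormal) basis $(I_4, S)$ using \eqref{eq:noise_1}--\eqref{eq:noise_2}, namely the columns $(1-a,\,2a)^{T}$ and $(b,\,1-2b)^{T}$. I would find its eigenvalues, which should come out to $1$ (with eigenvector proportional to $bI_4 + 2aS$, the fixed point giving the $\tfrac1{10}(2I_4+S)$ normalization after rescaling) and $1-a-2b$ (with eigenvector proportional to $I_4 - 2S$). Decomposing the seed $\tfrac16(I_4+S)$ along these two eigenvectors and applying $\tilde{M}_{U_1,\mathsf{N}}^{\,m}$ multiplies each component by the respective eigenvalue raised to the $m$; reassembling and matching the coefficient of $(I_4 - 2S)$ against the fixed-point piece should reproduce \eqref{eq:noisy_2_random} exactly, including the $\tfrac{-2a+b}{10(a+2b)}$ and $-\tfrac1{30}$ constants.

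The main obstacle I anticipate is purely bookkeeping rather than conceptual: the composition in \eqref{eq:noisy_2_random} has $M_{U_1}$ appearing both as the outermost map and interleaved $m$ times, so I must verify carefully that idempotence of $M_{U_1}$ on the Werner subspace lets me collapse these into a clean $\tilde{M}_{U_1,\mathsf{N}}^{\,m}$ (or $\tilde{M}^{\,m}$ times a leftover factor) without an off-by-one error in the exponent, and that the seed really is $\tfrac16(I_4+S)$ under the trace-based inner product conventions the paper uses. Once the eigendecomposition is set up, matching the two numerical constants is a routine linear-algebra check, so I would leave those arithmetic verifications terse.
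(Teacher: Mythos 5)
Your overall strategy is sound and is essentially the route the paper takes: you collapse the composition using idempotence of $M_{U_1}$ into $\tilde{M}_{U_1,\mathsf{N}}^{\,m}$ acting on the seed $M_{U_1}\left[\ketbra{0}{0}^{\otimes 2}\right] = \tfrac16(I_4+S)$ (this regrouping works with exponent exactly $m$, no off-by-one), and then analyze the induced linear dynamics on $\mathrm{span}\{I_4,S\}$. The only packaging difference is that the paper does not diagonalize: it works in the basis $\{2I_4+S,\,I_4-2S\}$, in which $\tilde{M}_{U_1,\mathsf{N}}$ acts triangularly, and solves the resulting affine recurrence $x_{n+1}=(1-a-2b)x_n+\tfrac{-2a+b}{10}$. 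That is the same linear algebra in a different dress.

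However, your eigendecomposition is set up with a concrete error that would make the final step fail. The eigenvalue-$1$ eigenvector of $\tilde{M}_{U_1,\mathsf{N}}$ is proportional to $bI_4+aS$, not $bI_4+2aS$: from \eqref{eq:noise_1}--\eqref{eq:noise_2}, $\tilde{M}_{U_1,\mathsf{N}}(bI_4+aS)=\left[b(1-a)+ab\right]I_4+\left[2ab+a(1-2b)\right]S=bI_4+aS$, while $bI_4+2aS$ is not fixed. (What is true is that $(2,1)$ is the \emph{left} eigenvector for eigenvalue $1$ --- equivalently, trace preservation, since $\Tr(\alpha I_4+\beta S)=4\alpha+2\beta$ --- which is likely the source of the slip.) In particular the fixed point is \emph{not} proportional to $2I_4+S$ unless $b=2a$, so your parenthetical claim that it ``gives the $\tfrac1{10}(2I_4+S)$ normalization after rescaling'' is false, and decomposing the seed along $bI_4+2aS$ and $I_4-2S$ and multiplying the first component by $1^m$ yields a wrong formula. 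The repair is routine: with the correct eigenvectors the seed decomposes as $\tfrac16(I_4+S)=\tfrac{1}{2(a+2b)}(bI_4+aS)+\tfrac{a-b}{6(a+2b)}(I_4-2S)$, so after $m$ steps one gets $\tfrac{1}{2(a+2b)}(bI_4+aS)+\tfrac{a-b}{6(a+2b)}(1-a-2b)^m(I_4-2S)$; re-expanding the fixed-point piece as $\tfrac1{10}(2I_4+S)+\tfrac{-2a+b}{10(a+2b)}(I_4-2S)$ and checking that $\tfrac{a-b}{6(a+2b)}=-\tfrac1{30}-\tfrac{-2a+b}{10(a+2b)}$ reproduces \eqref{eq:noisy_2_random} exactly. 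Note, then, that the $\tfrac1{10}(2I_4+S)$ term in the statement is an artifact of the (trace-normalized) basis in which the answer is reported, not the fixed-point direction of the dynamics.
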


\begin{remark}
Without loss of generality, for any noise $\mathcal{N}$, the effect of noise against $I_4$ and $S$ in the noisy random circuit can be expressed as in Eqs.~\eqref{eq:noise_1} and \eqref{eq:noise_2}. 
A more detailed discussion is in Appendix~\ref{sec:Werner_twirl}. 
\end{remark}

\begin{proof}
First, note that by properties of the Haar measure,
\begin{equation}
\label{eq1}
    M_{U_1}[\rho] = M_{U_1} \circ M_{U_1}[\rho].
\end{equation}
Hence,
\begin{align}
    &M_{U_1} \circ \underbrace{(\mathsf{N} \circ M_{U_1}) \circ (\mathsf{N} \circ M_{U_1})\circ \cdots \circ (\mathsf{N} \circ M_{U_1}))}_{m~\text{times}} \left[ \ketbra{0}{0}^{\otimes 2} \right]  \\ &=\underbrace{(M_{U_1} \circ \mathsf{N} \circ M_{U_1}) \circ (M_{U_1} \circ \mathsf{N} \circ M_{U_1})\circ \cdots \circ (M_{U_1} \circ \mathsf{N} \circ M_{U_1})}_{m~\text{times}} \left[ \ketbra{0}{0}^{\otimes 2} \right]\\ &=\underbrace{\tilde{M}_{U_1, \mathsf{N}} \circ \tilde{M}_{U_1, \mathsf{N}}\circ \cdots \tilde{M}_{U_1, \mathsf{N}}}_{m~\text{times}} \circ M_{U_1}\left[\ketbra{0}{0}^{\otimes 2} \right] \\ 
    &= \underbrace{\tilde{M}_{U_1, \mathsf{N}} \circ \tilde{M}_{U_1, \mathsf{N}}\circ \cdots \tilde{M}_{U_1, \mathsf{N}}}_{m~\text{times}}\left[\frac{1}{6}(I_4+S) \right]
\end{align}
In the fourth line, we have used the fact that
\begin{equation}
    M_{U_1} \left[\ket{0}\bra{0}^{\otimes 2}\right] = \frac{1}{6}(I_4+S).
\end{equation}
Note that  
\begin{equation}
   \frac{1}{6}(I_4+S) = \frac{1}{10}(2I+S) - \frac{1}{30}(I_4 - 2S). 
\end{equation}
It can be verified that
\begin{align}
    \label{eq:relation1}
     \tilde{M}_{U_1, \mathsf{N}}\left[2I+S\right] 
     &= \left(2I + S\right) + (-2a+b)\left(I_4-2S\right), \\
      \label{eq:relation2}
     \tilde{M}_{U_1, \mathsf{N}}\left[I_4-2S\right] 
     &= (1-a-2b)\left(I_4 - 2S\right)
\end{align}
Thus, using Eqs.~\eqref{eq:relation1} and \eqref{eq:relation2} repeatedly, we have the following relation 
\begin{align}
    &\underbrace{\tilde{M}_{U_1, \mathsf{N}} \circ \tilde{M}_{U_1, \mathsf{N}}\circ \cdots \tilde{M}_{U_1, \mathsf{N}}}_{m~\text{times}}\left[\frac{1}{6}(I_4+S) \right] \\ 
    &= \frac{1}{10}\underbrace{\tilde{M}_{U_1, \mathsf{N}} \circ \tilde{M}_{U_1, \mathsf{N}}\circ \cdots \tilde{M}_{U_1, \mathsf{N}}}_{m~\text{times}}\left[(2I+S)\right] \\&-\frac{1}{30} \underbrace{\tilde{M}_{U_1, \mathsf{N}} \circ \tilde{M}_{U_1, \mathsf{N}}\circ \cdots \tilde{M}_{U_1, \mathsf{N}}}_{m~\text{times}}\left[(I_4-2S)\right] \notag \\ 
    &= \frac{1}{10} (2I+S) + x_m(I_4-2S), 
\end{align}
where $\{x_n\}_n$ is defined by the following recurrence relation 
\begin{align}
    x_0 &= -\frac{1}{30}, \\
    x_{n+1} &= (1-a-2b)x_n + \frac{-2a+b}{10} \quad\quad (n\geq 0).
\end{align}
Solving this relation, we have 
\begin{equation}
    x_m = \frac{-2a+b}{10(a+2b)} + (1-a-2b)^m \left(-\frac{1}{30} - \frac{-2a+b}{10(a+2b)}\right), 
\end{equation}
which leads to Eq.~\eqref{eq:noisy_2_random}. 
\end{proof}
\subsection{Second moment of output probabilities}
Now, we make a connection between the second moment of the $n$-qubit depth-$d$ noisy random circuit and the output of $2$-qubit depth-$d$ single-qubit-Haar-random circuit. 
We give the proof in Appendix~\ref{appendix:proof_of_CR}, using a stat--mech model.  
\clearpage
\begin{thm}
\label{second moment probabilities}
Consider an ensemble $\mathcal{B}$ of depth-$d$ noisy random quantum circuits characterized by a noisy channel $\mathcal{N} =  
\mathcal{N}^{(\textup{amp})}_q\circ \mathcal{N}^{(\textup{dep})}_p$ or $\mathcal{N}^{(\textup{dep})}_p\circ \mathcal{N}^{(\textup{amp})}_q$. 
Suppose that noise parameters $(p,q)$ satisfies $(p,q) \neq (0,0)$; that is, we will not consider the noiseless case. Then, for $x \in \{0, 1\}^n$ with the Hamming weight $w_x \geq \tfrac{n}{2}$,
\begin{equation}
    ~\label{eq:bound_1}
    \underset{\mathcal{B}}{\mathbb{E}}[p_x^{2}] \leq \mu^n\eta^n \exp\left[n\dfrac{\nu}{\mu} \mathrm{e}^{-c(d-1)}\right], 
\end{equation}
where, 
\begin{align}
    \mu &\coloneqq \frac{1}{4} + \frac{r^2}{12c} ~~~~~~(\geq 0), \\ 
    \nu &\coloneqq \dfrac{1}{12} - \dfrac{r^2}{12c} ~~~~~~(\geq 0), \\ 
    \eta &= 1 - r^2 ~~~~~~(\geq 0), 
\end{align}
with  
\begin{align}
    c &\coloneqq 1 - (1-p)^2(1-q)\left(1 - \frac{q}{3}\right) \\ 
    r &\coloneqq \Bigg\{ 
    \begin{array}{ll}
        q, & \mathcal{N} = 
        \mathcal{N}^{(\textup{amp})}_q \circ \mathcal{N}^{(\textup{dep})}_p, \\
        q(1-p), & \mathcal{N} = \mathcal{N}^{(\textup{dep})}_p \circ \mathcal{N}^{(\textup{amp})}_q.
    \end{array}
\end{align}
\end{thm}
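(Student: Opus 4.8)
The plan is to control $\mathbb{E}_{\mathcal{B}}[p_x^2]$ by moving to the two–copy (replica) picture, reducing the genuine $n$–qubit circuit to a tensor product of single–qubit evolutions, and reading off the per–qubit factors from \cref{first thm}. Writing $p_x^2 = \Tr\!\big[(\ketbra{x}{x})^{\otimes 2}\,\mathcal{C}(\ketbra{0^n}{0^n})^{\otimes 2}\big]$ and averaging, I would first remove the last layer of noise by its adjoint, so that the measurement operator on site $i$ becomes $\big(\mathcal{N}^{\dagger}(\ketbra{x_i}{x_i})\big)^{\otimes 2}$, with $\mathcal{N}^{\dagger}(\ketbra{x_i}{x_i}) = \tfrac12\big((1\pm r)I_2 \pm t\,\sigma_z\big)$ obtained from \cref{eq:expand_N_dagger} (here $t=(1-p)(1-q)$ is the $\sigma_z$–coefficient and the sign is $+$ for $x_i=0$, $-$ for $x_i=1$). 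The central reduction is then \cref{lem:modified-circuit}: replacing every two–qubit Haar gate by two independent single–qubit Haar gates yields a \emph{modified} circuit whose collision probability upper bounds that of the original. In the modified circuit the replica average factorizes over qubit lines, each line evolving under $m=d-1$ repetitions of $\mathsf{N}\circ M_{U_1}$ applied to $\ketbra{0}{0}^{\otimes 2}$, which is exactly the object computed in \cref{eq:noisy_2_random}.

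Next I would evaluate the per–qubit factor $T_i := \Tr\!\big[(\mathcal{N}^{\dagger}(\ketbra{x_i}{x_i}))^{\otimes 2}\,\rho_i\big]$, where $\rho_i = \tfrac{1}{10}(2I_4+S)+x_{d-1}(I_4-2S)$ is the two–copy output state of \cref{first thm}. Using $\Tr[(A\otimes A)I_4]=(\Tr A)^2$ and $\Tr[(A\otimes A)S]=\Tr(A^2)$, together with the identity $(\Tr A)^2 - 2\Tr(A^2) = -t^2$ (valid for both $x_i=0,1$), this collapses to $T_i = \tfrac14(\Tr A)^2 + \tfrac{t^2}{20} - x_{d-1}t^2$. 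To make contact with the stated constants I would compute $a,b$ for this channel — $M_{U_1}$ annihilates $\sigma_z\otimes I$ and $I\otimes\sigma_z$ and sends each $\sigma_j\otimes\sigma_j$ to $-\tfrac13 I_4+\tfrac23 S$ — obtaining $a=\tfrac{r^2}{3}$ and $b=\tfrac{c}{2}-\tfrac{r^2}{6}$, hence the key identity $a+2b=c$ and the eigenvalue $1-a-2b=1-c$ governing the decay in \cref{first thm}. Substituting the explicit $x_m$ then gives $T_i = T_\infty^{(x_i)} + (1-c)^{d-1}\nu\,t^2$ with $T_\infty^{(x_i)} = \tfrac14(1\pm r)^2 + \tfrac{r^2 t^2}{12c}$ and $\nu=\tfrac{c-r^2}{12c}$.

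Finally I would take the product over the $w_x$ sites with $x_i=1$ and the $n-w_x$ sites with $x_i=0$, separating the fixed–point part from the decaying part. For the fixed point, a rearrangement argument together with $w_x\ge \tfrac n2$ reduces the product (maximized at $w_x=\tfrac n2$) to the geometric mean, so it suffices to check $T_\infty^{(0)}T_\infty^{(1)}\le(\mu\eta)^2$; expanding both sides, this follows from $g:=\tfrac{r^2t^2}{12c}\le \tfrac{r^2(1-r^2)}{12c}$ and $(1+r^2)t^2\le(1-r^2)^2$, each reducing to the elementary facts $t^2\le(1-r)^2\le 1-r^2$ and $r\le 1$. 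For the decaying part I would factor $T_i = T_\infty^{(x_i)}\big(1+\tfrac{(1-c)^{d-1}\nu t^2}{T_\infty^{(x_i)}}\big)$, use $1+u\le e^u$, the bound $\mu t^2\le T_\infty^{(x_i)}$ (again from $t^2\le(1-r)^2$), and $(1-c)^{d-1}\le e^{-c(d-1)}$ to extract the factor $\exp\!\big[n\tfrac{\nu}{\mu}e^{-c(d-1)}\big]$. Combining the two pieces yields \cref{eq:bound_1}.

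The main obstacle is \cref{lem:modified-circuit}: one must show that the stat–mech expansion of the replica collision probability carries non–negative weights on the $\{I_4,S\}$ configurations, and that the gate–by–gate replacement of two–qubit gates by single–qubit pairs dominates the sum term by term. This is precisely the step that breaks for general (non–Pauli) noise in prior work, and it has to be established directly from the structural conditions \cref{eq:noise_1}–\cref{eq:noise_2} (namely $a>0$, $b>0$, and $1-a-2b\ge 0$). A secondary item to verify is the sign bookkeeping $\nu\ge 0$, i.e.\ $c\ge r^2$, which underlies both the positivity of the finite–depth correction and the inequality $\mu t^2\le T_\infty^{(x_i)}$ used in the last step.
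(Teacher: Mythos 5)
Your proposal follows the same skeleton as the paper's proof in Appendix~\ref{appendix:proof_of_CR} (reduce to the all-single-qubit-gate circuit via \cref{lem:modified-circuit}, evaluate per-site factors via \cref{first thm} with $m=d-1$, invoke $w_x\ge n/2$ at the end), and your closing arithmetic --- the identity $a+2b=c$, the product bound $T_\infty^{(0)}T_\infty^{(1)}\le(\mu\eta)^2$, and the $\mu t^2\le T_\infty^{(x_i)}$ step --- is correct. The genuine gap is in your central reduction: you fold the last noise layer into the measurement operators $\bigl(\mathcal{N}^{\dagger}(\ketbra{x_i}{x_i})\bigr)^{\otimes 2}$ \emph{before} invoking \cref{lem:modified-circuit}. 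You therefore need domination for a \emph{weighted} collision quantity, in which a final configuration $\gamma\in\{I_4,S\}^n$ is weighted by $\prod_i w_i(\gamma_i)$ with $w_i(I_4)=(\Tr A_i)^2$ and $w_i(S)=\Tr(A_i^2)$, $A_i=\mathcal{N}^{\dagger}(\ketbra{x_i}{x_i})$, and these two weights differ. The lemma asserts domination only for unit weights (standard-basis $p_x^2$ without the last noise layer), and its proof compares the \emph{totals} of transition coefficients summed over final configurations; that comparison does not survive non-uniform weights. Concretely, take $p=0$, $q=\tfrac12$, a last-layer gate on qubits $(k,l)$ with $x_k=1$, $x_l=0$, and configuration $IS$ entering that gate. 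Then $A_k=(1-q)\ketbra{1}{1}$ gives $(\Tr A_k)^2=\Tr(A_k^2)=\tfrac14$, while $A_l=\ketbra{0}{0}+q\ketbra{1}{1}$ gives $(\Tr A_l)^2=\tfrac94$, $\Tr(A_l^2)=\tfrac54$; the two-qubit gate ($IS\to\tfrac25(II+SS)$) yields
\begin{equation}
\frac{2}{5}\left[\frac{1}{4}\cdot\frac{9}{4}+\frac{1}{4}\cdot\frac{5}{4}\right]=\frac{7}{20},
\end{equation}
whereas the replaced pair of single-qubit gates ($IS\to IS$) yields $\tfrac14\cdot\tfrac54=\tfrac{5}{16}<\tfrac{7}{20}$. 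So the gate-by-gate domination you rely on is false for your weighted observable; note also that your stated goal of "term-by-term" domination of final configurations is false even in the unweighted setting --- only the sum over final configurations is monotone.

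The paper avoids this by reversing the order of the two steps: it first bounds each last-layer factor uniformly by $e_j:=\max\{\bra{x_jx_j}\mathsf{N}(I_4)\ket{x_jx_j},\bra{x_jx_j}\mathsf{N}(S)\ket{x_jx_j}\}=(1\pm r)^2$ --- this is exactly where $\eta^n$ and the hypothesis $w_x\ge\tfrac n2$ enter, via $\prod_j e_j\le(1-r^2)^n$ --- and only then applies \cref{lem:modified-circuit} to the genuinely unweighted quantity $\mathbb{E}_{\mathcal{B}'}[p_x^2]$, whose per-site value in the modified circuit is $\tfrac{3}{10}-x_{d-1}=\mu+(1-c)^{d-1}\nu$. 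If you reorganize your argument in this order, everything after your reduction goes through; indeed, since $t^2\le(1\pm r)^2$ one has $T_i\le e_i\bigl(\mu+(1-c)^{d-1}\nu\bigr)$, so the bound you were aiming for is formally tighter than the paper's --- but it is precisely this extra tightness that your route through \cref{lem:modified-circuit} cannot justify.
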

\begin{remark}
    In the statement, since $(p,q) \neq (0,0)$, $c$ is always larger than $0$ by definition. 
    Thus, $\mu$ and $\nu$ are always well-defined. 
\end{remark}
\subsection{Facts about concentration inequalities}

\noindent Before stating our results, let us state some useful facts about concentration inequalities, in the context of random quantum circuits. We consider $x$ with $w_x \geq \tfrac{n}{2}$.
By Chebyshev's inequality, 
\begin{equation}
\Pr[|X - \mathbb{E}(X)| \geq k] \leq \frac{\mathsf{Var}(X)}{k^2}.
\end{equation}
For $\alpha \in (0,1)$, letting 
$k = \tfrac{\alpha}{2^{n}}$, 
we have 
\begin{equation}
\begin{aligned}
    \Pr\left[\left|p_x - \underset{\mathcal{B}}{\mathbb{E}}(p_x)\right| \geq \frac{\alpha}{2^{n}}\right] 
    \leq \frac{\underset{\mathcal{B}}{\mathsf{Var}}(p_x)}{(\alpha/2^n)^2}
    \leq \frac{\underset{\mathcal{B}}{\mathbb{E}}[p_x^2]}{(\alpha/2^n)^2}. 
\end{aligned}
\end{equation}
Since 
\begin{equation}
    \left|p_x - \underset{\mathcal{B}}{\mathbb{E}}(p_x)\right| < \dfrac{\alpha}{2^n} \Rightarrow p_x <  \underset{\mathcal{B}}{\mathbb{E}}(p_x) + \dfrac{\alpha}{2^n}, 
\end{equation} 
we have the following bound: 
\begin{equation}
\label{eq:prob_bound}
\underset{\mathcal{B}}{\Pr}\left[p_x < \frac{\alpha}{2^n} + \underset{\mathcal{B}}{\mathbb{E}}(p_x) \right] \geq 1 - \frac{4^n\underset{\mathcal{B}}{\mathbb{E}}[p_x^2]}{\alpha^2}. 
\end{equation}

\noindent From \cref{first moment}, we can write 
\begin{equation}
    \underset{\mathcal{B}}{\mathbb{E}}[p_x] = \dfrac{(1-r)^{w_x}(1+r)^{n - w_x}}{2^n}. 
\end{equation}
If $w_x \geq \tfrac{n}{2}$, 
    \begin{equation}
    \label{citeeq}
        \underset{\mathcal{B}}{\mathbb{E}}[p_x] = \dfrac{(1-r)^{w_x}(1+r)^{n - w_x}}{2^n} \leq \dfrac{(1-r^2)^{\tfrac{n}{2}}}{2^n}.  
    \end{equation}

\subsection{Our results}
We are now ready to state our main theorem.

\begin{thm}
\label{concentration2}
Let $\mathcal{B}$ be an ensemble of noisy random quantum circuits of depth $d$. Let the noise channel be $\mathcal{N}$ and let $d = \Omega(\log n)$. Then, the following statements hold:
\begin{enumerate}
\item When $\mathcal{N} = \mathcal{N}^{(\textup{amp})}_q\circ \mathcal{N}^{(\textup{dep})}_p$, then For every $x \in \{0, 1\}^n$ with $w_x \geq \frac{n}{2}$ and $\alpha \in (0, 1]$,
\begin{equation}
\label{eq: theorem 7_1}
\underset{n \rightarrow \infty}{\lim} \underset{\mathcal{B}}{\Pr}\left[p_x < \frac{\alpha}{2^n} \right] = 1,
\end{equation}
as long as 
\begin{equation}
\label{aeq}
    \frac{q^2 +2}{(q-3)(q-1)} > (1-p)^2. 
\end{equation}
\item When $\mathcal{N} = \mathcal{N}^{(\textup{dep})}_p \circ \mathcal{N}^{(\textup{amp})}_q$, then or every $x$ with $w_x \geq \frac{n}{2}$, and $\alpha \in (0, 1]$,
\begin{equation}
\label{eq: theorem 7_2}
\underset{n \rightarrow \infty}{\lim} \underset{\mathcal{B}}{\Pr}\left[p_x < \frac{\alpha}{2^n} \right] = 1,
\end{equation}
as long as 
\begin{equation}
\label{aeq2}
     q > \frac{3}{4} - \frac{1}{2(1-p)^2}. 
\end{equation}
\end{enumerate}
\end{thm}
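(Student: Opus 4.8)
The plan is to feed the second-moment estimate of \cref{second moment probabilities} into the Chebyshev-type bound Eq.~\eqref{eq:prob_bound}, control the first moment via \eqref{citeeq}, and then reduce both parameter conditions \eqref{aeq} and \eqref{aeq2} to a single inequality $4\mu\eta < 1$, whose verification is the only place the explicit noise parameters enter.

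First I would invoke Eq.~\eqref{eq:prob_bound} with threshold $\alpha/2$ rather than $\alpha$, giving
\[
\underset{\mathcal{B}}{\Pr}\left[p_x < \frac{\alpha/2}{2^n} + \underset{\mathcal{B}}{\mathbb{E}}(p_x)\right] \geq 1 - \frac{4\cdot 4^n\,\underset{\mathcal{B}}{\mathbb{E}}[p_x^2]}{\alpha^2}.
\]
By \eqref{citeeq} the mean obeys $\mathbb{E}_{\mathcal{B}}(p_x)\le (1-r^2)^{n/2}/2^n$, which (since $r>0$ in both cases) is eventually smaller than $\tfrac{\alpha/2}{2^n}$; hence for large $n$ the event on the left is contained in $\{p_x<\alpha/2^n\}$, and it suffices to prove $4^n\,\mathbb{E}_{\mathcal{B}}[p_x^2]\to 0$.

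Substituting \cref{second moment probabilities} yields
\[
4^n\,\underset{\mathcal{B}}{\mathbb{E}}[p_x^2] \leq (4\mu\eta)^n\,\exp\!\left[n\tfrac{\nu}{\mu}e^{-c(d-1)}\right].
\]
Because $d=\Omega(\log n)$ forces $n\,e^{-c(d-1)}=o(n)$, the correction factor is $\exp[o(n)]$ and is overwhelmed by the genuine exponential $(4\mu\eta)^n$ precisely when $4\mu\eta<1$. Computing $4\mu\eta = \bigl(1+\tfrac{r^2}{3c}\bigr)(1-r^2)$ and rearranging, this is equivalent to $1-r^2 < 3c$. I would then substitute the explicit data: in case~1, $r=q$ and, using $3(1-q)(1-q/3)=q^2-4q+3$, the inequality becomes $(1-p)^2(q^2-4q+3)<q^2+2$, i.e. \eqref{aeq} after noting $(q-1)(q-3)>0$ for $q<1$; in case~2, $r=q(1-p)$, where the $q^2$ contributions cancel and one is left with $(1-p)^2(3-4q)<2$, which is exactly \eqref{aeq2}.

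The main obstacle I expect is bookkeeping rather than conceptual: one must check carefully that the $d=\Omega(\log n)$ hypothesis really suppresses the $\exp[n\tfrac{\nu}{\mu}e^{-c(d-1)}]$ factor to subexponential order (so the genuine threshold is $4\mu\eta<1$ and nothing weaker), and then carry out the algebra converting $1-r^2<3c$ into the two clean regions, taking care of the sign of $(q-1)(q-3)$ so that the division producing \eqref{aeq} preserves the inequality.
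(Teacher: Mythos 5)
Your proposal is correct and follows essentially the same route as the paper's own proof: the Chebyshev bound of Eq.~\eqref{eq:prob_bound}, the first-moment estimate \eqref{citeeq}, the second-moment bound of \cref{second moment probabilities}, and the reduction of everything to the single condition $4\mu\eta<1$, which your algebra (via the clean intermediate form $1-r^2<3c$) correctly converts into \eqref{aeq} and \eqref{aeq2}. The only differences are cosmetic refinements: your $\exp[o(n)]$ treatment of the correction factor handles an arbitrary constant in $d=\Omega(\log n)$ (the paper's $\mathcal{O}(1)$ claim implicitly takes $d\geq \log(n)/c$), and, exactly like the paper, your equivalence $4\mu\eta<1 \Leftrightarrow 1-r^2<3c$ silently divides by $r^2$, so both arguments carry the same implicit requirement $r\neq 0$.
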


\begin{proof}
From \eqref{citeeq}, for any $\beta \in (0,1)$, there exists $\alpha \in (0,1)$ and sufficiently large $n$ such that 
    \begin{align}
         \underset{\mathcal{B}}{\mathbb{E}}[p_x] +  \dfrac{\alpha}{2^n} < \dfrac{\beta}{2^n},
    \end{align}
    unless $r = 0$. 
    With such choice of $\alpha$ and $n$, by \cref{eq:prob_bound} and \cref{second moment probabilities},
    \begin{equation}
        \underset{\mathcal{B}}{\Pr}\left[p_x < \frac{\beta}{2^n}  \right] \geq 1 - \frac{\left(4\mu\eta\right)^{n}}{\alpha^2} \exp\left[n\dfrac{\nu}{\mu} \mathrm{e}^{-c(d-1)}\right]. 
    \end{equation}
    Since $c > 0$, when 
    \begin{equation}
        d \geq \dfrac{\log(n)}{c}, 
    \end{equation} 
    we have 
    \begin{equation}
        \exp\left[n\dfrac{\nu}{\mu} \mathrm{e}^{-c(d-1)}\right] = \exp\left[\dfrac{\nu}{\mu} \mathcal{O}(1) \right] = \mathcal{O}(1).
    \end{equation}
    Therefore, for such depth $d$, if
    \begin{equation}
        0\leq 4\mu\eta <  1, 
    \end{equation} 
    \cref{eq: theorem 7_1} is satisfied, because
    \begin{equation}
    \begin{aligned}
        \underset{\mathcal{B}}{\Pr}\left[p_x < \frac{\beta}{2^n}  \right] 
        \geq 1 - \frac{\left(4\mu\eta\right)^{n}}{\alpha^2}\mathcal{O}(1) 
        \xrightarrow{n\to\infty} 1.
    \end{aligned}
\end{equation}
Thus, to satisfy \cref{eq: theorem 7_1}, we have to make sure that $r \neq 0$ and $0 \leq 4\mu\eta <  1$. 

\begin{enumerate}
 \item Consider $\mathcal{N} = \mathcal{N}^{(\textup{amp})}_q \circ \mathcal{N}^{(\textup{dep})}_p$. 
First, since $r\neq 0$, we must have $q \neq 0$.
Next, by definition, we have 
\begin{equation}
    4\mu\eta = (1-q^2)\frac{q^2 + 3 - (1-p)^2(3-q)(1-q)}{3 - (1-p)^2(3-q)(1-q)}. 
\end{equation}
Obviously, $0\leq 4\mu\eta$. 
Thus, given parameters $0\leq p,q\leq 1$, 
we only have to check if 
\begin{equation}
    (1-q^2)\frac{q^2 + 3 - (1-p)^2(3-q)(1-q)}{3 - (1-p)^2(3-q)(1-q)} < 1. 
\end{equation}
This is equivalent to 
\begin{equation}
    0 < \left[1 - (1-p)^2\right]q^2 + 4(1-p)^2q + (2-3(1-p)^2). 
\end{equation}
If this condition is satisfied, we satisfy \cref{eq: theorem 7_2}. Simplifying this, we get \cref{aeq}.

\item Consider $\mathcal{N} = \mathcal{N}^{(\textup{dep})}_p \circ \mathcal{N}^{(\textup{amp})}_q$. 
First, since $r\neq 0$, it must follow that $p \neq 1$ and $q \neq 0$. 
Next, we have 
\begin{equation}
    4\mu\eta = (1-q^2(1-p)^2)\frac{q^2(1-p)^2 + 3 - (1-p)^2(3-q)(1-q)}{3 - (1-p)^2(3-q)(1-q)}. 
\end{equation}
By definition, $0\leq 4\mu\eta$. 
Thus, given parameters $0\leq p,q\leq 1$, 
we only have to check if 
\begin{equation}
    (1-q^2(1-p)^2)\frac{q^2(1-p)^2 + 3 - (1-p)^2(3-q)(1-q)}{3 - (1-p)^2(3-q)(1-q)} < 1. 
\end{equation}
Indeed, we can solve this inequality with respect to $q$ as 
\begin{equation}
    q > \frac{3}{4} - \frac{1}{2(1-p)^2}. 
\end{equation}
If this condition is satisfied, we to satisfy \cref{eq: theorem 7_2}. 
\end{enumerate}
\end{proof}
\begin{remark}
Note that the restrictions on $p$ and $q$, as given by \cref{aeq} and \cref{aeq2} are limitations of the proof technique, and do not necessarily mean that the output distribution behaves any differently for values of $p$ and $q$ that do not satisfy these constraints. 
\end{remark}

\begin{remark}
\label{remark7}
Our results in \cref{section: low depth} and \cref{sec:high depth} mean that our setup, provided the constraints in \cref{section: low depth} and \cref{sec:high depth} are satisfied, is never $2^{n-1}$--anticoncentrated, according to \cref{newdef2}.
\end{remark}

\section{Generalizing to arbitrary noise channels}
\label{section: general noise1}
In this section, 
we consider a general case, where the noise map $\mathcal{N}$ is characterized using parameters $t_{ij}$ with $0\leq i \leq 3$ and $1\leq j \leq 3$ as 
\begin{align}
    I_2 &\to I_2 + t_{01}\sigma_x + t_{02}\sigma_y + t_{03} \sigma_z \\ 
    \sigma_x &\to t_{11}\sigma_x + t_{12}\sigma_y + t_{13} \sigma_z \\
    \sigma_y &\to t_{21}\sigma_x + t_{22}\sigma_y + t_{23} \sigma_z \\
    \sigma_z &\to t_{31}\sigma_x + t_{32}\sigma_y + t_{33} \sigma_z. 
\end{align}
Since a set $\{I_2,\sigma_x,\sigma_y,\sigma_z\}$ forms a basis for the space of single-qubit operators, 
an arbitrary single-qubit quantum channel can be expressed in this form. 
Note that, conversely, a map expressed in this form is not necessarily a quantum channel. 

\subsection{Lack of anticoncentration using collision probability}
We show an extension of Theorem~\ref{first_theorem} and prove that the ensemble $\mathcal{B}$ fails to anticoncentrate. 
\begin{thm}
\label{first_theorem_extension}
Let $\mathcal{B}$ be an ensemble of noisy random quantum circuits with the general noise channel $\mathcal{N}$. Then,
\begin{equation}
\mathcal{Z} \geq (1+t_{03}^2)^n - 1. 
\end{equation}
\end{thm}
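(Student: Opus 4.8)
The plan is to run the argument of \cref{first_theorem} verbatim, with the amplitude-damping parameter $r$ replaced by the single coefficient $t_{03}$. Everything hinges on computing the adjoint $\mathcal{N}^\dagger$ applied to $\sigma_z$. Since the Paulis are Hermitian and Hilbert--Schmidt orthogonal, the Pauli transfer matrix of $\mathcal{N}^\dagger$ is the transpose of that of $\mathcal{N}$, so the coefficient of $\sigma_j$ in $\mathcal{N}^\dagger(\sigma_z)$ equals the coefficient of $\sigma_z$ in $\mathcal{N}(\sigma_j)$. Reading these off the displayed action of $\mathcal{N}$ gives $\mathcal{N}^\dagger(\sigma_z) = t_{03}I_2 + t_{13}\sigma_x + t_{23}\sigma_y + t_{33}\sigma_z$, so the component along $I_2$ is exactly $t_{03}$. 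I would also record that $\mathcal{N}^\dagger(I_2) = I_2$, since the adjoint of a trace-preserving map is unital.

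With this in hand I follow \cref{first_theorem}: write $\mathcal{Z} = \mathbb{E}_{\mathcal{B}}[\sum_{p \in \{0,3\}^n,\, p \neq 0^n} \Tr(\sigma_p \otimes \sigma_p ~ \mathcal{C}(\ketbra{0}{0}) \otimes \mathcal{C}(\ketbra{0}{0}))]$, remove the last noise layer through its adjoint, and expand $(\mathcal{N}^\dagger)^{\otimes n}(\sigma_p)$ qubitwise. Applying $\mathcal{N}^\dagger(I_2) = I_2$ on the $p_i = 0$ sites and the formula above on the $p_i = 3$ sites yields $(\mathcal{N}^\dagger)^{\otimes n}(\sigma_p) = t_{03}^{w_p} I_{2^n} + \sum_q c_q \sigma_q$, a sum over nontrivial Paulis $q$ supported in $\mathrm{supp}(p)$, where the $c_q$ are products of $t_{13}, t_{23}, t_{33}$ and, in contrast to the amplitude-damping case, may be negative.

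Then I reuse the two facts established in the proof of \cref{first_theorem}: because $\mathcal{C}'$ terminates in a layer of two-qubit Haar-random gates, the off-diagonal terms $\mathbb{E}_{\mathcal{B}'}[\Tr(\sigma_q \otimes \sigma_{q'} ~ \mathcal{C}'(\ketbra{0}{0}) \otimes \mathcal{C}'(\ketbra{0}{0}))]$ vanish whenever $q \neq q'$ (apply the single-qubit twirl identity $M_{U_1}[\sigma_i \otimes \sigma_j] = 0$, $i \neq j$, on any site where $q$ and $q'$ differ), while the diagonal terms $\mathbb{E}_{\mathcal{B}'}[\Tr(\sigma_q ~ \mathcal{C}'(\ketbra{0}{0}))^2]$ are nonnegative. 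Since every surviving contribution enters as $c_q^2$ times a nonnegative quantity, the signs of the $c_q$ are immaterial and these terms only add to the bound. Hence $\mathbb{E}_{\mathcal{B}}[\Tr(\sigma_p \otimes \sigma_p ~ \mathcal{C}(\ketbra{0}{0}) \otimes \mathcal{C}(\ketbra{0}{0}))] \geq t_{03}^{2 w_p}$, and summing over $p \in \{0,3\}^n \setminus \{0^n\}$ via the binomial theorem gives $\mathcal{Z} \geq (1 + t_{03}^2)^n - 1$.

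The transpose relation and the qubitwise expansion are mechanical, so I do not expect them to be the difficulty. The one genuinely new feature relative to \cref{first_theorem} is that $\mathcal{N}^\dagger(\sigma_z)$ now carries $\sigma_x,\sigma_y$ terms and the $c_q$ can be negative; the main (and mild) obstacle is to argue cleanly that none of this shrinks the bound, which is exactly what the vanishing of off-diagonal twirls together with the appearance of every surviving weight as a nonnegative square $c_q^2$ guarantees.
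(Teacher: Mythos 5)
Your proposal is correct and follows essentially the same route as the paper's proof: compute $\mathcal{N}^{\dagger}(\sigma_z) = t_{03}I_2 + t_{13}\sigma_x + t_{23}\sigma_y + t_{33}\sigma_z$, remove the last noise layer via the adjoint, kill all cross terms with the single-qubit twirl identity, and keep the diagonal terms, which are nonnegative squares. Your explicit remark that the possibly negative coefficients $c_q$ are harmless because every surviving contribution appears as $c_q^2$ times a nonnegative quantity is exactly the point the paper relies on (implicitly, via the squared coefficients $t_{i3}^2$ in its single-qubit expansion) when it asserts the $n$-qubit case follows "by a similar discussion" to \cref{first_theorem}.
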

\begin{proof}

\noindent For simplicity, let us first consider just the single qubit case as in Theorem~\ref{first_theorem}.
Let 
\begin{equation}
\rho = \mathcal{N} \left(U_1 (\tilde{\rho}) U_1^\dagger \right)
\end{equation}
and let
\begin{equation}
\rho' = U_1 (\tilde{\rho}) U_1^{\dagger}
\end{equation}
with $\tilde{\rho}$ being the state just before the last block. 
\noindent For a single qubit, by the definition of the adjoint map, 
\begin{align}
\mathcal{Z} &= \underset{U_1}{\mathbb{E}} \left[ \Tr(\sigma_z \otimes \sigma_z ~\rho \otimes \rho) \right] \\
&= \underset{U_1}{\mathbb{E}} \left[ \Tr(\sigma_z \otimes \sigma_z ~\mathcal{N}(\rho') \otimes \mathcal{N}(\rho')) \right] \\
&= \underset{U_1}{\mathbb{E}} \left[ \Tr(\mathcal{N}^{\dagger}(\sigma_z) \otimes \mathcal{N}^{\dagger}(\sigma_z) ~ \rho' \otimes \rho') \right] \\
&= t_{03}^2 \underset{U_1}{\mathbb{E}} \left[ \Tr(I_2 \otimes I_2 ~\rho' \otimes \rho') \right] + t_{13}^2 \underset{U_1}{\mathbb{E}} \left[ \Tr(\sigma_x \otimes \sigma_x ~\rho' \otimes \rho') \right]  \\ &+ t_{23}^2 \underset{U_1}{\mathbb{E}} \left[ \Tr(\sigma_y \otimes \sigma_y ~\rho' \otimes \rho') \right] + t_{33}^2 \underset{U_1}{\mathbb{E}} \left[ \Tr(\sigma_z \otimes \sigma_z ~\rho' \otimes \rho') \right] + \sum_{i\neq j} t_{i3}t_{j3} \underset{U_1}{\mathbb{E}} \left[ \Tr(\sigma_i \otimes \sigma_j ~\rho' \otimes \rho') \right]. 
\end{align}
By using \cref{eq:vanish_cross_term}, 
\begin{equation}
    \underset{U_1}{\mathbb{E}} \left[ \Tr(\sigma_i \otimes \sigma_j ~\rho' \otimes \rho') \right] = 0 
\end{equation}
for all $i\neq j$. 
In addition, for $p = x,y,z$, 
\begin{equation}
    \underset{U_1}{\mathbb{E}} \left[ \Tr(\sigma_p \otimes \sigma_p ~\rho' \otimes \rho') \right] 
    = \underset{U_1}{\mathbb{E}} \left[ \Tr(\sigma_p \rho')^2 \right] \geq 0.  
\end{equation}
Therefore, 
\begin{align}
   \mathcal{Z} \geq t_{03}^2 \underset{U_1}{\mathbb{E}} \left[ \Tr(I_2 \otimes I_2 ~\rho' \otimes \rho') \right] = t_{03}^2. 
\end{align}
With this observation, 
by a similar discussion as in \cref{first_theorem}, 
we analyze the general $n$-qubit case;  
\begin{align}
        \mathcal{Z} 
        &= \underset{\mathcal{B}}{\mathbb{E}} \left[ \sum_{p \in \{0, 3\}^n, p \neq 0^n} \Tr(\sigma_p \otimes \sigma_p ~\mathcal{C}(\ketbra{0}{0}) \otimes \mathcal{C}(\ketbra{0}{0})) \right] \\ 
        &\geq \sum_{p \in \{0, 3\}^n, p \neq 0^n} t_{03}^{2w_p} \\ 
        \label{bound}
        &= (1 + t_{03}^2)^{n} - 1, 
    \end{align}
which completes the proof. 
\end{proof}

\begin{remark}
\cref{first_theorem_extension} implies that for any noise channel such that $t_{03}$ is a non-zero constant, the ensemble $\mathcal{B}$ fails to anticoncentrate. 
\end{remark}

\begin{remark}
By a similar argument to \cref{first_theorem_extension}, one can show that for any noise channel such that $t_{01}$ is a non-zero constant, the ensemble $\mathcal{B}$ fails to anticoncentrate when the collision probabilities are defined with respect to the Hadamard basis.
\end{remark}
\begin{remark}
For any unital channel, $t_{01} = t_{02} = t_{03} = 0$, so, by plugging into \eqref{bound}, $\mathcal{Z} \geq 0$ gives a vacuous bound.
\end{remark}

\subsection{Lack of anticoncentration using typical probabilities}

We can use similar arguments to what we did in \cref{section: low depth} and \cref{sec:high depth} to argue about the nature of the distribution. Just as there, here too the distribution has a lot of strings with very low probability weight. First, define the following quantities:
\begin{align}
    a &= \frac{t_{01}^2 + t_{02}^2 + t_{03}^2}{3}, \\ 
    b &= \frac{1}{2} - \frac{t_{01}^2 + t_{02}^2 + t_{03}^2 + t_{11}^2 + t_{12}^2 + t_{13}^2 + t_{21}^2 + t_{22}^2 + t_{23}^2 + t_{31}^2 + t_{32}^2 + t_{33}^2}{6},\\
    c &= a + 2b = 1 - \frac{t_{11}^2 + t_{12}^2 + t_{13}^2 + t_{21}^2 + t_{22}^2 + t_{23}^2 + t_{31}^2 + t_{32}^2 + t_{33}^2}{3},
\end{align}

\begin{align}
    \mu &= \frac{-t_{01}^2 - t_{02}^2 - t_{03}^2 + t_{11}^2 + t_{12}^2 + t_{13}^2 + t_{21}^2 + t_{22}^2 + t_{23}^2 + t_{31}^2 + t_{32}^2 + t_{33}^2 - 3}{4(t_{11}^2 + t_{12}^2 + t_{13}^2 + t_{21}^2 + t_{22}^2 + t_{23}^2 + t_{31}^2 + t_{32}^2 + t_{33}^2 - 3)}, \\ 
    \nu &= \frac{3(t_{01}^2 + t_{02}^2 + t_{03}^2) + t_{11}^2 + t_{12}^2 + t_{13}^2 + t_{21}^2 + t_{22}^2 + t_{23}^2 + t_{31}^2 + t_{32}^2 + t_{33}^2 - 3}{12(t_{11}^2 + t_{12}^2 + t_{13}^2 + t_{21}^2 + t_{22}^2 + t_{23}^2 + t_{31}^2 + t_{32}^2 + t_{33}^2 - 3)},\\
    \eta &= \sqrt{\max\left\{\left(1 + t_{03}\right)^2, \frac{t_{13}^2}{2} + \frac{t_{23}^2}{2} +
    \frac{t_{33}^2}{2} +\frac{\left(1 + t_{03}\right)^2}{2} \right\}}.
\end{align}

\subsubsection{Lack of anticoncentration at low depth}
We have an analogue of Theorem~\ref{concentration}, 
which shows a lack of anticoncentration for low-depth circuits. 

\begin{thm}
\label{lowdepth}
Let $\mathcal{B}$ be an ensemble of noisy random quantum circuits of depth $d$, with general noise channel $\mathcal{N}$. 
Let $x \in \{0, 1\}^n$ and $w_x$ be the Hamming weight of $x$. 
Suppose further that 
\begin{equation}
\left\langle\ketbra{0}{0},\mathcal{N}^d(\ketbra{0}{0})\right\rangle = \kappa + \tau\lambda^d
\end{equation}
with some $\kappa > \tfrac{1}{2}$, $\tau > 0$, and $\lambda\geq 0$. 
Then, if $t_{03} > 0$, there exists a constant $t$ for every $x$ with $w_x \geq \frac{n}{2}$, and $\alpha \in (0, 1]$, such that,
\begin{equation}
\underset{n \rightarrow \infty}{\lim} \underset{\mathcal{B}}{\Pr}\left[p_x < \frac{\alpha}{2^n} \right] = 1,
\end{equation}
when $d < t \log (n)$.
\end{thm}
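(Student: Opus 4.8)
The plan is to rerun the argument of \cref{concentration} essentially verbatim, with the amplitude-damping parameter $r$ replaced by the general bias $t_{03}$, since the only feature of $\mathcal{N}$ that entered that proof was the single-qubit marginal bias $\bra{x_i}\mathcal{N}(I_2)\ket{x_i}$. First I would record that, from the $t_{ij}$ parameterization, $\mathcal{N}(I_2) = I_2 + t_{01}\sigma_x + t_{02}\sigma_y + t_{03}\sigma_z$, so that $\bra{0}\mathcal{N}(I_2)\ket{0} = 1 + t_{03}$ and $\bra{1}\mathcal{N}(I_2)\ket{1} = 1 - t_{03}$. Thus $t_{03}$ plays exactly the role of $r$, and the hypothesis $t_{03} > 0$ is the analogue of $r \neq 0$.

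Next I would reproduce the first-moment ingredient \eqref{eq:A_sigma}. Invoking the general-noise version of \cref{conditional}, namely $\underset{\mathcal{B}}{\mathbb{E}}[\Pr(X_i = x_i \mid \{X_j = x_j\}_{j \in J_i})] = \tfrac{1}{2}\bra{x_i}\mathcal{N}(I_2)\ket{x_i}$ --- which holds because removing the last noise layer and Haar-averaging the last gate layer collapses the relevant reduced state to $I_2/2$ regardless of the noise --- I obtain $\underset{\mathcal{B}}{\mathbb{E}}[A_\sigma] = \sum_{i}\left(1 - \bra{x_i}\mathcal{N}(I_2)\ket{x_i}\right) = 2 w_x t_{03} - n t_{03}$, exactly as before.

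The technical heart is to reproduce the lower bound \eqref{eq2}, that is $\underset{\mathcal{B}_d}{\mathbb{E}}[\langle Z_i \rangle^2] \geq b\,\mathrm{e}^{-ad}$ for positive constants $a,b$. This is precisely where the hypothesis $\left\langle\ketbra{0}{0},\mathcal{N}^d(\ketbra{0}{0})\right\rangle = \kappa + \tau\lambda^d$ enters: it quantifies how the single-qubit bias is preserved as it is propagated through the depth-$d$ noise backbone of the lightcone, and the requirement $\kappa > \tfrac{1}{2}$ guarantees that this bias never relaxes to the unbiased value $\tfrac{1}{2}$, so that $\underset{\mathcal{B}_d}{\mathbb{E}}[\langle Z_i \rangle^2]$ stays bounded below by an exponentially decaying (rather than superpolynomially small) function of $d$. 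I expect this step --- generalizing the appendix computation behind \eqref{eq2} from the explicit amplitude-damping/depolarizing channel to an arbitrary CPTP $\mathcal{N}$ --- to be the main obstacle, and the condition on $\mathcal{N}^d$ is exactly the hypothesis engineered to make it go through.

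Finally I would assemble these through \cref{gullans paper} and the generic variance bound $\underset{\mathcal{B}}{\mathsf{Var}}(X) \leq 2n$ for $X = -\log p_x$, giving $\underset{\mathcal{B}}{\mathbb{E}}[X] \geq n\log 2 + (2 w_x t_{03} - n t_{03}) + \tfrac{b}{4} n\,\mathrm{e}^{-cd}$ with $c = \log 4 + a$, exactly as in \eqref{expectation}. For $w_x \geq \tfrac{n}{2}$ the surplus $2 t_{03}\!\left(w_x - \tfrac{n}{2}\right) + \tfrac{b}{4} n\,\mathrm{e}^{-cd}$ is nonnegative and, as in \eqref{eq:low_depth_condition}, is $\omega(n^{0.51})$ once $n\,\mathrm{e}^{-cd} = \omega(n^{0.51})$, i.e.\ whenever $d < t\log n$ for a constant $t$ fixed by $c$. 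A Chebyshev concentration of $X$ around its mean then forces $\underset{n\to\infty}{\lim}\,\underset{\mathcal{B}}{\Pr}[p_x < \alpha/2^n] = 1$, completing the argument.
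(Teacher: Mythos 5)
Your proposal is correct and follows essentially the same route as the paper: \cref{lowdepth} is obtained by rerunning the proof of \cref{concentration} with $r$ replaced by $t_{03}$, using the general-noise conditional-probability lemma of \cref{conditional} and the bound of \cref{lemma13} in \cref{appendixA} --- which is already stated for an arbitrary single-qubit channel satisfying exactly your hypothesis on $\mathcal{N}^d$, with $\kappa > \tfrac{1}{2}$ giving $\underset{\mathcal{B}}{\mathbb{E}}\left[\langle Z_i\rangle^2\right] \geq 4\left(\kappa - \tfrac{1}{2}\right)^2/30^d$, so the step you flag as the main obstacle is already available. The remainder of your assembly (\cref{gullans paper}, the variance bound $\underset{\mathcal{B}}{\mathsf{Var}}(X) \leq 2n$, Chebyshev, and taking $d < \tfrac{0.49}{c}\log n$) matches the paper's argument.
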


\noindent In general, if we write 
\begin{equation}
    \mathcal{N}^n(\ketbra{0}{0}) = \frac{1}{2}\left(I_2 + x_n \sigma_x + y_n\sigma_y + z_n\sigma_z \right), 
\end{equation}
$x_n$, $y_n$, and $z_n$ are defined recursively as 
\begin{align}
    x_{n+1} &= t_{01} + t_{11}x_n + t_{21}y_n + t_{31}z_n \\ 
    y_{n+1} &= t_{02} + t_{12}x_n + t_{22}y_n + t_{32}z_n\\ 
    z_{n+1} &= t_{03} + t_{13}x_n + t_{23}y_n + t_{33}z_n
\end{align}
for $n\geq 0$ with 
\begin{align}
    x_0 &= 0 \\ 
    y_0 &= 0 \\ 
    z_0 &= 1. 
\end{align}
Thus, given noise $\mathcal{N}$ with parameters $\{t_{ij}:i=0,1,2,3; j = 1,2,3\}$, 
we may check if the conditions $\kappa > \tfrac{1}{2}$, $\tau > 0$, and $\lambda\geq 0$ are satisfied by explicitly solving the recurrence relation introduced above. 

\subsubsection{Lack of anticoncentration at high depth}
\begin{thm}
\label{highdepth}
Let $\mathcal{B}$ be an ensemble of noisy random quantum circuits of depth $d$, where each single--qubit noisy channel $\mathcal{N}$ is modeled as an arbitrary CPTP map with parameters $\{t_{ij}\}$. Let $d = \Omega(\log n)$. Then, for any $\alpha \in (0, 1]$
\begin{equation}
\underset{n \rightarrow \infty}{\lim} \underset{\mathcal{B}}{\Pr}\left[p_x < \frac{\alpha}{2^n} \right] = 1,
\end{equation}
as long as $w_x \geq n/2$ and 
\begin{align}
    &\mu \geq 0, \\
    &\nu \geq 0, \\ 
    &0 < c \leq 1, \\ 
    &0\leq 4\mu\eta < 1. 
\end{align}
\end{thm}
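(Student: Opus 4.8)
The plan is to run the proof of \cref{concentration2} essentially verbatim, substituting the general-noise first- and second-moment estimates and feeding them into the Chebyshev bound \eqref{eq:prob_bound}. The argument requires only two inputs: an upper bound on the second moment $\mathbb{E}_{\mathcal{B}}[p_x^2]$ of the form $\mu^n\eta^n\exp[n\tfrac{\nu}{\mu}\mathrm{e}^{-c(d-1)}]$ with the general parameters $\mu,\nu,c,\eta$ defined above, and the (automatic) suppression $2^n\mathbb{E}_{\mathcal{B}}[p_x]\to 0$. Given these, \cref{eq:prob_bound} with a suitable choice of its free parameter yields the claim for every target $\alpha\in(0,1]$.

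First I would record the first moment by the ``remove the last layer of noise'' technique of \cref{first moment}. Writing $\mathbb{E}_{\mathcal{B}}[p_x]=\Tr[(\mathcal{N}^{\dagger})^{\otimes n}(\ketbra{x}{x})\,\mathbb{E}_{\mathcal{B}'}[\mathcal{C}'(\ketbra{0^n}{0^n})]]$ and using $\mathbb{E}_{\mathcal{B}'}[\mathcal{C}'(\ketbra{0^n}{0^n})]=I_{2^n}/2^n$, this collapses to $\tfrac{1}{2^n}\prod_i\bra{x_i}\mathcal{N}(I_2)\ket{x_i}$. Since the general parameterization gives $\mathcal{N}(I_2)=I_2+t_{01}\sigma_x+t_{02}\sigma_y+t_{03}\sigma_z$, each factor equals $1+t_{03}$ when $x_i=0$ and $1-t_{03}$ when $x_i=1$, so $\mathbb{E}_{\mathcal{B}}[p_x]=(1+t_{03})^{n-w_x}(1-t_{03})^{w_x}/2^n$, and for $w_x\geq n/2$ one checks $2^n\mathbb{E}_{\mathcal{B}}[p_x]\leq(1-t_{03}^2)^{n/2}$, mirroring \eqref{citeeq}. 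I would emphasize, however, that this suppression need not be assumed separately: once the second-moment bound is in hand it follows for free from Cauchy--Schwarz, $\mathbb{E}_{\mathcal{B}}[p_x]^2\leq\mathbb{E}_{\mathcal{B}}[p_x^2]$, so that only the hypotheses actually listed in the statement are used.

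The substantial ingredient is the second-moment bound, the general-noise analogue of \cref{second moment probabilities}. I would obtain it by the stat--mech route of \cref{sec:high depth}: iteratively replace each two-qubit Haar gate by two single-qubit Haar gates, invoke \cref{first thm} with the twirled-noise coefficients $a$ and $b$ now read off from $\{t_{ij}\}$ via the definitions of $a,b,c$, and carry out the path sum. The hypotheses $\mu\geq0$, $\nu\geq0$, and $0<c\leq1$ are exactly the conditions under which these coefficients define a legitimate nonnegative-weight model and the recurrence in \cref{first thm} produces $\mathbb{E}_{\mathcal{B}}[p_x^2]\leq\mu^n\eta^n\exp[n\tfrac{\nu}{\mu}\mathrm{e}^{-c(d-1)}]$, with $\eta$ the per-site growth factor controlling the map on the $\sigma_z$-sector.

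Finally I would assemble the pieces. Choosing $d\geq\log(n)/c$ forces $\mathrm{e}^{-c(d-1)}=\mathcal{O}(1/n)$, so the exponential factor is $\mathcal{O}(1)$ and $4^n\mathbb{E}_{\mathcal{B}}[p_x^2]\leq(4\mu\eta)^n\mathcal{O}(1)$, which tends to $0$ precisely because $0\leq4\mu\eta<1$. Since this also yields $2^n\mathbb{E}_{\mathcal{B}}[p_x]\to0$ by Cauchy--Schwarz, for any target $\alpha\in(0,1]$ I apply \cref{eq:prob_bound} with free parameter $\alpha/2$ for large $n$, absorbing $\mathbb{E}_{\mathcal{B}}[p_x]$ into the remaining $\alpha/2$, to conclude $\Pr_{\mathcal{B}}[p_x<\alpha/2^n]\geq1-4^n\mathbb{E}_{\mathcal{B}}[p_x^2]/(\alpha/2)^2\to1$. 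The hard part is not this bookkeeping but the second-moment estimate itself: one must verify, as flagged for \cref{lem:modified-circuit}, that for an arbitrary CPTP $\mathcal{N}$ the gate-splitting keeps all $I_4$ and $S$ path weights nonnegative and only increases the collision probability, and this is exactly where $\mu,\nu\geq0$ and $0<c\leq1$ enter.
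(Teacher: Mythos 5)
Your proposal is correct and takes essentially the same route as the paper: the paper's own proof simply invokes the machinery of \cref{sec:high depth} (the gate-splitting \cref{lem:modified-circuit} and the recurrence of \cref{first thm}, with $a,b$ now read off from the $\{t_{ij}\}$) to obtain $\underset{\mathcal{B}}{\mathbb{E}}[p_x^{2}] \leq \mu^n\eta^n \exp\left[n\tfrac{\nu}{\mu}\mathrm{e}^{-c(d-1)}\right]$ under the stated constraints, and then repeats the Chebyshev argument of \cref{concentration2}. Your one deviation---deriving the suppression $2^n\,\underset{\mathcal{B}}{\mathbb{E}}[p_x]\to 0$ from the second-moment bound via Cauchy--Schwarz rather than computing the first moment separately---is a small but genuine tightening, since it shows the listed hypotheses alone suffice, with no analogue of the ``unless $r=0$'' caveat (i.e., no separate assumption like $t_{03}\neq 0$) that the argument of \cref{concentration2} would otherwise require.
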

\begin{proof}
Recalling that 
\begin{equation}
    S = \tfrac{1}{2}\left(I_2\otimes I_2 + \sigma_x\otimes\sigma_x + \sigma_y\otimes\sigma_y + \sigma_z\otimes\sigma_z \right), 
\end{equation} 
we have
\begin{align}
    \tilde{M}_{U_1, \mathsf{N}}(I_4) &= (1 - a)I_4 + 2a S, \\ 
    \tilde{M}_{U_1, \mathsf{N}}(S) &= bI + (1 - 2b) S,
\end{align}
where the operators are as defined in Section \ref{sec:high depth}. With a similar discussion to Section \ref{sec:high depth}, we have 
\begin{equation}
    \underset{\mathcal{B}}{\mathbb{E}}[p_x^{2}] \leq \mu^n\eta^n \exp\left[n\dfrac{\nu}{\mu} \mathrm{e}^{-c(d-1)}\right]
\end{equation}
if $\mu,\nu \geq 0$ and $0< c \leq 1$. Then, following the argument in Section \ref{sec:high depth}, with the constraints being
\begin{align}
    &\mu \geq 0, \\
    &\nu \geq 0, \\ 
    &0 < c \leq 1, \\ 
    &0\leq 4\mu\eta < 1. 
\end{align}
\end{proof}

\begin{remark}
Just as in \cref{remark7}, our calculations in \cref{section: general noise1} indicate that the setup in \cref{section: general noise1} is not $2^{n-1}$--anticoncentrated, provided the constraints in \cref{lowdepth} and \cref{highdepth} are satisfied.
\end{remark}

\section{Effect of the last layer of noise}
\label{measurement}
Note that noises like the amplitude damping noise---our emblematic non--unital noise---try to push the output distribution towards a fixed state. However, a layer of random gates tries to "scramble" the distribution. In this sense, there are two opposite effects at play. This might lead one to conjecture that the behavior of the final distribution depends on which layer we end with: if ending with a layer of noiseless random gates causes anticoncentration, and if ending with a layer of amplitude damping noise causes lack of anticoncentration. 

However, we give strong evidence that this is not the case and that lack of anticoncentration occurs even if we terminate with a last layer of noiseless gates. Our results in this section are not as general as those of the other sections: they are only meant to justify our intuition. Furthermore, note that terminating with a last layer of gates is not a realistic assumption, as all known hardware has measurement noise immediately before the measurement operators, which can also be modelled as an amplitude damping noise channel; for instance, see  \cite{kandala2017hardware, Arute2019}. Because of this, the circuit model that follows is just a toy model for analysis. What we will show is that if we "fix" a last layer of noiseless single-qubit gates, then for all choices of this layer, apart from a set of choices with measure zero, we get provable lack of anticoncentration, according to \cref{Definiton: concentration}.

\subsection{Conventions}
Note that a single qubit gate $U$ is parametrized as
\begin{equation}
  U(\theta, \phi) = 
\begin{pmatrix}
&\cos \theta \cdot e^{i \phi} && \sin \theta \\
&-\sin \theta && \cos \theta \cdot e^{-i \phi}& \\
\end{pmatrix}.
\end{equation}
Let $U_i(\theta_i, \phi_i)$ be the unitary applied to the $i^{\text{th}}$ qubit in the last layer. 

\subsection{Proving lack of anticoncentration}
We will consider a fixed last layer of single qubit gates, as shown in \cref{fig101111}, and show that for almost all choices of this layer, the output distribution exhibits a lack of anticoncentration.

\begin{figure}[H]
 \begin{centering}
    \includegraphics[width=0.25\textwidth]{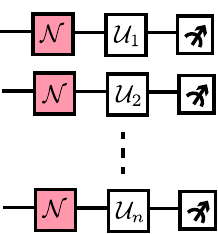}
    \caption{$U_1$, $U_2$, \ldots, $U_n$ are single qubit gates.
    }
    \label{fig101111}
     \end{centering}
\end{figure} 

As we discuss later in Appendix \ref{noiseless}, which strings have low probability weight and which ones have higher probability weight are now determined not by the Hamming weight of the strings but by which gates were applied in the last layer.

\begin{corll}
\label{last layer}
    Let $\mathcal{B}$ be an ensemble of amplitude-damped random quantum circuits, with noise strength $q$. Additionally, before measurement, for every $i \in [n]$, let $U_i(\theta_i, \phi_i)$---a single qubit, noiseless gate---be applied to qubit $i$. 
    Then,
    \begin{equation}
        \mathcal{Z} \geq (1 + q^2\cos^2 2\theta)^n - 1, 
    \end{equation}
    where 
    \begin{equation}
        \theta \coloneqq \argmin_{\theta_j: j\in[n]} |\cos 2\theta_j|. 
    \end{equation}
\end{corll}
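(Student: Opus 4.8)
The plan is to mirror the proof of \cref{first_theorem}, the only structural change being that the circuit now terminates in a fixed layer of noiseless gates $\bigotimes_{i=1}^n U_i(\theta_i,\phi_i)$ rather than in a layer of noise. First I would rewrite the scaled collision probability in the Pauli form
\begin{equation}
\mathcal{Z} = \underset{\mathcal{B}}{\mathbb{E}}\left[\sum_{p\in\{0,3\}^n,\, p\neq 0^n}\Tr(\sigma_p~\mathcal{C}(\ketbra{0}{0}))^2\right],
\end{equation}
exactly as in \cref{lack_anticoncentration}. The difference is that when I ``remove the last layer'' I must first pull the measurement Pauli back through the fixed unitary and only then through the amplitude-damping adjoint. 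Concretely, writing $U = \bigotimes_i U_i(\theta_i,\phi_i)$ and letting $\mathcal{C}'$ denote the circuit with the final noise layer removed, I would use $\Tr\big(\sigma_p\,U\,\mathcal{N}^{\otimes n}(\mathcal{C}'(\ketbra{0}{0}))\,U^\dagger\big) = \Tr\big((\mathcal{N}^\dagger)^{\otimes n}(U^\dagger\sigma_p U)\,\mathcal{C}'(\ketbra{0}{0})\big)$, so that the operator of interest becomes $(\mathcal{N}^\dagger)^{\otimes n}(U^\dagger\sigma_p U)$.

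The key step is the single-qubit computation. A direct calculation gives $U_i(\theta_i,\phi_i)^\dagger\,\sigma_z\,U_i(\theta_i,\phi_i) = \cos 2\theta_i\,\sigma_z + \sin 2\theta_i\cos\phi_i\,\sigma_x + \sin 2\theta_i\sin\phi_i\,\sigma_y$, and then applying the amplitude-damping adjoint (which satisfies $\mathcal{N}^\dagger(\sigma_z) = q I_2 + (1-q)\sigma_z$ and $\mathcal{N}^\dagger(\sigma_x)=\sqrt{1-q}\,\sigma_x$, $\mathcal{N}^\dagger(\sigma_y)=\sqrt{1-q}\,\sigma_y$) yields
\begin{equation}
\mathcal{N}^\dagger\big(U_i^\dagger\sigma_z U_i\big) = q\cos 2\theta_i\, I_2 + (1-q)\cos 2\theta_i\,\sigma_z + \sqrt{1-q}\,\sin 2\theta_i\,(\cos\phi_i\,\sigma_x + \sin\phi_i\,\sigma_y).
\end{equation}
The essential point is that the coefficient of $I_2$ is $q\cos 2\theta_i$. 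Tensoring over the qubits and using $\mathcal{N}^\dagger(I_2)=I_2$ for the slots with $p_i=0$, the all-identity component of $(\mathcal{N}^\dagger)^{\otimes n}(U^\dagger\sigma_p U)$ has coefficient $q^{w_p}\prod_{i:\,p_i=3}\cos 2\theta_i$, while the remaining terms are genuine nontrivial Paulis.

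From here the argument closes exactly as in \cref{first_theorem}. Since $\mathcal{C}'$ still terminates in a layer of Haar-random gates, averaging over them makes every cross term $\sigma_s\otimes\sigma_{s'}$ with $s\neq s'$ vanish by \eqref{eq:vanish_cross_term}, every diagonal term $\sigma_s\otimes\sigma_s$ is nonnegative as in \eqref{eq:positive_redidual}, and the all-identity term evaluates to $1$. Retaining only the identity contribution gives $\mathcal{Z}\geq \sum_{p\neq 0^n} q^{2w_p}\prod_{i:\,p_i=3}\cos^2 2\theta_i$. Finally, bounding each factor below by the minimum, $\cos^2 2\theta_i \geq \cos^2 2\theta$ with $\theta = \argmin_{j}|\cos 2\theta_j|$, and summing via the binomial theorem yields $\mathcal{Z}\geq \sum_{p\neq 0^n}(q^2\cos^2 2\theta)^{w_p} = (1+q^2\cos^2 2\theta)^n - 1$, as claimed.

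I expect the main obstacle to be conceptual rather than computational: in \cref{first_theorem} the final operation is itself the noise, which immediately produces the $rI_2$ term driving the bound, whereas here the final operation is a noiseless rotation that on its own contributes no identity component. The bound is salvaged only because the noiseless gate is preceded by a noise layer, and one must pull the measurement through both in the correct order; it is the parametrized conjugation that converts the gate angle $\theta_i$ into the surviving identity weight $q\cos 2\theta_i$. The only way the result degenerates is when some $\cos 2\theta_j = 0$ --- the measure-zero family of ``bad'' last layers alluded to in the section preamble --- which is precisely where the bound turns vacuous.
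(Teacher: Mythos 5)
Your proposal is correct and is essentially the paper's own argument: the paper folds the fixed last-layer gate into the noise, observing that $U_i\,\mathcal{N}(\cdot)\,U_i^\dagger$ maps $I_2 \to I_2 - q\cos\phi_i\sin 2\theta_i\,\sigma_x + q\sin\phi_i\sin 2\theta_i\,\sigma_y + q\cos 2\theta_i\,\sigma_z$, i.e.\ acts as an effective noise channel with $t_{03} = q\cos 2\theta_i$, and then invokes \cref{first_theorem_extension} directly. Your Heisenberg-picture computation $\mathcal{N}^{\dagger}\big(U_i^{\dagger}\sigma_z U_i\big) = q\cos 2\theta_i\, I_2 + (1-q)\cos 2\theta_i\,\sigma_z + \sqrt{1-q}\,\sin 2\theta_i(\cos\phi_i\,\sigma_x + \sin\phi_i\,\sigma_y)$ is precisely the adjoint of that statement, and the remainder of your argument (vanishing cross terms, nonnegative diagonal terms, binomial sum over the per-qubit minimum) simply inlines the proof of \cref{first_theorem_extension} rather than citing it.
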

\begin{proof}
    Note that by the action of $U_i \mathcal{N}(\cdot) U_i^\dagger$, the single-qubit identity operator $I_2$ will evolve as 
    \begin{equation}
        I_2 \to I_2 -q\cos\phi \sin2\theta_i \sigma_x +q \sin\phi\sin2\theta_i\sigma_y + q\cos2\theta_i \sigma_z. 
    \end{equation}
    Therefore, this last layer can be regarded as a noise map with $t_{03} = q\cos2\theta_i$, from which the statement follows directly using Theorem~\ref{first_theorem_extension}. 
\end{proof}

\begin{remark}
From \cref{last layer}, it holds that for any value of $\theta$, apart from those where $\cos 2\theta = 0$, the output distribution exhibits a lack of anticoncentration. The set of points for which this happens is a set of measure zero. 
\end{remark}

\begin{remark}
The utility of the last layer of gates is that it, in some sense, determines which strings have suppressed probability weights, and which ones have higher weights, in the output distribution of the circuit. This is illustrated in Appendix \ref{noiseless}. 

We can also characterize the nature of the output distribution, for certain parameter regimes, and argue about lack of anticoncentration with respect to \cref{newdef: concentration}. This is done in Appendix \ref{last2}. 
\end{remark}

\begin{remark}
Qualitatively, we believe that if a layer of amplitude damping noise is followed by a sufficiently shallow, geometrically local, random circuit, then the overall circuit still exhibits lack of anticoncentration. This is because amplitude damping "unscrambles" the output distribution, and a shallow depth geometrically local random circuit is not enough to counterbalance that and "scramble" it again, because shallow depth random circuits themselves show lack of anticoncentration \cite{Deshpande_2022}.
\end{remark}

 \section{Open problems}
 \label{open problems}
 Our paper motivates a number of open problems regarding the behavior of random circuits under non--unital noise.
 \begin{itemize}
 \item The most pertinent open question is whether the output distribution of random quantum circuits, with the non--unital noise models that we studied, are classically hard to sample from. 

 To answer this question, one potential approach is to figure out whether anticoncentration is a necessary feature in the classical sampling procedure devised in \cite{Bremner_2017, duan, aharonov2022polynomialtime} or whether it just comes up as a proof artefact during analysis of the sampler, and a different technique of analysis can potentially extend their result to regimes for which there is no anticoncentration. 

  If proof of classical hardness of sampling can be found, it might help in harnessing our results to design quantum advantage demonstrations with ensembles that have non--unital noise, which will complement existing quantum advantage demonstrations where the focus is on depolarizing noise \cite{Arute2019}. 

  Note that, to the best of our knowledge, no trivial sampling algorithm, for example those that sample from the fixed point of the noise channel, works for our non--unital noise models. While circuits with the depolarizing noise channel after every gate are at least inverse quasipolynomially close, in trace distance, to the maximally mixed state --- the fixed point of the depolarizing channel --- at sufficiently large depths, noise models like amplitude damping are not known to show such behavior. Certain standard techniques to show this closeness, like the data processing inequality of quantum relative entropy \cite{aharonov1996limitations, Wang_2021, franca}, do not hold for the amplitude damping channel.

  \begin{remark}Note that if the non--unital noise present is only in the last layer, and the rest of the circuit only has depolarizing noise, then techniques from \cite{aharonov2022polynomialtime} apply to classically sample from this circuit in polynomial time. One just stores an efficient truncated Fourier basis representation of the state, until the last layer of noise is encountered, and then just brute--force simulates the last layer of noise. But this trick does not work when every gate is followed by a noise channel that has a non--unital component.
 \end{remark}
 
\item Even though we get lack of anticoncentration, our results are different from those in \cite{Deshpande_2022}, in the sense that the lack of anticoncentration, for our case, is not "catastrophic enough" to ensure easiness of computing output probabilities to additive precision $2^{-n}$. It remains open whether that is a classically hard task.

\item  Moreover, is there any dependence on classical simulation complexity and the rate of the noise? Is there a "percolation threshold"? That is, if the amplitude damping noise strength is above a sufficiently high enough constant, do we get any phase transition in classical simulation complexity?

\item Even though the global distribution does not show anticoncentration, could it still be "locally" anticoncentrated --- that is, could there be collections of bitstrings such that the distribution looks flat "locally" when we consider the probability mass of only those bitstrings? Depending on how much locally anticoncentrated the distribution is, one could either design new hardness conjectures or modify the existing classical samplers to work in this regime.

\item It remains open whether, in the low noise regime, any amplitude damping present can be approximated by a global depolarizing noise: this is also known as the "white noise approximation." While techniques from \cite{dalzell2021random} indicate that this might be true for circuits which only have a last layer of amplitude damping noise, it remains open whether their techniques could be generalized to circuits with the middle layer of amplitude damping noise.

\item For a circuit ensemble, if the output distribution anticoncentrates, or is primarily "flat", it renders the ensemble useless for any gradient-descent based optimization tasks, because the optimization subroutine runs into the barren plateau problem \cite{McClean_2018, Wang_2021, napp2022quantifying}. 

More concretely, the gradient of the cost function of the optimization task vanishes in the ``flat" landscape and the optimization gets stuck. A concentrated distribution may potentially escape this phenomenon. This is because the second moment of a distribution diverges with respect to the number of qubits, and convergence of the second moment is necessary for barren plateaus for certain optimization setups, as was rigorously shown in \cite{napp2022quantifying}. So, for a wide range of cost functions, the "barren plateau" phenomenon is potentially escaped. 

Thus, an interesting avenue of future exploration is if our ensembles can be utilized for optimization tasks, or whether other subtleties hinder their practical use. 

\item Even though anticoncentration is a key ingredient of existing easiness of sampling results from random quantum circuits, easiness of computing the expectation value of certain observables may not require anticoncentration. 

In a recent work, a polynomial time classical estimator was proposed by \cite{shao2023simulating} to compute such expectation values for random quantum circuits with only depolarizing noise, by exploiting a special property of the noise --- the fact that only polynomially many Pauli paths have non--trival path weights. This is sketched in \cref{Expectation_Values} and this technique does not require anticoncentration. 

However, this property is very special to depolarizing noise and it remains open whether similar techniques could be extended to other noise channels, like the non--unital channels considered here. 

\item For our circuits, with respect to restricted parameter regimes, we showed how certain strings of the output distribution look like, in \cref{sec:high depth},  \cref{last2}, and \cref{section: general noise1}. We believe that the fact that we have to restrict our parameter regimes is just an artefact of the proof technique. It remains open whether we could extend our results to a wider set of parameters. 

\item It remains open how well a metric like linear cross entropy tracks circuit fidelity for our setup. More specifically, can we argue about the existence of a sharp crossover region, from low noise to high noise as was observed in \cite{morvan2023phase, ware2023sharp}, with the low noise regime being where linear cross entropy accurately tracks fidelity, and the high noise regime being where the connection breaks?

\item Note that our results in \cref{lack_anticoncentration} and \cref{section: low depth} are agnostic to the choice of architecture, as long as the circuits are parallel and geometrically local. However, our calculations in \cref{sec:high depth} make use of stat--mech models, which are known to work for $1$--D geometrically local circuits, but the techniques do not generalize to $2$--D. An open question is whether architecture agnostic techniques help us in proving the bound in \cref{sec:high depth}.
\end{itemize}

\section{Acknowledgements}
S.G. thanks Changhun Oh, Yunchao Liu, Yinchen Liu, Jordan Docter, Abhinav Deshpande, and Alexander Dalzell for helpful comments on a draft of the manuscript. K.S. thanks Christa Zoufal for helpful comments on a draft of the manuscript, and Sergey Bravyi, Bryan A. O'Gorman, Oles Shtanko, and Ryan Sweke for insightful discussions. B.F. and S.G. acknowledge support from AFOSR (award number FA9550-21-1-0008). 
K.K. was supported by a Mike and Ophelia Lazaridis Fellowship, the Funai Foundation, and a Perimeter Residency Doctoral Award.
This material is based upon work partially supported by the National Science Foundation under Grant CCF-2044923 (CAREER) and by the U.S. Department of Energy, Office of Science, National Quantum Information Science Research Centers as well as by DOE QuantISED grant DE-SC0020360. 
Any opinions, findings, and conclusions or recommendations expressed in this material are those of the author(s) and do not necessarily reflect the views of the National Science Foundation.

\newpage

\bibliographystyle{alphaurl}
\bibliography{MasterBib.bib}
\newpage 
\appendix
\section{Effect of twirling}
\label{sec: twirling}
In this section, we explain how twirling can be used to estimate the first moment of certain expressions related to random quantum circuits. 
\subsection{Preliminaries}
Let $\mathcal{N}$ be an arbitrary single--qubit noise channel. For a density matrix $\rho$, and a single qubit Haar random gate $U$, consider the following identity \cite{Emerson2005}:
\begin{equation}
\label{eqn: twirling}
\begin{aligned}
\Phi(\rho) &= \underset{U}{\mathbb{E}}\big[U^{\dagger} ~\mathcal{N}(U \rho U^{\dagger})~U \big] \\ &= (1-\lambda) \rho + \frac{\lambda}{2} I_2,
\end{aligned}
\end{equation}
for an appropriate choice of a constant $\lambda$. Note that the expression on the $\mathsf{LHS}$ is the expression of a depolarizing channel with noise strength $\lambda$. This operation, of averaging out an arbitrary error channel to convert it into a depolarizing channel, is known in literature as "twirling," and finds use in randomized benchmarking, randomized compiling, error mitigation etc \cite{Emerson2005, Wallman_2016}.

Twirling is a useful tool to gain insight into the first moments of quantities of interest for random quantum circuits. However, it is not useful to analyze second or higher-order moments. Thus, it is not a valid tool to analyze collision probabilities.

We will show this with four examples. A brief comment about notations: all gates shown in the figures below are Haar random gates, either single qubit ones or two-qubit ones depending on the context, and all noise channels shown are arbitrary single qubit  CPTP maps.

\subsection{Computing the first moment with a last layer of noiseless gates}
\label{section: last layer of noise}
Consider a noisy random quantum circuit ensemble $\mathcal{B}$ and let $\mathcal{N}$ be the noise after every gate, as shown in \cref{fig1011}. 
\begin{figure}[H]
 \begin{centering}
    \includegraphics[width=0.35\textwidth]{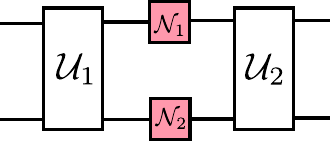}
    \caption{A portion of the circuit drawn from the ensemble $\mathcal{B}$. $U_1$ and $U_2$ are two qubit Haar random gates and $\mathcal{N}_1$ and $\mathcal{N}_2$ are arbitrary noise channels.
    }
    \label{fig1011}
     \end{centering}
\end{figure}

Since the Haar measure is left and right invariant with respect to composition by a unitary, one could rewrite the entire circuit, without loss of generality, by doing the following before and after every noise channel:

\begin{figure}[H]
 \begin{centering}
    \includegraphics[width=0.45\textwidth]{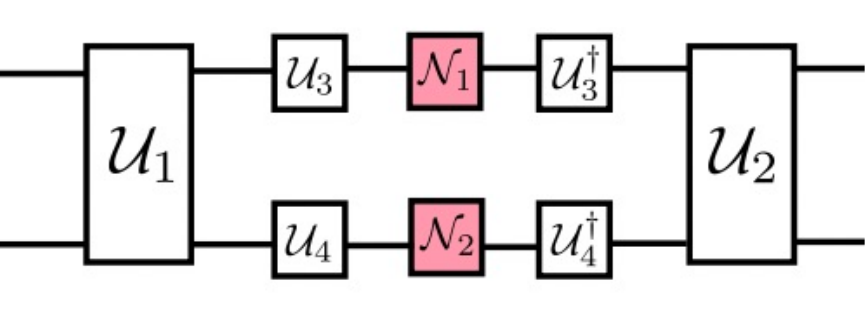}
    \caption{$U_3$ and $U_4$ are single qubit Haar random gates.
    }
    \label{fig10111}
     \end{centering}
\end{figure} 

\noindent Let $\rho^{(1)}$ be the marginal density matrix of the first qubit of \cref{fig10111} immediately before applying $U_3$ and let $\rho^{(1)}_f$ be the same immediately after applying $U_3^{\dagger}$. As is evident, 
\begin{equation}
\rho^{(1)}_{f} = \underset{U_3}{\mathbb{E}}\big[U_3^{\dagger} ~\mathcal{N}_1(U_3 ~\rho^{(1)} U_3^{\dagger})~U_3 \big],
\end{equation}
which, from \eqref{eqn: twirling}, is a depolarizing channel. Every noise channel in the circuit can be equivalently modeled as a depolarizing channel, in this way and calculating the first moment of the equivalent circuit suffices to calculate the first moment of the original circuit.

\subsection{Computing the first moment with a last layer of noisy gates}

If the circuit terminates with a last layer of noise, before the measurement layer, as illustrated in \cref{fig101}, then there is no way to twirl the last layer of noise, using the technique in \cref{section: last layer of noise}. Hence, these circuits cannot be twirled.

\begin{figure}[H]
 \begin{centering}
    \includegraphics[width=0.25\textwidth]{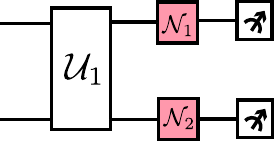}
    \caption{If the circuit terminates with a last layer of noiseless gates, then there is no way to "sandwich" the last layer of noise between a Haar random single qubit unitary and its adjoint. So, twirling does not work. 
    }
    \label{fig101}
     \end{centering}
\end{figure}

\subsection{Computing the expected linear cross--entropy score}
\label{cross-entropy}
The linear cross entropy is given by the following quantity:
\begin{equation}
\label{xeb}
\mathsf{XEB} = \underset{\mathcal{B}}{\mathbb{E}}\left[ \sum_{x \in \{0, 1\}^n} p_{\text{ideal}}(x) p_{\text{noisy}}(x) \right] = 2^n  \cdot \underset{\mathcal{B}}{\mathbb{E}}\left[ p_{\text{ideal}}(0^n) p_{\text{noisy}}(0^n) \right] ,
\end{equation}
where $p_{\text{ideal}}(x)$ is the probability of seeing bitstring $x$ in the output distribution of a circuit drawn from $\mathcal{B}$ with all the noise channels removed, and $p_{\text{noisy}}(x)$ is the probability of seeing bitstring $x$ in the noisy output distribution. We have assumed that there is no last layer of noise. Note that \eqref{xeb} can be written as
\begin{equation}
\mathsf{XEB} = 2^{n} \cdot \underset{\mathcal{B}}{\mathbb{E}}\bigg[\Tr\left(|0^n \rangle \langle 0^n| \otimes |0^n \rangle \langle 0^n| ~\rho \otimes \tilde{\rho} \right)\bigg],
\end{equation}

\noindent where $\rho$ is the density matrix corresponding to the final state of the circuit with all the noise channels removed, and $\tilde{\rho}$ is the density matrix corresponding to the circuit with noise. Twirling can be used to estimate this quantity, as is shown in \cref{fig10}.

\begin{figure}[H]
 \begin{centering}
    \includegraphics[width=0.8\textwidth]{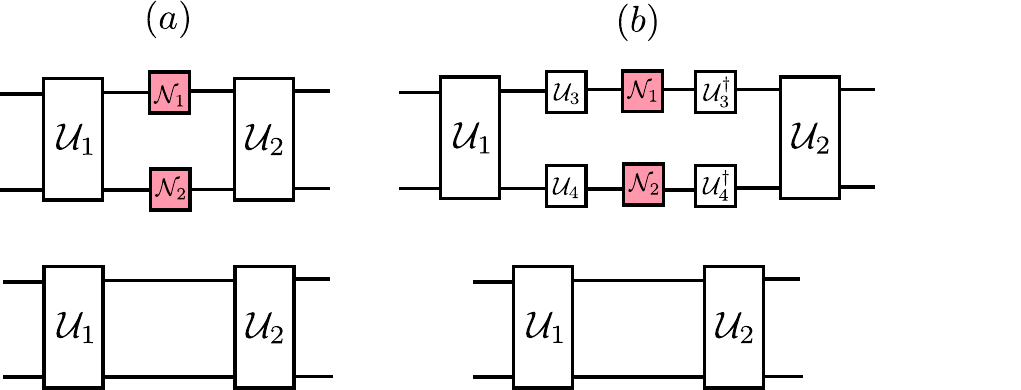}
    \caption{(a) A portion of the circuit that prepares $\tilde{\rho}$ and a portion of the circuit that prepares $\rho$ is shown. 
    (b) Note that $U U^{*} = \mathbb{I}$, where $\mathbb{I}$ is the single qubit identity operator, and $U$ is any unitary operator. This means there is no dependence of $U_3$ and $U_4$ in the "noiseless copy" of the circuit. Because of that, the quantity we are evaluating can be simplified with \eqref{eqn: twirling}, just like we did for first moment quantities previously.
    }
    \label{fig10}
     \end{centering}
\end{figure}

\subsection{Computing the second moment}
For a noisy ensemble $\mathcal{B}$, let the task be to compute
\begin{equation}
\label{cprob}
\underset{\mathcal{B}}{\mathbb{E}}\left[\sum_{x \in \{0, 1\}^n} p_x^2\right] = 2^n \cdot \underset{\mathcal{B}}{\mathbb{E}}\left[p_{0^n}^2\right],
\end{equation}
where we have assumed that there is no last layer of noise. Note that \cref{cprob} can be written as
\begin{equation}
2^n \cdot \underset{\mathcal{B}}{\mathbb{E}}\left[p_{0^n}^2\right] = 2^{n} \cdot \underset{\mathcal{B}}{\mathbb{E}}\bigg[\Tr\left(|0^n \rangle \langle 0^n| \otimes |0^n \rangle \langle 0^n| ~\tilde{\rho} \otimes \tilde{\rho} \right)\bigg],
\end{equation}

\noindent where $\tilde{\rho}$ is the density matrix corresponding to the final state of the noisy circuit. Here, we cannot hope to "twirl" the noise because both copies of the state are noisy, as illustrated in \cref{fig1}. So, checking whether the distribution anticoncentrates or not cannot be done by twirling.

\begin{figure}[H]
 \begin{centering}
    \includegraphics[width=0.25\textwidth]{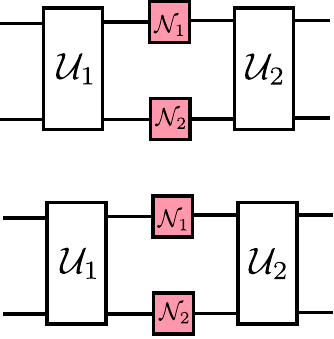}
    \caption{Two copies of a portion of the circuit that prepares $\tilde{\rho}$ is shown.
    }
    \label{fig1}
     \end{centering}
\end{figure} 
\noindent In \cref{fig1}, if we were to twirl $\mathcal{N}_1$ in a similar way as \cref{cross-entropy}, we would need to evaluate
\begin{equation}
\underset{U}{\mathbb{E}}\big[U^{\dagger} \otimes U^{\dagger} ~\mathcal{N}_1^{\otimes 2} (U \otimes U ~\rho \otimes \rho ~U^{\dagger} \otimes U^{\dagger})~U \otimes U \big],
\end{equation}
\noindent where $U$ is a single qubit Haar random unitary, which no longer simplifes to the depolarizing channel.

\section{Computation of the second moment of a special observable for amplitude damped random quantum circuits}
\label{appendixA}
We state the following Lemma. 
\begin{lemma}
\label{lemma13}
Consider the single qubit channel $\mathcal{N}$ applied $d$ times to $|0 \rangle \langle 0|$. Denote the resultant channel by $\mathcal{N}^d \coloneqq \underbrace{\mathcal{N}\circ\mathcal{N}\circ\cdots \circ\mathcal{N}}_{d~\text{times}}$. 
Suppose that 
\begin{equation}
\left\langle\ketbra{0}{0},\mathcal{N}^d(\ketbra{0}{0})\right\rangle = \kappa + \tau\lambda^d
\end{equation}
with some $\kappa \geq \tfrac{1}{2}$, $\tau,\lambda\geq 0$. 
Then,
\begin{equation}
\underset{\mathcal{B}}{\mathbb{E}}\left[\langle Z_i \rangle^2 \right] \geq \dfrac{4\left(\kappa - \tfrac{1}{2} + \tau\lambda^d\right)^2}{30^d},
\end{equation}
where $\mathcal{B}$ is an ensemble of noisy random quantum circuits of depth $d$ where the noise is modelled by $\mathcal{N}$, and $\langle Z_i \rangle \coloneqq 2 p_i - 1$ with $p_i$ being the marginal probability of getting outcome $i$ for a single qubit.
\end{lemma}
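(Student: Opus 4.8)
The plan is to reduce the claim to a genuine two–copy (second–moment) computation and then extract the stated survival factor from a single dominant stat–mech configuration. First I would record the elementary identity $\langle Z_i\rangle=\pm\Tr(\sigma_z\rho_i)$, where $\rho_i$ is the reduced state of qubit $i$ of the final noisy state $\rho=\mathcal{C}(\ket{0}\bra{0}^{\otimes n})$; the sign depends on $x_i$ but is irrelevant after squaring. Hence
\begin{equation}
\underset{\mathcal{B}}{\mathbb{E}}[\langle Z_i\rangle^2]
=\underset{\mathcal{B}}{\mathbb{E}}\big[\Tr(\sigma_z\rho_i)^2\big]
=\Tr\big[(\sigma_z)_i\otimes(\sigma_z)_i\,\underset{\mathcal{B}}{\mathbb{E}}[\rho\otimes\rho]\big],
\end{equation}
where $(\sigma_z)_i$ denotes $\sigma_z$ on qubit $i$ of one copy and identity elsewhere, so the object of interest is a doubled–copy second moment. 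I would also observe that the target is a rescaled single–copy quantity: since $\Tr[\sigma_z\mathcal{N}^d(\ket{0}\bra{0})]=2(\kappa+\tau\lambda^d)-1=2(\kappa-\tfrac12+\tau\lambda^d)=:M_0$, the right–hand side of the lemma is exactly $M_0^2/30^d$.

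A tempting shortcut is the variance bound $\mathbb{E}[\langle Z_i\rangle^2]\ge(\mathbb{E}[\langle Z_i\rangle])^2$. Removing the last noise layer and averaging the final gate layer sends the pre–noise marginal of qubit $i$ to $I_2/2$, so $\mathbb{E}[\langle Z_i\rangle]=\tfrac12\Tr[\sigma_z\mathcal{N}(I_2)]=t_{03}$, a \emph{depth–independent} bound $t_{03}^2$. This is incomparable to $M_0^2/30^d$ (stronger at large depth, weaker at small depth), so it does not prove the lemma; a genuine second–moment argument is required.

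For that I would pass to the doubled picture and average the Haar gates layer by layer, producing the twirls $M_{U_1}$ interleaved with the noise $\mathsf{N}=\mathcal{N}\otimes\mathcal{N}$, so that every operator is supported on the span of $\{I,S\}$ on each copied pair, as in \cref{first thm}. With the output boundary fixed by $\sigma_z\otimes\sigma_z$ at qubit $i$ (which overlaps only $S$, since $\Tr[(\sigma_z\otimes\sigma_z)S]=2$ while $\Tr[(\sigma_z\otimes\sigma_z)I]=0$) and the input boundary fixed by $\ket{0}\bra{0}^{\otimes2}$ (which overlaps both $I$ and $S$ positively), all stat–mech weights are nonnegative; this is precisely the content of \cref{lem:modified-circuit}. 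Nonnegativity lets me lower bound the full sum by retaining a single configuration, the ``ferromagnetic tube'' that assigns $S$ along the backward worldline of qubit $i$ (tying the two copies together) and $I$ everywhere else.

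Finally I would evaluate this tube: along it the two copies lock, the interleaved twirls act as the identity on $\{I,S\}$, and the $d$ noise layers experienced by the worldline compose into the single–copy propagator whose survival is governed exactly by $\langle 0|\mathcal{N}^d(\ket{0}\bra{0})|0\rangle=\kappa+\tau\lambda^d$, with the two boundary factors of $\sigma_z$ contributing $M_0^2$. Each of the $d$ gate twirls costs a normalization I would crudely bound below by $1/30$, traceable to the decomposition $\tfrac16(I_4+S)=\tfrac1{10}(2I+S)-\tfrac1{30}(I_4-2S)$ used in \cref{first thm}, which produces the factor $30^{-d}$ and hence $\mathbb{E}[\langle Z_i\rangle^2]\ge M_0^2/30^d$. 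The main obstacle is the bookkeeping where the worldline passes through a two–qubit gate: one must trace out the partner (``branching'') qubit, verify that the surviving per–layer weight is at least $1/30$, and confirm that the tied evolution reproduces $\mathcal{N}^d$ on $\ket{0}\bra{0}$ rather than a doubled channel. Establishing that identification, together with nonnegativity of all discarded configurations, is where the real work lies; the remainder is the geometric bookkeeping already rehearsed in \cref{sec:high depth}.
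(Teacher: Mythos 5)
Your proposal breaks down at exactly the step you flag as ``where the real work lies,'' and the two facts you hope to lean on are both false. First, the configuration you retain has \emph{zero} weight, not weight $M_0^2/30^d$: in the two-copy stat--mech model (the one set up in \cref{appendix:proof_of_CR}), a Haar-averaged two-qubit gate forces both of its output legs to carry the \emph{same} label ($II\to II$, $SS\to SS$, $IS\to\tfrac{2}{5}(II+SS)$), so a width-one ``tube'' with $S$ on qubit $i$'s worldline and $I$ on every other qubit --- in particular on the gate partners of qubit $i$ --- is not a configuration the model supports. Even granting nonnegativity of all weights, your argument would only yield $\mathbb{E}_{\mathcal{B}}[\langle Z_i\rangle^2]\geq 0$. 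Second, nonnegativity itself is neither available nor ``the content of \cref{lem:modified-circuit}'': that lemma is a comparison inequality $\mathbb{E}_{\mathcal{B}'}[p_x^2]\leq\mathbb{E}_{\tilde{\mathcal{B}}'}[p_x^2]$ between two ensembles under the condition $0\leq 1-a-2b\leq1$, not a statement about signs of path weights. In the expansion relevant to $\langle Z_i\rangle^2$ the weights are genuinely signed; for instance, a single two-qubit twirl applied to your boundary observable gives
\begin{equation}
\underset{U}{\mathbb{E}}\left[U^{\dagger\otimes 2}\,\bigl(\sigma_z\otimes I_2\bigr)\otimes\bigl(\sigma_z\otimes I_2\bigr)\,U^{\otimes 2}\right]
=-\tfrac{1}{15}\,I_{16}+\tfrac{4}{15}\,S
=\tfrac{4}{15}\left(S-\tfrac{1}{4}\,I_{16}\right),
\end{equation}
where $S$ swaps the two copies of the gate's two qubits: the coefficient of the identity configuration is negative. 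What \emph{is} nonnegative is not a path weight but the observable $S-\tfrac14 I_{16}$ tested on product inputs, $\Tr\bigl[(S-\tfrac14 I_{16})\,\rho\otimes\rho\bigr]=\Tr[\rho^2]-\tfrac14\geq0$ for any two-qubit $\rho$; a correct proof has to recurse on such purity-deficit quantities rather than discard configurations.

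Third, your identification of the $S$-locked (``tied'') evolution with the single-copy survival $\langle0|\mathcal{N}^d(|0\rangle\langle0|)|0\rangle=\kappa+\tau\lambda^d$ is also incorrect: inside an $S$-domain the noise enters through two-copy transfer coefficients such as the $S\to S$ weight $1-2b$ of $\tilde{M}_{U_1,\mathsf{N}}$, which are second-moment data of $\mathcal{N}$, not first-moment data. Pure amplitude damping with $q=1$ makes this stark: there $\kappa=1$, $\tau=0$, so the claimed numerator $4(\kappa-\tfrac12+\tau\lambda^d)^2$ equals $1$ for all $d$, while $1-2b=\tfrac13$, so any tube-type weight decays exponentially in $d$; the lemma nevertheless holds in that case for the trivial reason that the final state is exactly $|0^n\rangle\langle0^n|$. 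Your route therefore cannot reproduce the stated bound. Finally, for calibration: the paper itself states this lemma \emph{without} proof, attributing the special case $\kappa=\tau=\tfrac12$, $\lambda=1-p$ to \cite{Deshpande_2022}; the argument there is of the positivity-recursion type indicated above (and the present statement additionally requires it for arbitrary $\mathcal{N}$ satisfying the $\kappa+\tau\lambda^d$ hypothesis, where two-copy weight positivity certainly cannot be assumed), so your proposal is not a reconstruction of that proof either.
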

\noindent We may apply the above lemma to the noise models we consider. 
\begin{enumerate}
    \item When $\mathcal{N} = \mathcal{N}^{(\textup{amp})}_q\circ\mathcal{N}^{(\textup{dep})}_p$, 
   a simple calculation leads to  
   \begin{align}
   \kappa &= \frac{q + \tfrac{p}{2}(1-q)}{1 - (1-p)(1-q)} \\
   \tau &= 1 - \frac{q + \tfrac{p}{2}(1-q)}{1 - (1-p)(1-q)}\\
    \lambda &= (1-p)(1-q).    
   \end{align} 
   It is obvious that $\lambda \geq 0$. 
   To show $\kappa \geq \tfrac{1}{2}$ and $\tau \geq 0$, we show that 
   \begin{equation}
       \frac{1}{2} \leq \frac{q + \tfrac{p}{2}(1-q)}{1 - (1-p)(1-q)} \leq 1. 
   \end{equation}
   The left inequality can be shown as 
   \begin{align}
           \frac{q + \tfrac{p}{2}(1-q)}{1 - (1-p)(1-q)} 
           &= \frac{q + \tfrac{p}{2}(1-q)}{q + p(1-q)} \\ 
           &\geq \frac{\tfrac{q}{2} + \tfrac{p}{2}(1-q)}{q + p(1-q)} \\ 
           &= \frac{1}{2}. 
    \end{align}
   Similarly, the right inequality can be shown as  
   \begin{align}
           \frac{q + \tfrac{p}{2}(1-q)}{1 - (1-p)(1-q)} 
           &= \frac{q + \tfrac{p}{2}(1-q)}{q + p(1-q)} \\ 
           &\leq \frac{q + p(1-q)}{q + p(1-q)} \\ 
           &= 1. 
    \end{align}
   
   \item When $\mathcal{N} = \mathcal{N}^{(\textup{dep})}_p\circ\mathcal{N}^{(\textup{amp})}_q$, 
   we have 
   \begin{align}
   \kappa &= \dfrac{\tfrac{p}{2} + (1-p)q}{1 - (1-p)(1-q)} \\
   \tau &= 1 - \dfrac{\tfrac{p}{2} + (1-p)q}{1 - (1-p)(1-q)}\\
    \lambda &= (1-p)(1-q).    
   \end{align} 
\noindent   With a similar argument, we have 
   \begin{equation}
       \frac{1}{2} \leq \dfrac{\tfrac{p}{2} + (1-p)q}{1 - (1-p)(1-q)} \leq 1, 
   \end{equation}
   so $\kappa \geq \tfrac{1}{2}$ and $\tau,\lambda\geq 0$. 
\end{enumerate}

\noindent This lemma implies that if a given noise channel $\mathcal{N}$ satisfies 
\begin{equation}
    \kappa -\frac{1}{2} + \tau\lambda^d > 0, 
\end{equation}
then there exists positive numbers $a,b >0$ such that 
\begin{equation}
    \underset{\mathcal{B}}{\mathbb{E}}\left[\langle Z_i \rangle^2 \right] \geq b e^{-an}. 
\end{equation}
\begin{remark}
A special case of this lemma appears in \cite{Deshpande_2022}, where $\kappa = \frac{1}{2}, \tau = \frac{1}{2}$, and $\lambda = (1 - p)$, where $p$ is the strength of the depolarizing channel.
\end{remark}
\section{Discussion on conditional probabilities}
\label{conditional}
In this section, we will evaluate the first moment of conditional probability. 
\begin{lemma}
Let $\mathcal{B}$ be an ensemble of noisy random quantum circuits with noise channel $\mathcal{N}$. 
Let $i \in \{1,2,\ldots,n\}$, and let $J_i$ be a subset of $\{1,2,\ldots,n\}$ that does not contain $i$. 
Then, 
\begin{equation}
\underset{\mathcal{B}}{\mathbb{E}}\left[\Pr\big(X_i = x_i | \{X_j = x_j\}_{j \in J_i}\big)\right] = \dfrac{1}{2}\left[\bra{x_i}\mathcal{N}\left(I_2\right)\ket{x_i}\right]. 
\end{equation}
\end{lemma}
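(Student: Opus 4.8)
The plan is to reduce the expected conditional probability to a single-qubit Haar twirl on qubit $i$. The essential point is that averaging a Haar-random single-qubit gate on qubit $i$ sends the reduced operator on that qubit to the maximally mixed state $I_2/2$, after which the last noise channel $\mathcal{N}$ and the measurement $\bra{x_i}\cdot\ket{x_i}$ act to produce exactly $\tfrac{1}{2}\bra{x_i}\mathcal{N}(I_2)\ket{x_i}$. This is in the same spirit as the ``averaging out the last layer'' technique used elsewhere in the paper, but applied to a single qubit.

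First I would use the left-invariance of the Haar measure to augment the ensemble: insert an independent single-qubit Haar-random gate $V_i$ on qubit $i$ immediately after the last gate acting on qubit $i$ and before the final noise channel $\mathcal{N}$ on that qubit. Because the last gate on qubit $i$ is itself Haar-random (single- or two-qubit), composing it with $(V_i\otimes I)$ leaves its distribution unchanged, so the augmented ensemble produces identical output statistics and hence the same value of the quantity we want to compute. This step is what lets a single argument cover both the case where qubit $i$'s last gate is a single-qubit gate and the case where it is a two-qubit gate shared with a neighbour.

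Next, fix all the randomness except $V_i$ and write the conditional probability as $N/D$, with $N = \Pr(X_i = x_i, \{X_j = x_j\}_{j\in J_i})$ and $D = \Pr(\{X_j = x_j\}_{j\in J_i})$. Conditioning on the outcomes in $J_i$ and tracing out every qubit other than $i$ produces an (unnormalised) single-qubit operator $\theta$, independent of $V_i$, for which $N = \bra{x_i}\mathcal{N}(V_i\theta V_i^\dagger)\ket{x_i}$ and $D = \operatorname{Tr}\theta$. The crucial observation is that $D$ does not depend on $V_i$: since $i\notin J_i$, qubit $i$ is traced out in $D$, and the partial trace over qubit $i$ is invariant under the unitary $V_i$ together with the trace-preserving channel $\mathcal{N}$ on qubit $i$.

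Finally, average over $V_i$. Because $D$ is constant in $V_i$, it pulls out of the ratio, and $\mathbb{E}_{V_i}[V_i\theta V_i^\dagger] = \tfrac{\operatorname{Tr}\theta}{2}I_2$ by the single-qubit Haar twirl; hence $\mathbb{E}_{V_i}[N/D] = \tfrac{1}{\operatorname{Tr}\theta}\cdot\tfrac{\operatorname{Tr}\theta}{2}\bra{x_i}\mathcal{N}(I_2)\ket{x_i} = \tfrac{1}{2}\bra{x_i}\mathcal{N}(I_2)\ket{x_i}$, which is independent of all remaining randomness and of the conditioning outcomes, so taking the expectation over everything else yields the claim. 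The main obstacle I anticipate is precisely that $\mathbb{E}[N/D]\neq \mathbb{E}[N]/\mathbb{E}[D]$ in general, so one cannot naively average numerator and denominator separately; the insertion of $V_i$ is exactly the device that removes this difficulty, by rendering the denominator independent of the variable being averaged while twirling qubit $i$ in the numerator to the maximally mixed state.
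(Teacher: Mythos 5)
Your proposal is correct and follows essentially the same route as the paper's proof: both insert a single-qubit Haar-random gate on qubit $i$ before the final noise layer (justified by Haar invariance), observe that the conditioning denominator is unaffected by that gate because the trace-preserving operations on qubit $i$ are traced out, and then apply the single-qubit Haar twirl $\mathbb{E}_{V_i}[V_i\theta V_i^\dagger] = \tfrac{\Tr\theta}{2}I_2$ followed by $\mathcal{N}$ and the measurement. The paper phrases this with a normalized conditional state $\sigma^{(x_1\cdots x_{i-1})}_{(x_1\cdots x_{i-1})}$ whose coefficient $\eta$ cancels in the ratio, whereas you keep the unnormalized operator $\theta$ and divide by $\Tr\theta$, but this is only a presentational difference.
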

\begin{proof}
We only give proof for 
\begin{equation}
\underset{\mathcal{B}}{\mathbb{E}}\left[\Pr\big(X_i = x_i | \{X_j = x_j\}_{j \in \{1,2,\ldots,i-1\}}\big)\right] = \dfrac{1}{2}\left[\bra{x_i}\mathcal{N}\left(I_2\right)\ket{x_i}\right]. 
\end{equation}
The other cases follow similarly. 

Let $\mathcal{C}$ be the given quantum circuit. 
We may assume $\mathcal{C}$ can be written in the following form: 
\begin{equation}
    \mathcal{C} = \mathcal{N}^{\otimes n}\circ\left(\bigotimes_{i=1}^n \mathcal{U}_i\right)\circ\tilde{\mathcal{C}}, 
\end{equation}
where $\mathcal{N}$ is the noise channel, $\mathcal{U}_i$ is a single-qubit Haar random unitary channel, and $\tilde{\mathcal{C}}$ is the circuit without the last layer of noise. 
\clearpage
\noindent Then, the expectation over $\mathcal{B}$ is decomposed as 
\begin{equation}
    \underset{\mathcal{B}}{\mathbb{E}} =  \underset{\substack{\mathcal{U}_1,\ldots,\mathcal{U}_n \\ \sim \textup{Haar}}}{\mathbb{E}}\underset{\mathcal{B}'}{\mathbb{E}}, 
\end{equation}
where $\textup{Haar}$ represents the set of single-qubit Haar unitaries and $\mathcal{B}'$ is the set of noisy random quantum circuits without the last layer of noise. 
Then, 
\begin{align}
        &\Pr\big(X_i = x_i | \{X_j = x_j\}_{j \in \{1,2,\ldots,i-1\}}\big)\\
        &= \dfrac{\Pr\big(X_1 = x_1, \ldots, X_i = x_i\big)}{\Pr\big(X_1 = x_1, \ldots, X_{i-1} = x_{i-1}\big)\big)} \\ 
        &= \dfrac{\bra{x_1\cdots x_{i-1}x_i}\mathcal{N}^{\otimes i}\circ\left(\bigotimes_{j=1}^i \mathcal{U}_j\right)\left(\rho^{(i)}_{\tilde{\mathcal{C}}}\right)\ket{x_1\cdots x_{i-1}x_i}}{\sum_{y\in\{0,1\}}\bra{x_1\cdots x_{i-1}y}\mathcal{N}^{\otimes i}\circ\left(\bigotimes_{j=1}^{i} \mathcal{U}_j\right)\left(\rho^{(i)}_{\tilde{\mathcal{C}}}\right)\ket{x_1\cdots x_{i-1}y}}, 
\end{align}
where 
\begin{equation}
    \rho^{(i)}_{\tilde{\mathcal{C}}} = \Tr_{i+1\cdots n}\left[\left(\mathcal{I}^{\otimes i} \otimes \mathcal{N}^{\otimes (n-i)}\right)\circ \left(\mathcal{I}^{\otimes i} \otimes \bigotimes_{j=i+1}^n \mathcal{U}_j\right)\circ\tilde{\mathcal{C}}\left(\ketbra{0^n}{0^n}\right)\right].
\end{equation}
Now, let us write 
\begin{equation}
\begin{aligned}
     &\left(\mathcal{N}^{\otimes (i-1)}\otimes \mathcal{I}\right)\circ\left(\bigotimes_{j=1}^{i-1} \mathcal{U}_j \otimes \mathcal{I}\right)\left(\rho^{(i)}_{\tilde{\mathcal{C}}}\right) \\ 
     &= \sum_{y_1\cdots y_{i-1}, z_1\cdots z_{i-1} \in \{0,1\}^{i-1}} \eta^{(y_1\cdots y_{i-1})}_{(z_1\cdots z_{i-1})}\ketbra{y_1\cdots y_{i-1}}{z_1\cdots z_{i-1}}\otimes \sigma^{(y_1\cdots y_{i-1})}_{(z_1\cdots z_{i-1})},
\end{aligned} 
\end{equation}
where $\eta^{(y_1\cdots y_{i-1})}_{(z_1\cdots z_{i-1})}$ are appropriately defined coefficients so that $\sigma^{(y_1\cdots y_{i-1})}_{(z_1\cdots z_{i-1})}$ will be density operators when $y_1\cdots y_{i-1} = z_1\cdots z_{i-1}$. 
Then, we may write 
\begin{align}
        &\Pr\big(X_i = x_i | \{X_j = x_j\}_{j \in \{1,2,\ldots,i-1\}}\big)\\ &= \dfrac{\eta^{(x_1\cdots x_{i-1})}_{(x_1\cdots x_{i-1})}\bra{x_i}\left(\mathcal{N}\circ\mathcal{U}_i\right)\left(\sigma^{(x_1\cdots x_{i-1})}_{(x_1\cdots x_{i-1})}\right)\ket{x_i}}{\eta^{(x_1\cdots x_{i-1})}_{(x_1\cdots x_{i-1})}\left(\bra{0}\left(\mathcal{N}\circ\mathcal{U}_i\right)\left(\sigma^{(x_1\cdots x_{i-1})}_{(x_1\cdots x_{i-1})}\right)\ket{0} + \bra{1}\left(\mathcal{N}\circ\mathcal{U}_i\right)\left(\sigma^{(x_1\cdots x_{i-1})}_{(x_1\cdots x_{i-1})}\right)\ket{1}\right)} \\ 
        &= \dfrac{\eta^{(x_1\cdots x_{i-1})}_{(x_1\cdots x_{i-1})}\bra{x_i}\left(\mathcal{N}\circ\mathcal{U}_i\right)\left(\sigma^{(x_1\cdots x_{i-1})}_{(x_1\cdots x_{i-1})}\right)\ket{x_i}}{\eta^{(x_1\cdots x_{i-1})}_{(x_1\cdots x_{i-1})}\Tr\left[\left(\mathcal{N}\circ\mathcal{U}_i\right)\left(\sigma^{(x_1\cdots x_{i-1})}_{(x_1\cdots x_{i-1})}\right)\right]} \\ 
        &= \dfrac{\eta^{(x_1\cdots x_{i-1})}_{(x_1\cdots x_{i-1})}\bra{x_i}\left(\mathcal{N}\circ\mathcal{U}_i\right)\left(\sigma^{(x_1\cdots x_{i-1})}_{(x_1\cdots x_{i-1})}\right)\ket{x_i}}{\eta^{(x_1\cdots x_{i-1})}_{(x_1\cdots x_{i-1})}} \\ 
        &= \bra{x_i}\left(\mathcal{N}\circ\mathcal{U}_i\right)\left(\sigma^{(x_1\cdots x_{i-1})}_{(x_1\cdots x_{i-1})}\right)\ket{x_i}. 
\end{align}
Here, we used the fact that $\sigma^{(x_1\cdots x_{i-1})}_{(x_1\cdots x_{i-1})}$ is a single-qubit density operator. 
Therefore, 
\begin{align}
        &\underset{\mathcal{B}}{\mathbb{E}}\left[\Pr\big(X_i = x_i | \{X_j = x_j\}_{j \in \{1,2,\ldots,i-1\}}\big)\right] \\ 
        &= \underset{\substack{\mathcal{U}_1,\ldots,\mathcal{U}_n \\ \sim \textup{Haar}}}{\mathbb{E}}\underset{\mathcal{B}'}{\mathbb{E}} \left[\Pr\big(X_i = x_i | \{X_j = x_j\}_{j \in \{1,2,\ldots,i-1\}}\big)\right] \\ 
        &= \underset{\substack{\mathcal{U}_1,\ldots,\mathcal{U}_n \\ \sim \textup{Haar}}}{\mathbb{E}}\underset{\mathcal{B}'}{\mathbb{E}} \left[\bra{x_i}\left(\mathcal{N}\circ\mathcal{U}_i\right)\left(\sigma^{(x_1\cdots x_{i-1})}_{(x_1\cdots x_{i-1})}\right)\ket{x_i}\right] \\ 
        &= \underset{\substack{\mathcal{U}_1,\ldots,\mathcal{U}_{i-1} \\ \sim \textup{Haar}}}{\mathbb{E}}\underset{\substack{\mathcal{U}_{i+1},\ldots,\mathcal{U}_{n} \\ \sim \textup{Haar}}}{\mathbb{E}}\underset{\mathcal{B}'}{\mathbb{E}} \left[\bra{x_i}\underset{\mathcal{U}_{i} \sim \textup{Haar}}{\mathbb{E}}\left[\left(\mathcal{N}\circ\mathcal{U}_i\right)\left(\sigma^{(x_1\cdots x_{i-1})}_{(x_1\cdots x_{i-1})}\right)\right]\ket{x_i}\right] \\ 
        &= \underset{\substack{\mathcal{U}_1,\ldots,\mathcal{U}_{i-1} \\ \sim \textup{Haar}}}{\mathbb{E}}\underset{\substack{\mathcal{U}_{i+1},\ldots,\mathcal{U}_{n} \\ \sim \textup{Haar}}}{\mathbb{E}}\underset{\mathcal{B}'}{\mathbb{E}} \left[\bra{x_i}\mathcal{N}\left(\dfrac{I_2}{2}\right)\ket{x_i}\right] \\ 
        &= \frac{1}{2} \left[\bra{x_i}\mathcal{N}\left(I_2\right)\ket{x_i}\right], 
\end{align}
which we aimed to show.  
\end{proof}

\section{Discussion on the effect of noise in noisy random circuits}
\label{sec:Werner_twirl}
\noindent In this section, we show that for any given noise $\mathcal{N}$, we may express the action of $\tilde{M}_{U_1, \mathsf{N}}$ on $I_4$ and $S$ as  
\begin{align}
    \tilde{M}_{U_1, \mathsf{N}}(I_4) &= (1 - a)I_4 + 2a S, \\ 
    \tilde{M}_{U_1, \mathsf{N}}(S) &= bI + (1 - 2b) S
\end{align}
using some $a$ and $b$. The action of $M_{U_1}$ can be given as (see Example 7.25 of \cite{Watrous2018}) 
\begin{equation}
    M_{U_1}(X) = \dfrac{\langle X,\Pi_{\mathrm{sym}}\rangle}{3}\Pi_{\mathrm{sym}} + \langle X,\Pi_{\mathrm{antisym}}\rangle\Pi_{\mathrm{antisym}}, 
\end{equation}
where 
\begin{align}
    \Pi_{\mathrm{sym}} &\coloneqq \frac{I_4 + S}{2} \\ 
    \Pi_{\mathrm{antisym}} &\coloneqq \frac{I_4 - S}{2}
\end{align}
are the projection operators onto the symmetric and anti-symmetric subspaces, respectively. 
Suppose that $X$ is given as 
\begin{equation}
    X = 
    \sum_{i,j,k,l = 0,1} x_{ijkl} \ketbra{ij}{kl}. 
\end{equation}
A simple calculation leads to 
\begin{align}
    \langle X,\Pi_{\mathrm{sym}}\rangle 
    &= x_{0000} + x_{1111} + \frac{x_{0101} + x_{1010} + x_{0110} + x_{1001}}{2}, \\ 
    \langle X,\Pi_{\mathrm{antisym}}\rangle 
    &= \frac{x_{0101} + x_{1010} - x_{0110} - x_{1001}}{2}. 
\end{align}
Thus, 
\begin{equation}
    \label{eq:werner_twirl_calc}
    \begin{aligned}
    M_{U_1}(X) 
    &= \left(\frac{x_{0000}}{6} + \frac{x_{1111}}{6} + \frac{x_{0101} + x_{1010}}{3} - \frac{x_{0110} + x_{1001}}{6}\right)I_4 \\ 
    &+ \left(\frac{x_{0000}}{6} + \frac{x_{1111}}{6} - \frac{x_{0101} + x_{1010}}{6} + \frac{x_{0110} + x_{1001}}{3}\right)S. 
    \end{aligned}
\end{equation}
When 
\begin{equation}
    X = \mathsf{N}\circ M_{U_1}\left[I_4\right] = \mathsf{N}\left[I_4\right],  
\end{equation}
since $\Tr[X] = x_{0000} + x_{0101} + x_{1010} + x_{1111} = \Tr[I_4] = 4$, by setting 
\begin{equation}
    a = \frac{x_{0000}}{12} + \frac{x_{1111}}{12} - \frac{x_{0101} + x_{1010}}{12} + \frac{x_{0110} + x_{1001}}{6}, 
\end{equation}
we have 
\begin{equation}
    \tilde{M}_{U_1,\mathsf{N}}\left[I_4\right] = (1-a)I_4 + 2a S. 
\end{equation}
Similarly, 
when \begin{equation}
    X = \mathsf{N}\circ M_{U_1}\left[S\right] = \mathsf{N}\left[S\right],  
\end{equation}
since $\Tr[X] = x_{0000} + x_{0101} + x_{1010} + x_{1111} = \Tr[S] = 2$, by setting 
\begin{equation}
    b = \frac{x_{0000}}{6} + \frac{x_{1111}}{6} + \frac{x_{0101} + x_{1010}}{3} - \frac{x_{0110} + x_{1001}}{6}, 
\end{equation}
we have 
\begin{equation}
\tilde{M}_{U_1,\mathsf{N}}\left[S\right] = bI + (1-2b) S. 
\end{equation}

\noindent We see the cases of $\mathcal{N} = \mathcal{N}_q^{(\mathrm{amp}} \circ \mathcal{N}_p^{(\mathrm{dep}}$ and $\mathcal{N} = \mathcal{N}_p^{(\mathrm{dep}} \circ \mathcal{N}_p^{(\mathrm{amp}}$ as illustrative examples.  
When $\mathcal{N} = \mathcal{N}_q^{(\mathrm{amp}} \circ \mathcal{N}_p^{(\mathrm{dep}}$, we have 
\begin{align}
    \mathsf{N}[I_4] &= 
    \left(
    \begin{array}{cccc}
        (1+q)^2 & 0 & 0 & 0 \\
        0 & 1 - q^2 & 0 & 0 \\ 
        0 & 0 & 1 - q^2 & 0 \\ 
        0 & 0 & 0 & (1-q)^2
    \end{array}
    \right)\\ 
    \mathsf{N}[S] &= 
    \tiny
    \left(
    \begin{array}{cccc}
        (1-q)^2\left(\frac{p^2}{2} - p + 1\right) + 2q & 0 & 0 & 0 \\
        0 & \frac{1-q^2 - (1-q)^2(1-p)^2}{2}& (1-q)(1-p)^2 & 0 \\ 
        0 & (1-q)(1-p)^2 & \frac{1-q^2 - (1-q)^2(1-p)^2}{2} & 0 \\ 
        0 & 0 & 0 & (1-q)^2\left(\frac{p^2}{2} - p + 1\right)
    \end{array}
    \right). 
\end{align}
Therefore, we have
\begin{align}
    a = \frac{q^2}{3},\quad\quad b =  \frac{1}{2} - \frac{q^2}{6} -\frac{1}{6}(1-p)^2(1-q)(3-q). 
\end{align}
Similarly, when $\mathcal{N} = \mathcal{N}_p^{(\mathrm{dep}} \circ \mathcal{N}_p^{(\mathrm{amp}}$, 
\begin{align}
    \mathsf{N}[I_4] &= 
    \left(
    \begin{array}{cccc}
        (1+(1-p)q)^2 & 0 & 0 & 0 \\
        0 & 1 - (1-p)^2q^2 & 0 & 0 \\ 
        0 & 0 & 1 - (1-p)^2q^2 & 0 \\ 
        0 & 0 & 0 & (1-(1-p)q)^2
    \end{array}
    \right) \\ 
    \mathsf{N}[S]  &= 
    \tiny
    \left(
    \begin{array}{cccc}
        q^2(1-p)^2+1 + \frac{p^2}{2} - p(1 + (1-p)q) & 0 & 0 & 0 \\
        0 & q(1-q)(1-p)^2 - \frac{p^2}{2} + p& (1-q)(1-p)^2 & 0 \\ 
        0 & (1-q)(1-p)^2 &q(1-q)(1-p)^2 - \frac{p^2}{2} + p & 0 \\ 
        0 & 0 & 0 & (q(1-p) - 1)^2 + \frac{p^2}{2} - p(1 + (1-p)q)
    \end{array}
    \right)
    \tiny. 
\end{align}
Therefore, 
\begin{align}
    a = \frac{q^2(1-p)^2}{3},\quad\quad b =  \frac{1}{2} - \frac{q^2(1-p)^2}{6} -\frac{1}{6}(1-p)^2(1-q)(3-q). 
\end{align}

\section{Proof of Theorem~\ref{second moment probabilities}}
\label{appendix:proof_of_CR}
In the proof, we use the following fact, which we will prove in the next section.
\begin{lemma}
\label{lem:modified-circuit}
Consider an ensemble $\mathcal{B}$ of noisy random quantum circuits with noise channel $\mathcal{N}$ satisfying Eqs.~\eqref{eq:noise_1} and \eqref{eq:noise_2}. 
Let $\mathcal{B}'$ denote the ensemble of circuits obtained by removing the last layer of noise from the noisy random circuits in $\mathcal{B}$. 
Consider another ensemble $\tilde{\mathcal{B}}'$ of circuits that can be obtained by replacing each 2-qubit Haar random gate $U_{2}$ in the circuits of $\mathcal{B}'$ as 
\begin{equation}
    U_2 \to \left(U_1\otimes U'_1\right), 
\end{equation}
where $U_1,U'_1$ are independent single-qubit Haar random gates. 
In this setup, if $0 \leq 1-a-2b \leq 1$, then $\mathbb{E}_{\mathcal{B}'}[p_{x}^2] \leq \mathbb{E}_{\tilde{\mathcal{B}}'}[p_{x}^2]$ for any $x\in\{0,1\}^n$.
\end{lemma}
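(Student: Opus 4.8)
The plan is to pass to the two--copy (replica) space and reduce both $\mathbb{E}_{\mathcal{B}'}[p_x^2]$ and $\mathbb{E}_{\tilde{\mathcal{B}}'}[p_x^2]$ to the contraction of a tensor network whose bonds carry the two values $\{I_4,S\}$, i.e.\ exactly the stat--mech model underlying \cref{first thm}. Writing $p_x^2 = \Tr\!\big((\ketbra{x}{x})^{\otimes 2}\,\mathcal{C}(\ketbra{0^n}{0^n})^{\otimes 2}\big)$ and averaging each gate, every two--qubit Haar gate becomes the twirl $M_{U_2}$ on the doubled pair, every replaced single--qubit pair becomes $M_{U_1}\otimes M_{U_1}$, and each noise layer becomes $\mathsf{N}^{\otimes n}$. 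The first fact I would record is the Haar--invariance identity $M_{U_2} = M_{U_2}\circ(M_{U_1}\otimes M_{U_1})$, which holds because $U_2\,(U_1\otimes U_1')$ is again Haar distributed on $\mathsf{U}(4)$. This lets me insert single--qubit twirls on both sides of every two--qubit twirl in the $\mathcal{B}'$ computation without changing it, so that in \emph{both} ensembles the evolving operator lives in $\mathrm{span}\{I_4,S\}^{\otimes n}$ between layers and the per--worldline transfer reduces to $\tilde{M}_{U_1,\mathsf{N}}$ governed by Eqs.~\eqref{eq:noise_1}--\eqref{eq:noise_2}.

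With this normal form the two computations differ \emph{only} at the gate locations, so I would next compute the local action of the two--qubit twirl in the basis $\{I\otimes I,\ I\otimes S,\ S\otimes I,\ S\otimes S\}$. Since $M_{U_2}$ is the orthogonal projection onto $\mathrm{span}\{I_4\otimes I_4,\ S\otimes S\}$ (using that the $2$--qubit swap equals $S\otimes S$ and the $4$--qubit identity equals $I_4\otimes I_4$), a short trace/Weingarten computation shows it fixes $I\otimes I$ and $S\otimes S$ and sends both $I\otimes S$ and $S\otimes I$ to $\tfrac{2}{5}(I\otimes I + S\otimes S)$, whereas $M_{U_1}\otimes M_{U_1}$ acts as the identity on all four basis elements. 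In particular $\mathrm{range}(M_{U_2})\subseteq \mathrm{range}(M_{U_1}\otimes M_{U_1})$, which is the conceptual reason a two--qubit gate can only suppress the second moment; moreover $\tilde{\mathcal{B}}'$ has fully decoupled worldlines, so $\mathbb{E}_{\tilde{\mathcal{B}}'}[p_x^2]$ factorizes into the $n$--fold product obtained by iterating $\tilde{M}_{U_1,\mathsf{N}}$ as in \cref{first thm}.

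To convert the inclusion of ranges into the stated inequality I would replace the two--qubit twirls by single--qubit products one gate at a time and show each replacement does not decrease the contraction. The key structural input is that, under the hypothesis $0\le 1-a-2b\le 1$ together with $a,b\ge 0$, the matrix of $\tilde{M}_{U_1,\mathsf{N}}$ in the $\{I_4,S\}$ basis has non--negative entries, while the initial weights $M_{U_1}(\ketbra{0}{0}^{\otimes 2})=\tfrac16(I_4+S)$ and the boundary overlaps $\langle(\ketbra{x_i}{x_i})^{\otimes 2},I_4\rangle=\langle(\ketbra{x_i}{x_i})^{\otimes 2},S\rangle=1$ are non--negative; hence every path weight in the stat--mech sum is non--negative and the model is a genuine, negative--weight--free partition function, which is precisely the point flagged as nontrivial. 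Fixing one gate and writing the non--negative incoming local coefficients as $(c_{II},c_{IS},c_{SI},c_{SS})$, the difference between using the identity and using $M_{U_2}$ is carried by the downstream functional $L$ applied to the mass $c_{IS}+c_{SI}$ left on $I\otimes S,\,S\otimes I$ versus redistributed as $\tfrac25(I\otimes I+S\otimes S)$, so the replacement is favourable exactly when $L(I\otimes S)\ge \tfrac25\big(L(I\otimes I)+L(S\otimes S)\big)$.

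I expect this last inequality to be the main obstacle: it is a correlation inequality for the downstream transfer operator, asserting that a mixed pair of worldlines contributes at least a fixed fraction of the aligned pairs once propagated through the remaining noisy random layers and contracted against the all--ones boundary. I would establish it using the explicit form of $\tilde{M}_{U_1,\mathsf{N}}$ and the contraction toward the $I_4$ direction guaranteed by $0\le 1-a-2b\le1$ (so that $S$--type correlations decay monotonically), reducing $L$ to a product of single--worldline factors together with manifestly non--negative coupling terms and then invoking an AM--GM / log--convexity bound of the form $L(I\otimes S)\ge\sqrt{L(I\otimes I)\,L(S\otimes S)}$. Chaining the resulting per--gate inequalities over all gate locations then yields $\mathbb{E}_{\mathcal{B}'}[p_x^2]\le \mathbb{E}_{\tilde{\mathcal{B}}'}[p_x^2]$, as claimed.
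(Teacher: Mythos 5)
Your setup coincides with the paper's: pass to the two--copy stat--mech picture, observe that the two--qubit twirl fixes $I\otimes I$ and $S\otimes S$ and sends $I\otimes S,\,S\otimes I$ to $\tfrac{2}{5}(I\otimes I+S\otimes S)$ while $M_{U_1}\otimes M_{U_1}$ fixes all four, check non--negativity of all path weights, replace two--qubit twirls one at a time, and reduce to the local inequality $L(I\otimes S)\ge \tfrac{2}{5}\bigl(L(I\otimes I)+L(S\otimes S)\bigr)$ for the downstream functional $L$. Up to this point your proposal is the paper's argument. The gap is in your final step, and it is fatal as written. The paper replaces gates in order \emph{from the last layer backwards}, so that everything downstream of the gate currently being replaced consists only of single--qubit twirls and noise; consequently $L$ factorizes exactly over the two worldlines, $L(I\otimes I)=u_m^2$, $L(S\otimes S)=v_m^2$, $L(I\otimes S)=u_m v_m$, where $u_m$ (resp.\ $v_m$) is the total weight of propagating $I$ (resp.\ $S$) through the $m$ remaining single--qubit steps. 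In this situation your proposed bound $L(I\otimes S)\ge\sqrt{L(I\otimes I)\,L(S\otimes S)}$ holds \emph{with equality} and carries no information; and it cannot yield the needed inequality, because $\sqrt{xy}\ge\tfrac{2}{5}(x+y)$ is simply false in general: it holds if and only if $\tfrac14\le x/y\le 4$. Equivalently, $u_mv_m\ge\tfrac{2}{5}(u_m^2+v_m^2)$ if and only if $(2v_m-u_m)(2u_m-v_m)\ge 0$, i.e., if and only if $u_m$ and $v_m$ are within a factor of $2$ of each other. That ratio control is the entire content of the lemma, and it is exactly where the hypothesis $0\le 1-a-2b\le 1$ must enter quantitatively; an appeal to ``contraction toward $I_4$'' plus AM--GM/log--convexity does not produce it.

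To close the argument you need two things. First, fix the replacement order to run from the last layer backwards; without this, $L$ does not factorize into single--worldline factors (there are genuine two--qubit twirls downstream), and even the formulation of your last step breaks down. Second, replace the AM--GM step by an explicit computation of the downstream propagation: solving the recurrence generated by Eqs.~\eqref{eq:noise_1} and \eqref{eq:noise_2} (this is Lemma~\ref{lem:sequence_noise} in the paper) gives $u_m = 1 + \tfrac{1-(1-a-2b)^m}{1+2b/a}\ge 1$ and $v_m\le 1$, hence $2u_m-v_m\ge 1$, while $2v_m-u_m$ is a non--negative multiple of $(1-a-2b)^m$, which is $\ge 0$ precisely under the hypothesis $0\le 1-a-2b\le 1$. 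This yields $(2v_m-u_m)(2u_m-v_m)\ge 0$ and therefore the per--gate inequality; chaining over gates then finishes the proof as you intended.
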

\begin{remark}
    In the statement of Lemma~\ref{lem:modified-circuit}, the circuits in $\tilde{\mathcal{B}}'$ are composed solely of single-qubit Haar random unitary gates and noise channel $\mathcal{N}$, without the last layer of noise. 
\end{remark}
\noindent Now, we continue on with our proof of Theorem~ \ref{second moment probabilities}. 
In the noise models we consider, we have 
\begin{align}
    a &= \Bigg\{ 
    \begin{array}{ll}
        \frac{q^2}{3}  & \mathcal{N} = 
        \mathcal{N}^{(\textup{amp})}_q \circ \mathcal{N}^{(\textup{dep})}_p, \\
         \frac{q^2(1-p)^2}{3} &  \mathcal{N} = \mathcal{N}^{(\textup{dep})}_p \circ \mathcal{N}^{(\textup{amp})}_q,
    \end{array}\\
    b&= \Bigg\{ 
    \begin{array}{ll}
       \frac{1}{2} - \frac{q^2}{6} -\frac{1}{6}(1-p)^2(1-q)(3-q)  & \mathcal{N} = 
        \mathcal{N}^{(\textup{amp})}_q \circ \mathcal{N}^{(\textup{dep})}_p, \\
        \frac{1}{2} - \frac{q^2(1-p)^2}{6} -\frac{1}{6}(1-p)^2(1-q)(3-q) &  \mathcal{N} = \mathcal{N}^{(\textup{dep})}_p \circ \mathcal{N}^{(\textup{amp})}_q.
    \end{array}
\end{align}
We can easily verify that
\begin{equation}
    1- a -2b = \left(1-p\right)^2\left(1-q\right)\left(1 - \frac{q}{3}\right)
\end{equation}
in both cases. 
Hence, $0 \leq 1-a-2b \leq 1$. 
Now, we rewrite the second-moment probability by using the weighted trajectories of the strings in $\{I_4,S\}^n$ (see also the proof of Lemma~6 in \cite{Deshpande_2022}). 
Suppose that the circuits in $\mathcal{B}'$ contain $s$ Haar random gates. 
A trajectory $\gamma \coloneqq (\gamma^{1}\gamma^{2}\cdots \gamma^{s}\gamma^{s+1}) \in \{I_4,S\}^{n\times (s+1)}$ is a sequence of $n$-bit string $\gamma^{i} \in \{I_4,S\}^n$. 
For $i = 1,2\cdots,s$, $\gamma^{i} \in \{I_4,S\}^n$ represents the $n$-bit string right before the $s$-th Haar random gate in the trajectory $\gamma$. 
In particular, $\gamma^{1}$ is the initial string of the trajectory $\gamma$. 
$\gamma^{s+1} \in \{I_4,S\}^n$ represents the final bit string of the trajectory $\gamma$ at the end of the circuit. 
For each trajectory $\gamma \in\{I_4,S\}^{n\times (s+1)}$, we define the weight $\mathrm{wt}_{\mathcal{B}'}(\gamma)$ as 
\begin{equation}
    \mathrm{wt}_{\mathcal{B}'}(\gamma) \coloneqq c_1c_2c_3\cdots c_s, 
\end{equation}
where for $i= 1,2,\ldots,s$, 
\begin{equation}
    c_i \coloneqq \Bigg\{
    \begin{array}{ll}
        \mathrm{Coefficient\,\,of\,\,transformation\,\,}\gamma^{i}\to\gamma^{i+1}, & \gamma^{i}\to\gamma^{i+1}~~\mathrm{is\,\,possible,} \\
        0, & \gamma^{i}\to\gamma^{i+1}~~\mathrm{is\,\,impossible.}
    \end{array}
\end{equation}
\clearpage
\noindent Thus, the second-moment probability with respect to $\mathcal{B}'$ can be expressed as   
\begin{equation}
    \mathbb{E}_{\mathcal{B}'}[p_{x}^2] =\sum_{\gamma_1,\gamma_2,\ldots, \gamma_n \in \{I_4,S\}}\left[\left(\sum_{\substack{\gamma \in \{I_4,S\}^{n\times (s+1)} \\ \gamma^{s+1} = \gamma_1\gamma_2\cdots \gamma_n}} \mathrm{wt}_{\mathrm{B}'}(\gamma)\right)\bra{x}\bra{x}(\gamma_1\gamma_2\cdots \gamma_n)\ket{x}\ket{x}\right],
\end{equation}
where in the right-hand side, we divide the cases based on the final bit string $\gamma^{s+1}=\gamma_1\gamma_2\cdots \gamma_n$. 
We may further rewrite as 
\begin{equation}
\label{eq:second_moment_no_noise}
\mathbb{E}_{\mathcal{B}'}[p_{x}^2] =\sum_{\gamma_1,\gamma_2,\ldots, \gamma_n \in \{I_4,S\}}\left[\left(\sum_{\substack{\gamma \in \{I_4,S\}^{n\times (s+1)} \\ \gamma^{s+1} = \gamma_1\gamma_2\cdots \gamma_n}} \mathrm{wt}_{\mathrm{B}'}(\gamma)\right)\left(\prod_{j=1}^n\bra{x_j}\bra{x_j}\gamma_j\ket{x_j}\ket{x_j}\right)\right]. 
\end{equation}
For $\gamma_j \in \{I_4,S\}$, $\bra{0}\bra{0}\gamma_i\ket{0}\ket{0} = \bra{1}\bra{1}\gamma_i\ket{1}\ket{1} = 1$. 
Therefore, 
\begin{equation}
\label{eq:second_moment_no_noise_simple}
\mathbb{E}_{\mathcal{B}'}[p_{x}^2] =\sum_{\gamma_1,\gamma_2,\ldots, \gamma_n \in \{I_4,S\}}\left[\left(\sum_{\substack{\gamma \in \{I_4,S\}^{n\times (s+1)} \\ \gamma^{s+1} = \gamma_1\gamma_2\cdots \gamma_n}} \mathrm{wt}_{\mathrm{B}'}(\gamma)\right)\right]. 
\end{equation}
On the other hand, the second-moment probability with respect to the original ensemble $\mathcal{B}$ can be obtained by considering the last layer of noise. 
Thus, in this case, instead of Eq.~\eqref{eq:second_moment_no_noise}, we have 
\begin{equation}
\mathbb{E}_{\mathcal{B}}[p_{x}^2] =\sum_{\gamma_1,\gamma_2,\ldots, \gamma_n \in \{I_4,S\}}\left[\left(\sum_{\substack{\gamma \in \{I_4,S\}^{n\times (s+1)} \\ \gamma^{s+1} = \gamma_1\gamma_2\cdots \gamma_n}} \mathrm{wt}_{\mathrm{B}'}(\gamma)\right)\left(\prod_{j=1}^n\bra{x_j}\bra{x_j}\mathsf{N}(\gamma_j)\ket{x_j}\ket{x_j}\right)\right]. 
\end{equation}
Here, observe that 
\begin{equation}
    \bra{x_j}\bra{x_j}\mathsf{N}(\gamma_j)\ket{x_j}\ket{x_j} \leq \max\{\bra{x_j}\bra{x_j}\mathsf{N}(I_4)\ket{x_j}\ket{x_j},\bra{x_j}\bra{x_j}\mathsf{N}(S)\ket{x_j}\ket{x_j}\} \eqqcolon e_j
\end{equation}
for each $j$.
Hence, 
\begin{equation}
    \mathbb{E}_{\mathcal{B}}[p_{x}^2] \leq \left[\sum_{\gamma_1,\gamma_2,\ldots, \gamma_n \in \{I_4,S\}}\left(\sum_{\substack{\gamma \in \{I_4,S\}^{n\times (s+1)} \\ \gamma^{s+1} = \gamma_1\gamma_2\cdots \gamma_n}} \mathrm{wt}_{\mathrm{B}'}(\gamma)\right)\right]\left(e_1e_2\cdots e_n\right). 
\end{equation}
Using Eq.~\eqref{eq:second_moment_no_noise_simple}, 
\begin{equation}
    \mathbb{E}_{\mathcal{B}}[p_{x}^2] \leq \mathbb{E}_{\mathcal{B}'}[p_{x}^2]\left(e_1e_2\cdots e_n\right).
\end{equation}
By Lemma~\ref{lem:modified-circuit}, since $\mathbb{E}_{\mathcal{B}'}[p_{x}^2] \leq \mathbb{E}_{\tilde{\mathcal{B}}'}[p_{x}^2]$, 
\begin{equation}
    \label{eq:second_moment_relation1}
    \mathbb{E}_{\mathcal{B}}[p_{x}^2] \leq \mathbb{E}_{\tilde{\mathcal{B}}'}[p_{x}^2]\left(e_1e_2\cdots e_n\right). 
\end{equation}
Now, by the construction of $\tilde{\mathcal{B}}'$ and Proposition~\ref{first thm}, 
\begin{equation}   
    \begin{aligned}
\mathbb{E}_{\tilde{\mathcal{B}}'}[p_{x}^2] 
&= \left(\frac{3}{10} - \left[\dfrac{-2a+b}{10(a+2b)} + (1-a-2b)^{d-1} \left(-\dfrac{1}{30} - \dfrac{-2a+b}{10(a+2b)}\right)\right]\right)^n. 
    \end{aligned}
\end{equation}
Here, we used the fact that $\bra{0}\bra{0}(2I + S)\ket{0}\ket{0} = \bra{1}\bra{1}(2I + S)\ket{1}\ket{1} = 3$ and $\bra{0}\bra{0}(I_4 - 2S)\ket{0}\ket{0} = \bra{1}\bra{1}(I_4 - 2S)\ket{1}\ket{1} = -1$. 
By definition of $a$ and $b$, 
letting 
\begin{align}
    \label{eq:c}
    c &\coloneqq 1 - (1-p)^2(1-q)\left(1 - \frac{q}{3}\right), \\
    \label{eq:r}
    r &\coloneqq \Bigg\{ 
    \begin{array}{ll}
        q, & \mathcal{N} = 
        \mathcal{N}^{(\textup{amp})}_q \circ \mathcal{N}^{(\textup{dep})}_p, \\
        q(1-p), & \mathcal{N} = \mathcal{N}^{(\textup{dep})}_p \circ \mathcal{N}^{(\textup{amp})}_q,
    \end{array}
\end{align}
we have 
\begin{equation}
    \mathbb{E}_{\tilde{\mathcal{B}}'}[p_{x}^2] = \left(\left(\frac{1}{4} + \frac{r^2}{12c}\right) +  
    (1 - c)^{d-1} \left(\dfrac{1}{12} - \dfrac{r^2}{12c}\right) \right)^n. 
\end{equation}
By taking 
\begin{align}
    \mu &= \frac{1}{4} + \frac{r^2}{12c}, \\ 
    \nu &= \dfrac{1}{12} - \dfrac{r^2}{12c}, 
\end{align}
\begin{equation}
    \mathbb{E}_{\tilde{\mathcal{B}}'}[p_{x}^2] = \left(\mu +  
    (1 - c)^{d-1} \nu \right)^n. 
\end{equation}
Here, we will check $\mu,\nu \geq 0$.
It trivially follows that $\mu \geq 0$ from the definition. 
Next, we evaluate $\nu$. 
Since $r \leq q$ by definition, 
\begin{align}
    \nu 
    &= \dfrac{1}{12} - \dfrac{r^2}{12c}\\
    &\geq \dfrac{1}{12} - \dfrac{q^2}{12c} \\
    &= \dfrac{c - q^2}{12c}.  
\end{align}
By substituting \cref{eq:c} and \cref{eq:r}, 
\begin{align}
    \dfrac{c - q^2}{12c}
    &= \dfrac{1 - (1-p)^2(1-q)\left(1 - \frac{q}{3}\right) - q^2}{12c}. 
\end{align}
Since $(1-p)^2 \leq 1$, 
\begin{align}
    \dfrac{1 - (1-p)^2(1-q)\left(1 - \frac{q}{3}\right) - q^2}{12c}
    &\geq \dfrac{1 - (1-q)\left(1 - \frac{q}{3}\right) - q^2}{12c} \\ 
    &= \dfrac{\frac{4q}{3}(1-q)}{12c} \\ 
    &= \dfrac{q(1-q)}{9c}.  
\end{align}    
Since $q\geq 0$ and $1-p \geq 0$, we have 
\begin{align}
    \nu \geq 0. 
\end{align}
Since $1 + x \leq \mathrm{e}^{x}$ for all $x\in\mathbb{R}$, 
\begin{equation}
\label{eq:second_moment_rekation2}\mathbb{E}_{\tilde{\mathcal{B}}'}[p_{x}^2] \leq \mu^n \exp\left[n\dfrac{\nu}{\mu} \mathrm{e}^{-c(d-1)}\right]. 
\end{equation}
Using Eqs.~\eqref{eq:second_moment_relation1} and \eqref{eq:second_moment_rekation2}, 
\begin{equation}
\label{eq:second_moment_rekation3}
    \mathbb{E}_{\mathcal{B}}[p_{x}^2] \leq  \mu^n \exp\left[n\dfrac{\nu}{\mu} \mathrm{e}^{-c(d-1)}\right]\left(e_1e_2\cdots e_n\right). 
\end{equation}
Now, we evaluate $e_j$. 
\begin{equation}
    e_j \coloneqq \max\left\{\bra{x_j}\bra{x_j}\mathsf{N}(I_4)\ket{x_j}\ket{x_j},\bra{x_j}\bra{x_j}\mathsf{N}(S)\ket{x_j}\ket{x_j}\right\}  = \Bigg\{
    \begin{array}{cc}
        (1+r)^2, & x_j = 0, \\
        (1-r)^2, & x_j = 1.
    \end{array}
\end{equation}
That is, if $w_x \geq \tfrac{n}{2}$, 
\begin{equation}
\label{eq:second_moment_rekation4}
    e_1e_2\cdots e_n = (1+r)^{2(n-w_x)}(1-r)^{2w_x} \leq (1+r)^{n}(1-r)^{n} = (1-r^2)^{n} = \eta^n. 
\end{equation}
Combining Eqs.~\eqref{eq:second_moment_rekation3} and \eqref{eq:second_moment_rekation4}, 
\begin{equation}
    \mathbb{E}_{\mathcal{B}}[p_{x}^2] \leq  \mu^n\eta^n \exp\left[n\dfrac{\nu}{\mu} \mathrm{e}^{-c(d-1)}\right], 
\end{equation}
which completes the proof. 

\section{Proof of Lemma~\ref{lem:modified-circuit}}
Any circuit $\mathcal{C}$ in $\mathcal{B}'$ can be written in the following form. 

\begin{align*}
\begin{tikzcd}[transparent,row sep=0.7em]
& \gate[2]{U_2}& \gate[1]{\mathcal{N}} & \gate[1]{U_1} &\gate[1]{\mathcal{N}} & \gate[2]{U_2} & \qw\\
& \qw & \gate[1]{\mathcal{N}} &  \gate[2]{U_2} &\gate[1]{\mathcal{N}} &  \qw &\qw\\
& \gate[2]{U_2} & \gate[1]{\mathcal{N}} & \qw &\gate[1]{\mathcal{N}} & \gate[2]{U_2} & \qw\\
&\qw & \gate[1]{\mathcal{N}} & \gate[2]{U_2} & \gate[1]{\mathcal{N}} & \qw & \qw\\
& \gate[1]{U_1} & \gate[1]{\mathcal{N}} &  \qw & \gate[1]{\mathcal{N}} & \gate[1]{U_1}& \qw
\end{tikzcd}
\end{align*}
Let $s$ be the number of 2-qubit Haar random gate in given $\mathcal{C}\in \mathcal{B}'$. 
Let us introduce a numbering of the 2-qubit Haar random gates in $\mathcal{C}$: $U_2^{(k)}$ refers to the $k^{\text{th}}$ 2-qubit Haar random gate, where the counting starts from the top gate at the last layer. 
Rewiring the circuit $\mathcal{C}$, we have 
\begin{align}
\label{eq:circuit_0}
\begin{tikzcd}[transparent,row sep=0.7em]
& \gate[2]{U_2^{(s-1)}}& \gate[1]{U_1} & \gate[1]{\mathcal{N}}  & \gate[1]{U_1} & \qw & \gate[1]{U_1}& \gate[1]{\mathcal{N}}  & \gate[1]{U_1}& \gate[2]{U_2^{(1)}} & \qw \\
& \qw & \gate[1]{U_1}& \gate[1]{\mathcal{N}}  & \gate[1]{U_1}& \gate[2]{U_2^{(3)}} & \gate[1]{U_1}& \gate[1]{\mathcal{N}}  & \gate[1]{U_1}&\qw& \qw \\
& \gate[2]{U_2^{(s)}} & \gate[1]{U_1} & \gate[1]{\mathcal{N}}  & \gate[1]{U_1}& \qw & \gate[1]{U_1}& \gate[1]{\mathcal{N}}  & \gate[1]{U_1}& \gate[2]{U_2^{(2)}}& \qw \\
&\qw & \gate[1]{U_1} & \gate[1]{\mathcal{N}}  & \gate[1]{U_1}&\gate[2]{U_2^{(4)}} & \gate[1]{U_1}& \gate[1]{\mathcal{N}}  & \gate[1]{U_1}&\qw& \qw \\
& \qw & \gate[1]{U_1} & \gate[1]{\mathcal{N}}  & \gate[1]{U_1}& \qw & \gate[1]{U_1}& \gate[1]{\mathcal{N}}  & \gate[1]{U_1}& \qw  &\qw
\end{tikzcd}
\end{align}
\clearpage
\noindent Construct a sequence of circuits $\{\mathcal{C}_j: j=0,1,2,\ldots,s\}$ in the following recursive way. 
\begin{itemize}
    \item $\mathcal{C}_0 = \mathcal{C} \in \mathcal{B}'$.
    \item For $j = 1,2,\ldots,s$, $\mathcal{C}_j$ is a circuit obtained by replacing the leading 2-qubit Haar random gate in $\mathcal{C}_{j-1}$, \textit{i.e.,} $U_2^{(j)}$, by two parallel independent single-qubit Haar random gates. 
\end{itemize}
By construction, $\mathcal{C}_{s}$ consists only of single-qubit Haar random gates and noise channels, and $\mathcal{C}_{s} \in \tilde{\mathcal{B}}'$. 
Let $x \in \{0,1\}^n$ be any $n$-bit string. 
Here, we aim to show 
\begin{equation}
    \underset{\mathcal{C}_0 \sim \mathcal{B}'}{\mathbb{E}}[p_x^2] \leq \underset{\mathcal{C}_s \sim \tilde{\mathcal{B}}'}{\mathbb{E}}[p_x^2], 
\end{equation}
and to this end, it suffices to show 
\begin{equation}
    \underset{\mathcal{C}_j}{\mathbb{E}}[p_x^2] \leq \underset{\mathcal{C}_{j+1}}{\mathbb{E}}[p_x^2] 
\end{equation}
for all $j = 0,1,2,\ldots, s-1$. 
For this purpose, fix $j$, and let us look at $\mathcal{C}_j$ and $\mathcal{C}_{j+1}$. 
We will show 
\begin{equation}
    \underset{\mathcal{C}_j}{\mathbb{E}}[p_x^2] \leq \underset{\mathcal{C}_{j+1}}{\mathbb{E}}[p_x^2] 
\end{equation}
for this $j$. 
Suppose that $U_2^{(j)}$ acts on the $k^{\text{th}}$ and $l^{\text{th}}$ qubits. 
Recall the characterization shown in Eq.~\eqref{eq:circuit_0}.
\clearpage
\noindent Then, by absorbing the action of the single-qubit Haar random gates and noise channels acting on qubits other than $k^{\text{th}}$ or $l^{\text{th}}$ one to the preceding gates, we may write 
\begin{equation}
\mathcal{C}_j = 
\begin{tikzcd}[transparent,row sep=0.7em]
    & \gate[5]{\tilde{\mathcal{C}}_j}\slice{} & \qw & \qw &\qw\\
    & \qw & \qw & \qw &\qw\\
    & \qw& \gate[2]{U_2^{(j+1)}} & \gate[1]{\tilde{\mathcal{U}}_{U_1,\mathcal{N}}^{(m)}}&\qw \\
    &\qw & \qw & \gate[1]{\tilde{\mathcal{U}}_{U_1,\mathcal{N}}^{(m)}} &\qw\\
    & \qw & \qw & \qw &\qw
\end{tikzcd}
\end{equation}
and 
\begin{equation}
\mathcal{C}_{j+1} = 
\begin{tikzcd}[transparent,row sep=0.7em]
    & \gate[5]{\tilde{\mathcal{C}}_j}\slice{} & \qw & \qw &\qw\\
    & \qw & \qw & \qw &\qw\\
    & \qw& \gate[1]{U_1} & \gate[1]{\tilde{\mathcal{U}}_{U_1,\mathcal{N}}^{(m)}}&\qw \\
    &\qw & \gate[1]{U_1} & \gate[1]{\tilde{\mathcal{U}}_{U_1,\mathcal{N}}^{(m)}} &\qw\\
    & \qw & \qw & \qw &\qw
\end{tikzcd}
\end{equation}
where 
\begin{equation}
    \tilde{\mathcal{U}}_{U_1,\mathcal{N}}^{(m)} \coloneqq \underbrace{(\mathcal{U}_1 \circ \mathcal{N}\circ \mathcal{U}_1) \circ \cdots \circ (\mathcal{U}_1 \circ \mathcal{N}\circ \mathcal{U}_1) }_{m~\mathrm{times}}
\end{equation}
with some non-negative integer $m$.
Here, we consider $\{I_4,S\}^n$ bitstring representation.
Suppose that we have $\gamma \in \{I_4,S\}^n$ just before the red dashed line. 
Suppose that $\gamma$ will change as 
\begin{align}
    &\gamma \xrightarrow{\mathcal{C}_j:\mathrm{After\,\,red\,\,line}} \sum_{\gamma' \in \{I_4,S\}^n} c^{(\gamma')}_j \gamma' \\ 
    &\gamma \xrightarrow{\mathcal{C}_{j+1}:\mathrm{After\,\,red\,\,line}} \sum_{\gamma' \in \{I_4,S\}^n} c^{(\gamma')}_{j+1} \gamma'. 
\end{align}
Define 
\begin{align}
     a^{(\gamma)}_j &\coloneqq \sum_{\gamma' \in \{I_4,S\}^n} c^{(\gamma')}_j \\ 
     a^{(\gamma)}_{j+1} &\coloneqq \sum_{\gamma' \in \{I_4,S\}^n} c^{(\gamma')}_{j+1}.  
\end{align}
It suffices to show that 
\begin{equation}
    a^{(\gamma)}_j \leq a^{(\gamma)}_{j+1}
\end{equation}
for all $\gamma$. Observe that after the red line, the gates only act on the $k^{\text{th}}$ and the $l^{\text{th}}$ qubits. 
So, let us focus on these two qubits. 
\begin{enumerate}
    \item If $\gamma_{k,l} = II,SS$, since 
    \begin{align}
        &II \xrightarrow{U_2^{(j+1)}} II\\
        &II \xrightarrow{U_1\otimes U_1} II
    \end{align}
    and 
     \begin{align}
        &SS \xrightarrow{U_2^{(j+1)}} SS\\
        &SS \xrightarrow{U_1\otimes U_1} SS,
    \end{align}
    we have 
    \begin{equation}
    a^{(\gamma)}_j =  a^{(\gamma)}_{j+1}
    \end{equation}
    in this case. 

    \item If $\gamma_{k,l} = IS,SI$, without loss of generality, we may assume $\gamma_{k,l} = IS$. The proof for $\gamma_{k,l} = SI$ similarly follows. 
    We have 
    \begin{align}
        &IS \xrightarrow{U_2^{(j+1)}} \dfrac{2}{5}\left(II + SS\right)\\
        &IS \xrightarrow{U_1\otimes U_1} IS. 
    \end{align}
    By Lemma~\ref{lem:sequence_noise} (shown below this section),  
    \begin{align}
    &IS \xrightarrow{\mathcal{C}_j:\mathrm{After\,\,red\,\,line}} \frac{2}{5}\left((x_m I_4 + y_m S)(x_m I_4 + y_m S) + (z_m I_4 + w_m S)(z_m I_4 + w_m S)\right) \\ 
    &IS \xrightarrow{\mathcal{C}_{j+1}:\mathrm{After\,\,red\,\,line}} (x_m I_4 + y_m S)(z_m I_4 + w_m S), 
\end{align}
where 
\begin{align}
    x_m &= 1 - \dfrac{1 - (1-a-2b)^m}{1 + \tfrac{2b}{a}},\\ 
    y_m &= \dfrac{1 - (1-a-2b)^m}{\tfrac{1}{2} + \tfrac{b}{a}}, \\ 
    z_m &= \dfrac{1}{2} - \dfrac{\tfrac{1}{2} + \tfrac{b}{a}(1-a-2b)^m}{1 + \tfrac{2b}{a}}, \\ 
    w_m &=  \dfrac{\tfrac{1}{2} + \tfrac{b}{a}(1-a-2b)^m}{\tfrac{1}{2} + \tfrac{b}{a}}. 
\end{align}
Therefore, 
\begin{align}
    a^{(\gamma)}_j &= \dfrac{2}{5}\left(u_m^2 + v_m^2\right), \\ 
    a^{(\gamma)}_{j+1} &= u_mv_m, 
\end{align}
where 
\begin{align}
    u_m &= x_m + y_m = 1 + \dfrac{1 - (1-a-2b)^m}{1 + \tfrac{2b}{a}}, \\ 
    v_m &= z_m + w_m = \dfrac{1}{2} + \dfrac{\tfrac{1}{2} + \tfrac{b}{a}(1-a-2b)^m}{1 + \tfrac{2b}{a}}. 
\end{align}
Now, 
\begin{equation}
    \begin{aligned}
        &a^{(\gamma)}_j \leq a^{(\gamma)}_{j+1} \\ 
        &\Leftrightarrow \dfrac{2}{5}\left(u_m^2 + v_m^2\right) \leq u_mv_m \\ 
        &\Leftrightarrow 0 \leq (2v_m - u_m)(2u_m - v_m). 
    \end{aligned}
\end{equation}
Here, if $0\leq 1 - a -2b$, 
\begin{equation}
    2v_m - u_m = \dfrac{\left(\tfrac{b}{a} + 2\right)(1-a-2b)^m}{1 + \tfrac{2b}{a}} \geq 0, 
\end{equation}
and if $0\leq 1 - a -2b \leq 1$, 
\begin{equation}
    2u_m - v_m = \left(2 + \dfrac{2 - 2(1-a-2b)^m}{1 + \tfrac{2b}{a}}\right) - \left(\dfrac{1}{2} + \dfrac{\tfrac{1}{2} + \tfrac{b}{a}(1-a-2b)^m}{1 + \tfrac{2b}{a}}\right) \geq 2 - 1 = 1 \geq 0.  
\end{equation}
Hence, if $0\leq 1 - a -2b \leq 1$, $a^{(\gamma)}_j \leq a^{(\gamma)}_{j+1}$, which completes the proof. 
\end{enumerate}

\noindent In the proof, we used the following lemma. 
\begin{lemma}
    \label{lem:sequence_noise}
    Let $\mathcal{N}$ be a single qubit noise channel, and let $\mathsf{N} = \mathcal{N} \otimes \mathcal{N}$ be two copies of $\mathcal{N}$ acting on two qubits. 
Define a two-qubit operator
\begin{equation}
    \tilde{M}_{U_1, \mathsf{N}} = M_{U_1} \circ \mathsf{N} \circ M_{U_1}
\end{equation}
with 
\begin{equation}
M_{U_1}[\rho] = \underset{U_1 \sim \mathcal{U}_{\textup{Haar}}}{\mathbb{E}}\bigg[U_1^{\otimes 2} \rho U_1^{\dagger \otimes 2} \bigg]. 
\end{equation}
Suppose that 
\begin{align}
    \label{eq:noise1_appendix}
    \tilde{M}_{U_1, \mathsf{N}}(I_4) &= (1 - a)I_4 + 2a S, \\
     \label{eq:noise2_appendix}
    \tilde{M}_{U_1, \mathsf{N}}(S) &= bI_4 + (1 - 2b) S
\end{align}
with $a > 0$ and $b > 0$, 
where $I_4$ is the 2-qubit identity operator and $S$ is the 2-qubit SWAP gate. 
Then, 
\begin{align}
   \underbrace{\tilde{M}_{U_1, \mathsf{N}} \circ \tilde{M}_{U_1, \mathsf{N}} \circ \cdots \circ \tilde{M}_{U_1, \mathsf{N}}}_{m~\text{times}} \left(I_4\right) &= \left(1 - \dfrac{1 - (1-a-2b)^m}{1 + \tfrac{2b}{a}}\right)I_4 + \left(\dfrac{1 - (1-a-2b)^m}{\tfrac{1}{2} + \tfrac{b}{a}}\right)S, \\
   \underbrace{\tilde{M}_{U_1, \mathsf{N}} \circ \tilde{M}_{U_1, \mathsf{N}} \circ \cdots \circ \tilde{M}_{U_1, \mathsf{N}}}_{m~\text{times}} \left(S\right) &= \left(\dfrac{1}{2} - \dfrac{\tfrac{1}{2} + \tfrac{b}{a}(1-a-2b)^m}{1 + \tfrac{2b}{a}}\right)I_4 + \left(\dfrac{\tfrac{1}{2} + \tfrac{b}{a}(1-a-2b)^m}{\tfrac{1}{2} + \tfrac{b}{a}}\right)S.
\end{align}
\end{lemma}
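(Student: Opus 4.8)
The plan is to exploit that $\mathrm{span}\{I_4, S\}$ is a two-dimensional subspace left invariant by $\tilde{M}_{U_1,\mathsf{N}}$, as is immediate from Eqs.~\eqref{eq:noise1_appendix} and \eqref{eq:noise2_appendix}. Writing a generic element of this subspace as $\alpha I_4 + \beta S$, the map acts on the coefficient vector $(\alpha,\beta)^{\mathsf{T}}$ by the matrix
\begin{equation}
M = \begin{pmatrix} 1-a & b \\ 2a & 1-2b \end{pmatrix},
\end{equation}
so computing the $m$-fold composition $\tilde{M}_{U_1,\mathsf{N}}^{\circ m}$ reduces entirely to computing $M^m$.

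Next I would diagonalize $M$. Its trace is $2-a-2b$ and its determinant is $(1-a)(1-2b)-2ab = 1-a-2b$, so its eigenvalues are $1$ and $1-a-2b$. The eigenvectors can be read off directly from relations \eqref{eq:relation1} and \eqref{eq:relation2} already established in Proposition~\ref{first thm}: the vector $I_4-2S$ is the eigenvector with eigenvalue $1-a-2b$, while $bI_4+aS$ is the fixed point with eigenvalue $1$ (which one checks in a single line: $\tilde{M}_{U_1,\mathsf{N}}(bI_4+aS)=b[(1-a)I_4+2aS]+a[bI_4+(1-2b)S]=bI_4+aS$). Since $a,b>0$ forces $a+2b>0$, the coefficient matrix $\begin{pmatrix} b & 1 \\ a & -2 \end{pmatrix}$ of these two eigenvectors has determinant $-(a+2b)\neq 0$, so they form a basis.

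I would then expand the two inputs in this eigenbasis, namely
\begin{equation}
I_4 = \frac{2}{a+2b}(bI_4+aS) + \frac{a}{a+2b}(I_4-2S), \qquad S = \frac{1}{a+2b}(bI_4+aS) - \frac{b}{a+2b}(I_4-2S),
\end{equation}
apply $\tilde{M}_{U_1,\mathsf{N}}^{\circ m}$ by scaling the first eigencomponent by $1^m=1$ and the second by $(1-a-2b)^m$, and recombine. Collecting the coefficients of $I_4$ and $S$ gives, for instance, the $I_4$-coefficient of $\tilde{M}_{U_1,\mathsf{N}}^{\circ m}(I_4)$ as $\tfrac{2b+a(1-a-2b)^m}{a+2b}$; rewriting with $\tfrac{a}{a+2b}=(1+\tfrac{2b}{a})^{-1}$ and $\tfrac{2a}{a+2b}=(\tfrac12+\tfrac{b}{a})^{-1}$ turns this and its three companions into exactly the claimed closed forms.

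There is no serious conceptual obstacle: once the invariant subspace and its eigendecomposition are identified (the latter essentially handed to us by Proposition~\ref{first thm}), the statement is forced. The only real effort is the algebraic bookkeeping in the final simplification, matching expressions such as $\tfrac{2b+a(1-a-2b)^m}{a+2b}$ to the stated forms involving $1+\tfrac{2b}{a}$ and $\tfrac12+\tfrac{b}{a}$; here one must simply keep track of which factor of $a+2b$ lands in each denominator, and note that the change of basis is nonsingular precisely because $a+2b>0$.
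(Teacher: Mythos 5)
Your proof is correct and takes essentially the same route as the paper: both reduce the iteration to the $2\times 2$ linear map on $\mathrm{span}\{I_4,S\}$ with eigenvalues $1$ and $1-a-2b$ and solve the resulting recurrence via its spectral decomposition. The only difference is presentational --- the paper tracks the invariant combinations $x_m+\tfrac{1}{2}y_m$ and $x_m-\tfrac{b}{a}y_m$ (the left eigenvectors of your matrix $M$), whereas you expand the inputs in the right eigenvectors $bI_4+aS$ and $I_4-2S$ --- and your algebra, including the final matching of closed forms, checks out.
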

\begin{proof}
    Let us write 
    \begin{align}
        \underbrace{\tilde{M}_{U_1, \mathsf{N}} \circ \tilde{M}_{U_1, \mathsf{N}} \circ \cdots \circ \tilde{M}_{U_1, \mathsf{N}}}_{m~\text{times}} \left(I_4\right) &= x_m I_4 + y_m S. 
    \end{align}
    Using Eqs.~\eqref{eq:noise1_appendix} and \eqref{eq:noise2_appendix}, 
    \begin{align}
        x_{m+1} &= (1-a)x_m + by_m, \\ 
        y_{m+1} &= 2a x_m + (1-2b)y_m. 
    \end{align}
    This is equivalent to 
    \begin{align}
        x_{m+1} + \dfrac{1}{2} y_{m+1} &= x_{m} + \dfrac{1}{2} y_{m}, \\ 
        x_{m+1} - \dfrac{b}{a} y_{m+1} &= (1-a-2b) \left(x_{m} - \dfrac{b}{a} y_{m}\right). 
    \end{align}
    Therefore, we have 
    \begin{align}
        x_{m} + \dfrac{1}{2} y_{m} &= x_{0} + \dfrac{1}{2} y_{0} = 1, \\ 
        x_{m} - \dfrac{b}{a} y_{m} &= (1-a-2b)^m \left(x_{0} - \dfrac{b}{a} y_{0}\right) = (1-a-2b)^m. 
    \end{align}
    Hence, 
    \begin{align}
    x_m &= 1 - \dfrac{1 - (1-a-2b)^m}{1 + \tfrac{2b}{a}},\\ 
    y_m &= \dfrac{1 - (1-a-2b)^m}{\tfrac{1}{2} + \tfrac{b}{a}}. 
\end{align}
\noindent With a very similar argument, letting us write 
    \begin{align}
        \underbrace{\tilde{M}_{U_1, \mathsf{N}} \circ \tilde{M}_{U_1, \mathsf{N}} \circ \cdots \circ \tilde{M}_{U_1, \mathsf{N}}}_{m~\text{times}} \left(S\right) &= z_m I_4 + w_m S,  
    \end{align}
we have 
\begin{align}
    z_m &= \dfrac{1}{2} - \dfrac{\tfrac{1}{2} + \tfrac{b}{a}(1-a-2b)^m}{1 + \tfrac{2b}{a}}, \\ 
    w_m &=  \dfrac{\tfrac{1}{2} + \tfrac{b}{a}(1-a-2b)^m}{\tfrac{1}{2} + \tfrac{b}{a}}. 
\end{align}
\end{proof}

\section{Effect of noiseless single qubit gates in the last layer}
\label{noiseless}
Let us consider a layer of single-qubit gates that immediately follows the last layer of noise. This is equivalent to arbitrarily locally rotating the measurement basis. Moreover, assume that these gates are noiseless. 
A single qubit gate $U$ is parametrized as
\begin{equation}
  U(\theta, \phi) = 
\begin{pmatrix}
&\cos(\theta) \cdot e^{i \phi} && \sin(\theta) \\
&-\sin(\theta) && \cos(\theta) \cdot e^{-i \phi}& \\
\end{pmatrix}.
\end{equation}
Let $U_i(\theta_i, \phi_i)$ be the unitary applied to the $i^{\text{th}}$ qubit in the last layer. 
\begin{thm}
\label{ampdamp1}
Let $\mathcal{B}$ be an ensemble of amplitude-damped random quantum circuits, with noise strength $q$. Additionally, before measurement, for every $i \in [n]$, let $U_i(\theta_i, \phi_i)$---a single qubit, noiseless gate---be applied to qubit $i$. Then,
\begin{equation}
\underset{\mathcal{B}}{\mathbb{E}}[p_{i, b}] = \frac{1}{2} + \frac{(-1)^{b} \cdot q\cos2\theta_i}{2},
\end{equation}
where $p_{i, b}$ is marginal probabilities of getting $b \in \{0, 1\}$, in the $i^{\text{th}}$ qubit.
\end{thm}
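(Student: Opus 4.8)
The plan is to reduce the marginal outcome probability to a single-qubit trace and then exploit that, in expectation over the random gates, the reduced state of each qubit immediately before the fixed last layer is maximally mixed. Write $M_{i,b} \coloneqq U_i^\dagger \ket{b}\bra{b} U_i$ for the standard-basis projector pulled back through the fixed gate $U_i(\theta_i,\phi_i)$, and let $\sigma_i \coloneqq \Tr_{\bar{i}}\big(\mathcal{C}(\ket{0^n}\bra{0^n})\big)$ be the reduced state of qubit $i$ after the full noisy circuit $\mathcal{C}$ (including the last layer of noise), where $\Tr_{\bar{i}}$ traces out all qubits except $i$. Since $U_i$ acts only on qubit $i$ and the measurement is in the standard basis, the marginal probability of outcome $b$ is
\begin{equation}
p_{i,b} = \Tr\!\left(\ket{b}\bra{b}\, U_i \sigma_i U_i^\dagger\right) = \Tr\!\left(M_{i,b}\,\sigma_i\right).
\end{equation}
Taking the expectation over $\mathcal{B}$ and using linearity of trace and expectation, it then suffices to compute $\mathbb{E}_{\mathcal{B}}[\sigma_i]$ and expand $M_{i,b}$ in the Pauli basis.

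For the expected reduced state I would invoke the ``removing the last layer of noise'' structure $\mathcal{C} = \mathcal{N}^{\otimes n}\circ \mathcal{C}'$. Because $\mathcal{N}^{\otimes n}$ is a tensor product of single-qubit channels, tracing out all qubits but $i$ commutes with the (trace-preserving) noise on the remaining qubits, giving $\sigma_i = \mathcal{N}\big(\Tr_{\bar{i}}(\mathcal{C}'(\ket{0^n}\bra{0^n}))\big)$. Averaging over the final layer of Haar-random gates yields $\mathbb{E}_{\mathcal{B}'}[\mathcal{C}'(\ket{0^n}\bra{0^n})] = I_{2^n}/2^n$, which is exactly the identity already established in the first-moment analysis; hence $\Tr_{\bar{i}}$ of this is $I_2/2$ and
\begin{equation}
\mathbb{E}_{\mathcal{B}}[\sigma_i] = \mathcal{N}\!\left(\tfrac{I_2}{2}\right) = \tfrac12\big(I_2 + q\,\sigma_z\big),
\end{equation}
where the last equality is the direct evaluation of the amplitude damping channel on $I_2$ from \cref{defn: amp-damping}. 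The conceptual point is that this expected single-qubit state is identical for every qubit and every depth: the random gates erase all coherence and bias, and the only surviving structure is the amplitude damping pull toward $\ket{0}$, recorded by the $q\,\sigma_z$ term.

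It remains to expand $M_{i,b}$. Using the explicit parametrization of $U_i(\theta_i,\phi_i)$, a short computation gives
\begin{equation}
M_{i,b} = \tfrac12\Big(I_2 + (-1)^b\big(\sin 2\theta_i\cos\phi_i\,\sigma_x + \sin 2\theta_i\sin\phi_i\,\sigma_y + \cos 2\theta_i\,\sigma_z\big)\Big).
\end{equation}
Substituting the two displays above and using $\Tr(I_2)=2$, $\Tr(\sigma_a)=0$, and $\Tr(\sigma_a\sigma_b)=2\delta_{ab}$, every $\sigma_x$ and $\sigma_y$ cross term vanishes, and the only surviving contribution pairs the $\cos 2\theta_i\,\sigma_z$ term of $M_{i,b}$ with the $q\,\sigma_z$ term of $\mathbb{E}_{\mathcal{B}}[\sigma_i]$, yielding $\mathbb{E}_{\mathcal{B}}[p_{i,b}] = \tfrac12 + \tfrac12(-1)^b q\cos 2\theta_i$, as claimed.

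I do not expect a genuine obstacle: the only step requiring care is justifying that the expected pre-rotation single-qubit state is $\mathcal{N}(I_2/2)$ independent of depth and position, which follows immediately from Haar-averaging the final layer of gates. The remaining work---the Pauli expansion of the rotated projector and the orthogonality bookkeeping---is mechanical. In fact this statement is precisely the one-qubit, first-moment shadow of the collision-probability computation in \cref{last layer}, where the induced effective noise map on $I_2$ carries $\sigma_z$-weight $t_{03} = q\cos 2\theta_i$.
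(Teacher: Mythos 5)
Your proof is correct and follows essentially the same route as the paper's: remove the last layer of noise, use Haar averaging of the final gate layer to get $\mathbb{E}[\rho_i]=I_2/2$, then evaluate the overlap of $\mathcal{N}(I_2/2)$ with the rotated projector $U_i^\dagger\ket{b}\bra{b}U_i$. The only difference is bookkeeping---you expand in the Pauli basis and treat both $b$ values at once, while the paper works with the Kraus operators and cyclicity of trace for $b=0$---which is a purely mechanical variation.
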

\begin{proof}
As per earlier convention, let $\mathcal{N}$ be the single-qubit noise channel, and let $K_0$ and $K_1$ be the corresponding Kraus operators. 
Let $\mathcal{C} \in \mathcal{B}$ be a noisy random circuit.  
Let $\tilde{\mathcal{C}}$ be the quantum circuit \emph{without the last layer}; that is, 
\begin{equation}
\mathcal{C} = \mathcal{N}^{\otimes n} \circ \tilde{\mathcal{C}}.
\end{equation}
For $i \in [n]$, define $\rho_i$ as follows:
\begin{equation}
\begin{aligned}
\rho_i = \underset{\mathcal{B}}{\mathbb{E}}[\Tr_{1, \ldots, i-1, i+1, \ldots n}(\tilde{\mathcal{C}}(\ketbra{0^n}{0^n}))],
\end{aligned}
\end{equation}
where $I_2$ is the single-qubit identity matrix. 
In other words, $\rho_i$ is the expected reduced density matrix on just the $i^{\text{th}}$ qubit.
By definition, we have 
\begin{equation}
    \rho_i = \frac{I_2}{2}. 
\end{equation}

\noindent In the last noiseless layer, since the parameters are implicit, we will drop the subscripts and refer to the $i^{\text{th}}$ single qubit gate as just $U_i$. 
\noindent Then, by the definition of the adjoint map, 
\begin{align}
\underset{\mathcal{B}}{\mathbb{E}}[p_{i, 0}] &=  \underset{\mathcal{B}}{\mathbb{E}}[ \Tr(\ket{0}\bra{0} U_i \mathcal{N}(\rho_i) U_i^{\dagger}) ] \\
&=  \underset{\mathcal{B}}{\mathbb{E}}[\Tr( U_i^{\dagger} \ket{0}\bra{0} U_i \mathcal{N}(\rho_i)) ]  \\
&= \underset{\mathcal{B}}{\mathbb{E}}[\Tr( U_i^{\dagger} \ket{0}\bra{0} U_i \mathcal{N}(\rho_i)) ] \\
\label{lasteq}
& =  \underset{\mathcal{B}}{\mathbb{E}}[\Tr( K_0^{\dagger}U_i^{\dagger} \ket{0}\bra{0} U_i K_0 ~\rho_i) ] + \underset{\mathcal{B}}{\mathbb{E}}[\Tr( K_1^{\dagger}U_i^{\dagger} \ket{0}\bra{0} U_i K_1 ~\rho_i) ],  
\end{align}
where we have repeatedly used the cyclic property of trace. 
In \cref{lasteq}, we used the Kraus operators $K_0$ and $K_1$ of the amplitude damping noise channel. 
Since $\rho_i = \tfrac{I_2}{2}$, \cref{lasteq} is equal to 
\begin{align}
\frac{1}{2} \left(\Tr( K_0^{\dagger}U_i^{\dagger} \ket{0}\bra{0} U_i K_0)  + \Tr( K_1^{\dagger}U_i^{\dagger} \ket{0}\bra{0} U_i K_1)\right).  
\end{align}
By explicitly computing each term, we have 
\begin{align}
\underset{\mathcal{B}}{\mathbb{E}}[p_{i, 0}] = \frac{1}{2}\left((\cos^2\theta_i + (1-q)\sin^2\theta_i) + q\cos^2\theta_i\right)
= \frac{1 + q(\cos^2\theta_i - \sin^2\theta_i)}{2} 
= \frac{1}{2} + \frac{q\cos2\theta_i}{2}. 
\end{align}

\end{proof}
\noindent Let us define the "bias" $\beta_i$, of qubit $i$ as
\begin{equation}
\beta_i = q\cos2\theta_i.
\end{equation}
Note that $\beta_i \in [-q, q]$. Hence,
\begin{equation}
\underset{\mathcal{B}}{\mathbb{E}}[p_{i, b}] = \frac{1}{2} + \frac{\beta_i}{2},~~~~~\underset{\mathcal{B}}{\mathbb{E}}[p_{i, b}] = \frac{1}{2} - \frac{\beta_i}{2}.
\end{equation}
\begin{remark}
One way to interpret \cref{ampdamp1} is to observe that there is an ``effective'' amplitude damping channel acting on each qubit, but each such channel has a "tunable" noise strength, which can now also be negative. 

A negative noise strength means that the outcome $0$ is suppressed and the outcome $1$ is assigned higher weight. However, \emph{which} strings are suppressed, and by \emph{how much} depend on the "bias" of each qubit. By changing the single qubit gates, we can control the bias.
\end{remark}

\section{Output distribution with a last layer of noiseless gates}
\label{last2}
\begin{thm}
    Let $\mathcal{B}$ be an ensemble of amplitude-damped random quantum circuits, with noise strength $q$. Additionally, before measurement, for every $i \in [n]$, let $U_i(\theta_i, \phi_i)$---a single qubit, noiseless gate---be applied to qubit $i$. 
    If 
    \begin{equation}
        4 - \sqrt{15} < |\cos 2\theta_i| 
    \end{equation}
    for all $i \in [n]$, 
    then there exists a string $x \in \{0,1\}^n$ such that for any $q > 0$ and $d= \Omega(\log n)$, 
\begin{equation}
\underset{n \rightarrow \infty}{\lim} \underset{\mathcal{B}}{\Pr}\left[p_x < \frac{\alpha}{2^n} \right] = 1,
\end{equation}
for any $\alpha \in (0, 1]$.
\end{thm}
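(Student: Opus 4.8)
The plan is to exhibit a single string $x$ whose output probability concentrates below $\alpha/2^n$, by transporting the high--depth second--moment machinery of \cref{sec:high depth} into the rotated measurement basis induced by the last layer of gates. The first move is to absorb that layer: the amplitude damping acting before a gate $U_i$, followed by $U_i$ itself, is the effective single--qubit channel $\mathcal{M}_i(\cdot) = U_i\,\mathcal{N}^{(\textup{amp})}_q(\cdot)\,U_i^{\dagger}$, which is exactly the map whose Pauli--transfer coefficients were read off in \cref{last layer}, so that $t_{03}=q\cos 2\theta_i$. Because $\mathcal{M}_i$ differs from $\mathcal{N}^{(\textup{amp})}_q$ only by conjugation by the fixed unitary $U_i$, and the Haar twirl $M_{U_1}$ is invariant under such conjugation (change of variables $U_1\mapsto U_1U_i$ inside the expectation, cf.\ Appendix~\ref{sec:Werner_twirl}), the bulk quantities $a,b,c,\mu,\nu$ are identical to those of pure amplitude damping. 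In particular $\mu=\tfrac{1}{4-q}$, $c=\tfrac{q(4-q)}{3}\in(0,1]$, and $\nu=\tfrac{1-q}{3(4-q)}\geq 0$, so the whole difference between this toy model and the setting of \cref{second moment probabilities} is localized in the final measurement weights.

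Next I would fix the target string to be the ``anti--biased'' one: since $|\cos 2\theta_i|>4-\sqrt{15}>0$ for every $i$, we have $\cos 2\theta_i\neq 0$, and by \cref{ampdamp1} each qubit has a strictly less likely outcome, with marginal $\tfrac12\bigl(1-q|\cos 2\theta_i|\bigr)$; let $x_i$ be that outcome. Running the adjoint/first--moment argument of \cref{first moment} with $\mathcal{M}_i$ in place of the uniform noise, the first moment factorizes as $\underset{\mathcal{B}}{\mathbb{E}}[p_x]=2^{-n}\prod_{i}\bigl(1-q|\cos 2\theta_i|\bigr)$, which is exponentially smaller than $2^{-n}$. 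Hence $\underset{\mathcal{B}}{\mathbb{E}}[p_x]+\alpha 2^{-n}<\beta 2^{-n}$ for a suitable $\beta<1$ and all large $n$, exactly as in the opening step of the proof of \cref{concentration2}.

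The crux is the second moment. Following Appendix~\ref{appendix:proof_of_CR} through Lemma~\ref{lem:modified-circuit} and Proposition~\ref{first thm} (all of which depend only on the twirl--invariant $a,b$, hence are unchanged), I would obtain
\begin{equation}
\underset{\mathcal{B}}{\mathbb{E}}[p_x^2]\;\leq\;\mu^n\exp\!\left[n\tfrac{\nu}{\mu}\mathrm{e}^{-c(d-1)}\right]\prod_{i=1}^n e_i,
\end{equation}
where now $e_i=\max\bigl\{\langle x_ix_i|(\mathcal{M}_i\otimes\mathcal{M}_i)(I_4)|x_ix_i\rangle,\ \langle x_ix_i|(\mathcal{M}_i\otimes\mathcal{M}_i)(S)|x_ix_i\rangle\bigr\}$ is the rotated--basis weight. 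Using $\langle x_i|U_i\sigma_z U_i^{\dagger}|x_i\rangle=(-1)^{x_i}\cos 2\theta_i$ the $I_4$--branch evaluates to $(1-q|\cos 2\theta_i|)^2$, while expanding $(\mathcal{M}_i\otimes\mathcal{M}_i)(S)$ through the Bloch vector of $U_i^{\dagger}|x_i\rangle$ (whose $z$--component is $(-1)^{x_i}\cos2\theta_i$ and whose transverse part has norm $\sin^2 2\theta_i$) gives an $S$--branch that one checks is never larger, so $e_i=(1-q|\cos 2\theta_i|)^2$. Since $d=\Omega(\log n)$ forces the exponential prefactor to be $\mathcal{O}(1)$, Chebyshev's inequality in the form \eqref{eq:prob_bound} yields the claim once $4\mu\,\eta<1$, where $\eta$ is the per--qubit weight built from the $e_i$; equivalently this is the condition of the general high--depth statement \cref{highdepth} specialized to the effective noise $\mathcal{M}_i$.

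The step I expect to be the main obstacle is precisely verifying the threshold inequality $4\mu\,\eta<1$ \emph{uniformly} in the noise strength $q$. The generic stat--mech bound becomes loose in the strong--damping limit $q\to 1$, where the post--damping state is nearly pure and the output is almost deterministic, so the minimization of $4\mu\,\eta$ over $q\in(0,1]$ (with $\eta$ governed by $|\cos 2\theta_i|$) must be performed with care to extract the sharp cutoff $|\cos 2\theta_i|>4-\sqrt{15}$ asserted in the theorem; this is where I anticipate the delicate work, together with confirming that $e_i^{(S)}\leq e_i^{(I)}$ and that the single--qubit--replacement reduction of Lemma~\ref{lem:modified-circuit} still produces non--negative trajectory weights when the terminal layer is the rotated channel.
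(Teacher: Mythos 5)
Your proposal follows the paper's own proof of this theorem essentially step for step: the paper likewise reuses the second--moment machinery of \cref{second moment probabilities} (via Appendix~\ref{appendix:proof_of_CR}) with the bulk quantities of pure amplitude damping, $\mu = \tfrac{1}{4-q}$, $c=\tfrac{q(4-q)}{3}$, $\nu = \tfrac{1-q}{3(4-q)}$, pushes the entire effect of the fixed last layer into the terminal weights $e_j = \max\left\{\bra{x_jx_j}(\mathcal{U}_j\otimes\mathcal{U}_j)\mathsf{N}(I_4)(\mathcal{U}_j\otimes\mathcal{U}_j)^\dagger\ket{x_jx_j},\ \bra{x_jx_j}(\mathcal{U}_j\otimes\mathcal{U}_j)\mathsf{N}(S)(\mathcal{U}_j\otimes\mathcal{U}_j)^\dagger\ket{x_jx_j}\right\}$, chooses the anti--biased string so that $e_j = (1-q|\cos 2\theta_j|)^2$, and reduces everything to checking $4(1-q|\cos2\theta|)^2/(4-q) < 1$ with $\theta = \argmin_{\theta_j}|\cos 2\theta_j|$. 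Your side computations are correct; for instance the $S$--branch is indeed dominated, since with $z = (-1)^{x_j}\cos2\theta_j$ one finds the $S$--branch minus the $I$--branch equals $-\tfrac{q}{2}(1+z)^2 \le 0$. Your final worry about \cref{lem:modified-circuit} is moot: the rotations sit \emph{after} the last noise layer, so they never enter the trajectory weights; only $0 \le 1-a-2b = (1-q)\left(1-\tfrac{q}{3}\right) \le 1$ is needed there, which holds.

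The one genuine gap is that you never carry out the threshold verification $4\mu\eta<1$, which is the only place the constant $4-\sqrt{15}$ could be produced, so the theorem as stated is not actually established by your text. Moreover, your suspicion that this step is delicate is confirmed in a way you did not anticipate: the algebra is elementary, but it does not yield the stated constant. Writing $t = |\cos 2\theta|$, one has $4(1-qt)^2/(4-q)<1 \iff q\left(4qt^2-8t+1\right)<0 \iff 4qt^2 < 8t-1$ for $q>0$; demanding this for all $q\in(0,1]$ is binding at $q=1$, giving $4t^2-8t+1<0$, i.e., $t > 1-\tfrac{\sqrt{3}}{2} = \tfrac{2-\sqrt{3}}{2}\approx 0.134$. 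The paper's constant $4-\sqrt{15}\approx 0.127$ is the root of $t^2-8t+1$, i.e., it appears to come from dropping the factor $4$ on the quadratic term; for $t \in \left(4-\sqrt{15}, \tfrac{2-\sqrt{3}}{2}\right]$ and $q$ near $1$ the inequality $4\mu\eta<1$ in fact fails. So a completed version of your argument (and a corrected version of the paper's) proves the theorem with threshold $\tfrac{2-\sqrt{3}}{2}$ in place of $4-\sqrt{15}$; neither argument reaches the constant as stated.
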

\begin{proof}
    Following the same argument as in the proof of Theorem~\ref{second moment probabilities} in Appendix~\ref{appendix:proof_of_CR}, for sufficiently large depth and sufficiently large $n$, we have 
    \begin{equation}
        \underset{\mathcal{B}}{\mathbb{E}}[p_x^2] \leq \left(\left(\frac{1}{4-q}\right)^n (e_1e_2\cdots e_n)\right)\times \mathcal{O}(1)
    \end{equation}
    where 
    \begin{equation}
        \begin{aligned}
         e_j 
         &= \max\left\{\bra{x_j}\bra{x_j}(\mathcal{U}_j\otimes \mathcal{U}_j)\mathsf{N}(I_4)(\mathcal{U}_j\otimes \mathcal{U}_j)^\dagger\ket{x_j}\ket{x_j},\bra{x_j}\bra{x_j}(\mathcal{U}_j\otimes \mathcal{U}_j)\mathsf{N}(S)(\mathcal{U}_j\otimes \mathcal{U}_j)^\dagger\ket{x_j}\ket{x_j}\right\} \\ 
         &=\Bigg\{ 
        \begin{array}{cc}
            (1 + q\cos 2\theta_j)^2 & x_j = 0 \\
            (1 - q\cos 2\theta_j)^2 & x_j = 1
        \end{array}   
        \end{aligned}
    \end{equation}
    for $j = 1,2,\dots,n$. 
    Choose $x$ so that 
    \begin{equation}
        e_j =  (1 - q |\cos 2\theta_j|)^2. 
    \end{equation}
    In this case, 
    \begin{equation}
        \underset{\mathcal{B}}{\mathbb{E}}[p_x^2] \leq  \left(\frac{(1 - q |\cos 2\theta|)^2}{4-q}\right)^n \times \mathcal{O}(1), 
    \end{equation}
    where 
    \begin{equation}
        \theta \coloneqq \argmin_{\theta_j : j\in[n]}  |\cos2\theta_j|. 
    \end{equation}
    To see the concentration, it suffices to check if 
    \begin{align}
        4\frac{(1 - q |\cos 2\theta|)^2}{4-q} < 1, 
    \end{align}
    which is equivalent to 
    \begin{equation}
        q\left(4|\cos 2\theta|^2q -(8|\cos 2\theta| - 1) \right) < 0. 
    \end{equation}
    When $q = 0$, this inequality cannot be satisfied, so $q > 0$. 
    Under this condition, we have 
    \begin{equation}
        4|\cos 2\theta|^2q -(8|\cos 2\theta| - 1) < 0. 
    \end{equation}
    When 
    \begin{equation}
        4 - \sqrt{15} < |\cos 2\theta|, 
    \end{equation}
    all $q > 0$ satisfy this condition, which completes the proof. 
\end{proof}

\section{A note on the easiness of sampling results}
\label{Easiness_sampling}
In this section, we will sketch the easiness results of \cite{aharonov2022polynomialtime} and why anticoncentration is believed to be an important criterion for the current analysis techniques to go through. This is just a sketch, so we will not be too formal with the definitions and proofs.

Without loss of generality, let $\mathcal{C}$ be a unitary quantum circuit circuit. By $p_x(\mathcal{C})$, for $x \in \{0, 1\}^n$, let us denote the probability of getting string $x$ in the output distribution of $\mathcal{C} |0^n \rangle$. Using techniques from \cite{aharonov2022polynomialtime}, we can rewrite $p_x(\mathcal{C})$ in terms of Pauli paths as
\begin{equation}
p_x(\mathcal{C}) = \sum_{s} f(\mathcal{C}, x, s),
\end{equation}
where $f(\mathcal{C}, x, s)$ are as defined in \cite{aharonov2022polynomialtime} --- they are path weights for each Pauli path --- and $s$ is the number of non--identity Pauli terms for each Pauli path. Similarly, let
\begin{equation}
\tilde{p}_x(\mathcal{C}) = \sum_{s} \tilde{f}(\tilde{\mathcal{C}}, x, s)
\end{equation}
be the output probability, written as Pauli paths, for the noisy version of $\mathcal{C}$, denoted by $\tilde{\mathcal{C}}$. 
\noindent Now, to sample from the output distribution of $\tilde{\mathcal{C}} |0^n \rangle$, we consider a new distribution $\tilde{\tilde{q}}$ given by
\begin{equation}
\tilde{\tilde{q}}_x(\tilde{\mathcal{C}}) = \underset{s, |s| \leq l}{\sum} \tilde{f}(\tilde{\mathcal{C}}, x, s),
\end{equation}
where $l$ is some threshold that we choose. Consider an ensemble $\mathcal{B}$ of such noisy circuits. Now, let us calculate the total variation distance between $\tilde{p}$ and $\tilde{\tilde{q}}$.
\begin{align}
\underset{\mathcal{B}}{\mathbb{E}}\big[|\tilde{p}-\tilde{\tilde{q}}|_1^{2}\big] &\leq 2^n \cdot \underset{\mathcal{B}}{\mathbb{E}} \left[ \sum_{x \in \{0, 1\}^n} \big(\tilde{p}_x(\mathcal{C}) - \tilde{\tilde{q}}_x(\tilde{\mathcal{C}})\big)^2  \right] \\
&\leq 2^n \cdot \underset{\mathcal{B}}{\mathbb{E}}\left[ \sum_{x \in \{0, 1\}^n} \sum_{s, |s| > l} \tilde{f}(\tilde{\mathcal{C}}, x, s)^2 \right],
\end{align}
where the first line follows from the Cauchy-Schwarz inequality, and the second line follows from definitions. Note that for any choice of the cutoff $l$,
\begin{align}
\label{firsteq2}
& 2^n \cdot \underset{\mathcal{B}}{\mathbb{E}}\left[ \sum_{x \in \{0, 1\}^n} \sum_{s, |s| > l} \tilde{f}(\tilde{\mathcal{C}}, x, s)^2 \right] \\
& \leq 2^n \cdot \underset{\mathcal{B}}{\mathbb{E}}\left[  \sum_{x \in \{0, 1\}^n}  \sum_{s} \tilde{f}(\tilde{\mathcal{C}}, x, s)^2 \right] \\
& \leq 2^n \cdot  \underset{\mathcal{B}}{\mathbb{E}}\left[\sum_{x \in \{0, 1\}^n} \tilde{p}_x^2\right],
\end{align}
\noindent where the last line follows from orthogonality of Pauli paths in a random circuit.

\subsection{Special case of the depolarizing channel}
\label{depolarizing2}
For the special case of the depolarizing channel,
\begin{equation}
\label{Depolarizing}
\tilde{f}(\tilde{\mathcal{C}}, x, s) = (1 - q)^{|s|} f({\mathcal{C}}, x, s).
\end{equation}
So, for a choice of cutoff $l$, using similar steps as in the previous section, \cref{firsteq2} can be upper bounded with the quantity 
\begin{equation}
(1 - q)^{2 l} \cdot 2^{n} \cdot  \sum_{x \in \{0, 1\}^n} p_x^{2}.
\end{equation}
It is known that for sufficiently deep circuits \cite{dalzell}, the scaled noiseless collision probability,
\begin{equation}
2^{n} \cdot \underset{\mathcal{B}}{\mathbb{E}}\left[\sum_{x \in \{0, 1\}^n} p_x^{2}\right]
\end{equation}
is $\mathcal{O}(1)$. So, by appropriately choosing $l$, one can make the total variation distance an inverse polynomial or less. 

\begin{remark}
Note that \cref{Depolarizing} is true for the special case of the depolarizing channel, but is \emph{not} true in general. So, this analysis, where it suffices to look at the convergence of the noiseless collision probability because that can, essentially, be "factored out" of the actual expression and dealt with separately, \emph{does not} work in general.
\end{remark}

\begin{remark}
Note that one could also have directly upper bounded the noisy collision probability 
\begin{equation}
\label{cp_remark}
2^n \cdot  \sum_{x \in \{0, 1\}^n} \tilde{p}_x^2,
\end{equation}
for the depolarizing channel. Indeed, this is done in \cite{Deshpande_2022}. However, in \cite{Deshpande_2022}, the bound is $\mathcal{O}(1)$, and not as tight as the analysis in \cref{depolarizing2}. This is necessary for inverse polynomial closeness in total variation distance, but is not sufficient. So, we still need the techniques from \cref{depolarizing2} to prove our bound.
\end{remark}

\begin{remark}
In general, since noisy ensembles do not satisfy the form in \cref{Depolarizing}, directly bounding the noisy collision probability is the best that we can hope for, which may not always give us tight bounds. 
\end{remark}

\subsection{Lack of anticoncentration implies failure of proof technique}

\noindent When the noisy ensemble $\mathcal{B}$ is anticoncentrated, with respect to \cref{Definiton: anticoncentration}, then it means that
\begin{equation}
2^n \cdot  \underset{\mathcal{B}}{\mathbb{E}}\left[\sum_{x \in \{0, 1\}^n} \tilde{p}_x^2 \right] = \mathcal{O}(1),
\end{equation}
which implies,
\begin{equation}
\label{tvd}
\underset{\mathcal{B}}{\mathbb{E}}\big[|\tilde{p}-\tilde{\tilde{q}}|_1^{2}\big] \leq 2^n \cdot \underset{\mathcal{B}}{\mathbb{E}}\left[ \sum_{x \in \{0, 1\}^n} \sum_{s, |s| > l} \tilde{f}(\tilde{\mathcal{C}}, x, s)^2 \right] \leq 2^n \cdot  \underset{\mathcal{B}}{\mathbb{E}}\left[\sum_{x \in \{0, 1\}^n} \tilde{p}_x^2 \right] = \mathcal{O}(1).
\end{equation}
\clearpage
This alone does not guarantee that the classical sampler, from \cref{Easiness_sampling}, samples from a distribution that is inverse polynomially close, in total variation distance, to the actual distribution. However, the satisfaction of \cref{Definiton: anticoncentration} is a \emph{necessary condition} for present proof techniques to go through, in the following sense: if an ensemble were to exhibit lack of anticoncentration, according to \cref{Definiton: concentration}, then the quantity
\begin{equation}
2^n \cdot  \underset{\mathcal{B}}{\mathbb{E}}\left[\sum_{x \in \{0, 1\}^n} \tilde{p}_x^2 \right]
\end{equation}
diverges with $n$, and the chain of inequalities in \cref{tvd} does not hold, \emph{no matter where we choose the cutoff $l$}. This does not mean that there couldn't be better analysis techniques which do not need anticoncentration: however, to the best of our knowledge, no such technique is known.

\section{Easiness of computing expectation values}
\label{Expectation_Values}
In this section, we will, very broadly, sketch the argument of \cite{shao2023simulating} about computing the expectation value of certain observables, for random quantum circuits with the depolarizing noise. The ideas are extremely similar to that of \cref{Easiness_sampling}, but the argument does not require anticoncentration. To start with, note that for the depolarizing channel, 
\begin{equation}
\label{impeq}
\tilde{f}(\tilde{\mathcal{C}}, x, s) = (1 - q)^{|s|} f({\mathcal{C}}, x, s),
\end{equation}
where $f$ is a Pauli path of the noiseless circuit $\mathcal{C}$, $\tilde{f}$ is a Pauli path of the noisy circuit $\tilde{C}$, $x \in \{0, 1\}^n$, and $s \in \{0, 1\}^{2n}$ is the number of non--identity Pauli terms in the path. In \cite{shao2023simulating}, it is shown how the expectation value of any observable can be written as
\begin{equation}
\sum_{s} \tilde{f}(\tilde{\mathcal{C}}, x, s).
\end{equation}
Now, using \cref{impeq}, any path such that $s = \omega(\log n)$ is at least inverse--superpolynomially suppressed, and only polynomially many paths --- those for which $s = \mathcal{O}(\log n)$ --- have at least $\frac{1}{\text{poly}(n)}$ weight. Then, just by classically estimating those paths, we get an estimate of the expectation value that is $\approx \frac{1}{\text{poly}(n)}$ close to the original expectation value. Note that each Pauli path can be estimated in classical polynomial time. Note that this technique does not need anticoncentration.

\begin{remark}
Note that \cref{impeq} is a special property of the depolarizing channel: it is not evident whether the strategy described works for other noise channels.
\end{remark} 

\section{Anticoncentration and closeness to the uniform distribution}
\label{closeness to uniform}
In this section, we show the relation between anticoncentration, according to \cref{Definiton: anticoncentration}, and closeness to the uniform distribution. Let $\mathcal{B}$ be an ensemble of noisy random quantum circuits, and let $\mathcal{C}$ be the random variable corresponding to each circuit. Let
\begin{equation}
\rho = \mathcal{C}(\ket{0^n}\bra{0^n}).
\end{equation}
For $x \in \{0, 1\}^n$, let
\begin{equation}
\underset{\mathcal{B}}{\mathbb{E}}\left[p_x\right] = \underset{\mathcal{B}}{\mathbb{E}}\left[\Tr(|x\rangle \langle x| ~\rho)\right].
\end{equation}
\noindent Now, note that
\begin{align}
\underset{\mathcal{B}}{\mathbb{E}}\left[\sum_{x \in \{0, 1\}^n} \bigg| p_x - \frac{1}{2^n} \bigg|^2 \right]
&= \underset{\mathcal{B}}{\mathbb{E}}\left[\sum_{x \in \{0, 1\}^n} p_x^2 \right] - 2 \cdot \frac{1}{2^n} + \frac{1}{2^n}
\\&= \underset{\mathcal{B}}{\mathbb{E}}\left[\sum_{x \in \{0, 1\}^n} p_x^2 \right] - \frac{1}{2^n}.
\end{align}
\noindent Hence, if
\begin{equation}~\label{eq:flatness_condition}
\underset{\mathcal{B}}{\mathbb{E}}\left[\sum_{x \in \{0, 1\}^n} \bigg| p_x - \frac{1}{2^n} \bigg|^2 \right] = \mathcal{O}(2^{-n}),
\end{equation}
then,
\begin{equation}~\label{eq:anticoncentration_condition}
\underset{\mathcal{B}}{\mathbb{E}}\left[\sum_{x \in \{0, 1\}^n} p_x^2 \right] = \mathcal{O}(2^{-n}),
\end{equation}
which means \cref{Definiton: anticoncentration} is satisfied. The vice versa also holds.

\begin{remark}
Let $\Delta_n$ be the $n$-qubit completely dephasing channel with respect to the computational basis; that is, for any $n$-qubit state $\sigma$, 
\begin{equation}
    \Delta_n(\sigma) \coloneqq \sum_{x\in\{0,1\}^n} \bra{x}\sigma\ket{x} \ketbra{x}{x}. 
\end{equation}
\noindent Now, let us assume the following equation holds:
\begin{equation}~\label{eq:1norm_condition_original2}
    \underset{\mathcal{B}}{\mathbb{E}} \left\|\rho -  \frac{I_{2^n}}{2^n}\right\|_1^2 = \mathcal{O}(2^{-n}).
\end{equation}
\noindent Then, from monotonicity of trace distance, we get that
\begin{equation}~\label{eq:1norm_condition2}
    \underset{\mathcal{B}}{\mathbb{E}} \left\|\Delta_n(\rho) -  \frac{I_{2^n}}{2^n}\right\|_1^2 = \mathcal{O}(2^{-n}). 
\end{equation}
We used the fact that $\Delta_n(I_{2^n}) = I_{2^n}$. Then, since the 2-norm is smaller than the trace norm, from \cref{eq:1norm_condition2} we get 

\begin{equation}~\label{eq:2norm_condition}
    \underset{\mathcal{B}}{\mathbb{E}} \left\|\Delta_n(\rho) -  \frac{I_{2^n}}{2^n}\right\|_2^2 = \mathcal{O}(2^{-n}). 
\end{equation}
\noindent Hence, combining this observation with the calculations in \cref{closeness to uniform}, \cref{eq:2norm_condition} implies \cref{eq:anticoncentration_condition}.
\end{remark}
\end{document}